\theoremstyle{thmstyleone}%
\newtheorem{theo}{Theorem}%  meant for continuous numbers
\newtheorem{prop}[theo]{Proposition}% 
\newtheorem{lemm}[theo]{Lemma}
\newtheorem{coro}[theo]{Corollary}
\theoremstyle{thmstyletwo}%
\newtheorem{example}{Example}%
\newtheorem{remark}{Remark}%
\theoremstyle{thmstylethree}%
\newtheorem{definition}{Definition}%
\newcommand\cA{{\cal A}}
\newcommand\cI{{\cal I}}
\newcommand\cF{{\cal F}}
\newcommand\cL{{\cal L}}
\newcommand\cB{{\cal B}}
\newcommand\cH{{\cal H}}
\newcommand\cO{{\cal O}}
\newcommand\cP{{\cal P}}
\newcommand\cS{{\cal S}}
\newcommand\cV{{\cal V}}
\newcommand\cT{{\cal T}}
\newcommand\Reel{{\bf R}}
\newcommand\N{\mathbb{N}}
\newcommand\ep{{\varepsilon}}
\DeclareMathOperator{\essinf}{ess\;inf}
\DeclareMathOperator{\esssup}{ess\;sup}
\newcommand\bP{{\bf {\rm P}}}
\newcommand\bea{\begin{eqnarray}}
\newcommand\eea{\end{eqnarray}}
\newcommand\bean{\begin{eqnarray*}}
\newcommand\eean{\end{eqnarray*}}
\begin{document}

\title[No-arbitrage conditions and pricing   from discrete-time to  continuous-time strategies]{No-arbitrage conditions and pricing   from discrete-time to  continuous-time strategies}

%%=============================================================%%
%% Prefix	-> \pfx{Dr}
%% GivenName	-> \fnm{Joergen W.}
%% Particle	-> \spfx{van der} -> surname prefix
%% FamilyName	-> \sur{Ploeg}
%% Suffix	-> \sfx{IV}
%% NatureName	-> \tanm{Poet Laureate} -> Title after name
%% Degrees	-> \dgr{MSc, PhD}
%% \author*[1,2]{\pfx{Dr} \fnm{Joergen W.} \spfx{van der} \sur{Ploeg} \sfx{IV} \tanm{Poet Laureate} 
%%                 \dgr{MSc, PhD}}\email{iauthor@gmail.com}
%%=============================================================%% 

\author[1,3]{\fnm{Dorsaf} \sur{CHERIF}}\email{dorsaf-cherif@hotmail.fr}

\author[2,3]{\fnm{Emmanuel} \sur{LEPINETTE}}\email{emmanuel.lepinette@ceremade.dauphine.fr}
\equalcont{These authors contributed equally to this work.}

\affil[1]{\orgdiv{Latao}, \orgname{Faculty of Sciences of Tunis, El Manar University}, \orgaddress{\street{94-Rommana}, \city{Tunis}, \postcode{1068},  \country{Tunisia}}}

\affil[2]{\orgdiv{Ceremade, CNRS}, \orgname{Paris Dauphine University, PSL}, \orgaddress{\street{Place du Mar\'echal De Lattre De Tassigny}, \city{Paris}, \postcode{Cedex 16}, \country{France}}}

\affil[3]{\orgdiv{Gosaef}, \orgname{Faculty of Sciences of Tunis, El Manar University}, \orgaddress{\street{94-Rommana}, \city{Tunis}, \postcode{1068}, \country{Tunisia}}}
%%==================================%%
%% sample for unstructured abstract %%
%%==================================%%

\abstract{In this paper, a general framework is developed for  continuous-time financial market models defined from  simple strategies through conditional topologies that avoid stochastic calculus and do not necessitate  semimartingale models. We then compare the usual no-arbitrage conditions of the literature, e.g. the usual no-arbitrage conditions NFL, NFLVR and NUPBR  and the recent AIP condition. With  appropriate pseudo-distance topologies, we show that they hold in continuous time if and only if they hold in discrete time. Moreover, the super-hedging prices in continuous time coincide with the discrete-time super-hedging prices, even without any no-arbitrage condition. 
}

\keywords{ No-arbitrage condition, Super-hedging price, AIP condition, NFL condition, Discrete-time financial model, Continuous-time financial market model.}

%%\pacs[JEL Classification]{D8, H51}

%%\pacs[MSC Classification]{35A01, 65L10, 65L12, 65L20, 65L70}

\maketitle

\section{Introduction}\label{sec1}
Absence of arbitrage opportunities is an usual condition imposed on financial market models to deduce a characterization of super-hedging prices. In continuous-time, Delbaen and Schachermayer \cite{DScha1} have introduced the famous no-arbitrage condition NFLVR  as equivalent to the existence of a local martingale measure, see also the well known NFL condition by Kreps \cite{Kreps} at the origin of the arbitrage theory in continuous time. More recently, the weaker NUPBR no-arbitrage condition \cite{KK} has been introduced as the minimal one necessary to solve utility maximization problems.  \smallskip

However, models where the price processes are not semi-martingales are also considered in the literature, e.g. fractional Brownian motion, see \cite{Pakk} and \cite{Lo} for empirical studies. Moreover, in the papers  \cite{Rogers} and \cite{Sot}, it is shown that arbitrage opportunities exist in 
fractional Brownian motion models. Also Guasoni considers \cite{Gua1}  non-semimartingale models with transaction costs. In the paper \cite{CL}, the  no-arbitrage condition AIP ensures the finiteness of the super-hedging prices in non-semimartingale frictionless models and a dynamic programming principle allows to compute them in discrete time. \smallskip

Absence of arbitrage opportunities in non-semimartingale models has also been considered by restricting the class of admissible trading strategies as initiated by \cite{Ch}, \cite{BSV}, \cite{BS}, \cite{Sayit} among others. Precisely, only simple strategies with a minimal deterministic time between two trades are allowed. It is then possible to show that fractional Brownian motions, and more general processes, are arbitrage free with respect to this so-called Cheridito's class of simple strategies, see  \cite{JPS}. In other words, this specific restricted class of simple strategies is adapted to the  non-semimartingale price processes of consideration in such a way that a no-arbitrage condition holds.\smallskip

Our approach is different: We fix an a priori given class of strategies that are interpreted as simple discrete-time strategies (discrete-time or simple strategies in short) and the continuous-time strategies are defined as convergent sequences of simple strategies. Here, convergence should be understood with respect to a topology induced by a (conditional) pseudo-distance we introduce in such a way that, by definition, a terminal continuous-time portfolio value is attainable from a terminal discrete-time portfolio process, up to an arbitrarily small error. Precisely, if $\overline  v_{T}$ is a  terminal  continuous-time portfolio value, then for every  $\ep>0$, there exists a terminal discrete-time portfolio value $v_{T}$ such that $v_{T}\ge \overline  v_{T} - \ep $.\smallskip

We aim to show that the usual no-arbitrage conditions NFL, NFLVR and NUPBR in discrete-time are respectively equivalent to their analogous conditions in continuous time, with an appropriate choice of a pseudo-distance topology which is financially meaning. The same holds for the weaker AIP condition which means that non negative payoffs admit non negative prices, or equivalently, the infimum super-hedging price of a non negative price cannot be $-\infty$, see \cite{CL}. Moreover, we then show that the infimum super-hedging prices in discrete time and in continuous time coincide, without supposing any no-arbitrage condition. Of course, such prices may be numerically estimated only if AIP holds, which is the weaker no-arbitrage condition of consideration.\smallskip

In the following, we first present the general framework that generates the continuous-time portfolios from the discrete-time ones without any semi-martingale setting. Then, we successively compare in discrete time and in continuous time the NFL, NFLVR, AIP and NUPBR no-arbitrage conditions. Finally, we compare the super-hedging prices in discrete time and in continuous time. The last section exposes the theory we have developed on pseudo-distance topologies. In the appendix, some auxiliary results are collected.

\section{Model}\label{sec2}

Let $(\Omega,(\cF_t)_{t\in [0,T]},\bP)$ be a complete stochastic basis which is right-continuous. We consider a financial market model defined by $d$ risky assets described by a continuous-time right-continuous price process $S_t=(S_t^1,...,S_t^d) \in \Reel^d_+$, $t\in [0,T]$, adapted to the filtration $(\cF_t)_{t\in [0,T]}$. Moreover, we suppose that there exists a bond whose price is $S^0=1$, without loss of generality. The quantities invested in a portfolio are described, as usual, by a real-valued adapted process $\theta^0$ that describes the quantity invested in the bond and an adapted process  $\theta=( \theta^1 ,..., \theta^d)\in \Reel^d$,  called  strategy, that describes the quantities invested in the risky assets. Without transaction costs, the liquation value of the strategy $\theta$ is  given by the portfolio process $V=V^\theta=\theta S $ where the product needs to be understood as the Euclidean inner product on $\Reel^d$. Recall that, in discrete-time $t=0,1,\cdots,T$, $V=V^\theta$ is said self-financing if  $\theta_{t} S_{t+1}= \theta_{t+1}S_{t+1}$, i.e. $\Delta V_{t+1}:=V_{t+1}-V_t=\theta_t \Delta S_{t+1}$. Then, the terminal value of a self-financing portfolio process starting from the zero initial capital is of the form  $V_{t,T}= \sum \limits_{u=t}^T  \theta_{u-1} \Delta S_u$.

In the following, $T>0$ is the horizon time and we consider for any time $t\le T$,   a set $\cV_{t,T}$  of $T$- terminal discrete-time portfolios, starting from the  zero initial capital at time $ t$. An element of $\cV_{t,T}$ may be seen as a portfolio value generated by a simple strategy, as in \cite{Ch} or generated by specific discrete-time strategies more generally. \smallskip

A first  typical example is when the trades are only executed at arbitrary deterministic times:\bea \label{Elmt1}\cV_{t,T}^{{\rm det}}= \left\{\sum \limits_{i=1}^n  \theta_{t_{i-1}} \Delta S_{t_i},\, t=t_0<\cdots<t_n=T, \,\theta_{t_i}\in L^0( \mathbb{R}^d,\cF_{t_i}),\,n\ge 1 \right\}. \quad\eea
A second example is when the  portfolios are revised at some stopping times, e.g. when some market conditions are satisfied. Let us denote by $\cT_{t,T}$ the set of all $[t,T]$-valued stopping times. We denote by $\hat{\cT}_{t,T}^n$, $n\ge 1$, the set of all increasing sequences of stopping times $(\tau_{i})_{i=0}^n$ such that $t=\tau_0<\cdots<\tau_n=T$. We then consider the set:
\bea \label{Elmt2}\cV_{t,T}^{\rm rand}= \left\{\sum \limits_{i=1}^n  \theta_{\tau_{i-1}} \Delta S_{\tau_i},\, (\tau_{i})_{i=0}^n\in \hat{\cT}_{t,T}^n,\, \theta_{\tau_i}\in L^0( \Reel^d,\cF_{\tau_i}),\,n\ge 1 \right\}.\quad \eea

\begin{remark}\label{Operator}{\rm 
 In the common cases, the discrete-time portfolio processes $V_{t,T}\in \cV_{t,T}$ are explicitly characterized by a priori given "simple" strategies $\theta^{t,T}\in \cS_{t,T}$, i.e. $V_{t,T}=\cI(\theta^{t,T})$ for some operator $\cI$. In that case, we also denote by $V_{t,u}$ the $u$-time value of $V_{t,T}$, i.e.  $V_{t,u}=\cI(\theta^{t,T,u})$, $u\in [t,T]$, where $\theta^{t,T,u}$ is the restriction of $\theta^{t,T}$ to the interval $[t,u]$ so that  $\theta^{t,T,u}_v=0$ if $v>u$.   This is the case in the two examples above and we write $\cV_{t,T}=\cI(\cS_{t,T})$. In continuous-time, this is usual to require the strategies to be admissible. In the example given by (\ref{Elmt2}), we have
 \bea \label{IntegOperator} \cI_u(\theta):=\cI(\theta^{t,T,u})=\sum \limits_{i=1}^n  \theta_{\tau_{i-1}} \left( S_{\tau_i\wedge u}-S_{\tau_{i-1}\wedge u}\right),\quad u\in [t,T].\eea
 We say that $\theta$ is admissible if there exists $m\in \Reel$ such that $\cI_u(\theta)\ge m$ a.s. for all $u\in [t,T]$. In that case, the corresponding set of terminal portfolio processes is denoted by $^{a}\cV_{t,T}$ instead of $\cV_{t,T}$. }$\Diamond$
 
  \end{remark}\smallskip

In the following, we consider $L^0(\Reel^d,\cF_T)$, $d\ge 1$,  the set of all equivalence classes of random variables defined on $(\Omega,\cF_T,\bP)$ with values in $\Reel^d$. The following definitions allow to define continuous-time portfolio processes (resp. strategies) from discrete-time portfolio processes (resp. simple strategies).

\begin{definition} Let $t\le T$ and let $\cO_t$ be a topology on  $L^0(\Reel,\cF_T)$. We say that a sequence $(V_{t,T}^n )_{n\ge 1}$ of $\cV_{t,T}$ is $\cO_t$-integrable if $(V_{t,T}^n )_{n\ge 1}$ is convergent with respect to $\cO_t$.
\end{definition}

The definition above is designed for an arbitrary topology $\cO_t$. It will be used for the particular topology $\cO_t$ as defined in Section \ref{AIP-continuous} below.

\begin{definition} Let $t\le T$ and let $\cO_t$ be a topology on $L^0(\R,\cF_T)$. We denote by $\cV_{t,T}^c=\cV_{t,T}^c(\cO_t)$ the family of all limits for the topology  $\cO_t$ of  $\cO_t$-integrable sequences $(V_{t,T}^n )_{n\ge 1}$ of $\cV_{t,T}$. An element of $\cV_{t,T}^c$ is called a terminal continuous-time portfolio.
 \end{definition}

\begin{definition} Let $t\le T$ and let $\cO_t$ be a topology on  $L^0(\R,\cF_T)$. Suppose that $\cV_{t,T}=\cI(\cS_{t,T})$ for some operator $\cI$ and simple strategies $\cS_{t,T}$.     We say that a sequence $(\theta^n )_{n\ge 1}$ of $\cS_{t,T}$ is $\cO_t$-integrable if $(V_{t,u}^n=\cI_u (\theta^n))_{n\ge 1}$ is $\cO_t$-integrable for all $u\le T$.
\end{definition}

\begin{definition} Let $t\le T$ and let $\cO_t$ be a topology on  $L^0(\R,\cF_T)$. Suppose that $\cV_{t,T}=\cI(\cS_{t,T})$ for some operator $\cI$ and simple strategies $\cS_{t,T}$. A continuous-time strategy $\theta$ on $[t,T]$ is an $\cO_t$-integrable sequence $\theta=(\theta^n)_{n\ge 1}$ of simple strategies $\theta^n\in \cS_{t,T}$. In that case, for any $u\in [t,T]$, we define $V_{t,T}^c(u)=\cI_u(\theta)$ as a limit in $\cO_t$  of the convergent sequence $(\cI_u(\theta^n))_{n\ge 1}$. We then have $V_{t,T}^c=V_{t,T}^c(T)\in \cV_{t,T}^c=\cV_{t,T}^c(\cO_t)$ by definition.  \end{definition}

The aim of the paper is to understand whether a no-arbitrage condition imposed on the set of all discrete-time portfolio processes (or simple strategies)  at any time $t$ also holds on the set of all continuous-time portfolio processes (resp. strategies). Clearly, that should depend on the topologies $(\cO_t)_{t\in [0,T]}$. Also, it is interesting to compare the super-hedging prices obtained by the discrete-time portfolio processes from the continuous-time ones. \smallskip

In the following, we shall consider at any time $t\le T$ a  topology  $\cO_t$ that satisfies the Fatou property defined as follows:

\begin{definition} 
A topology $\cO_t$ on $L^0(\R,\cF_T)$ satisfies the Fatou property if for any sequence $(X^n)_{n\ge 1}$ of  $L^0(\R,\cF_T)$ that converges to $X$ in $\cO_t$, we have $X\le \liminf_{n}X_{k_n}$ for some subsequence $(k_n)_{n\ge 1}$.

\end{definition}

Note that the Fatou property holds as soon as $X= \liminf_{n}X_{k_n}$ for some subsequence $(k_n)_{n\ge 1}$. This is the case for the usual topologies, in particular the topologies defined with respect to the convergence in probability or the $L^p$ norms $\|X\|_p=(E\vert X \vert ^p)^{1/p}$, $p\in [1,\infty]$. We shall see that this is also the case for the topology of Section \ref{AIP-continuous}. This is a non Hausdorff topology which satisfies the following properties:

\begin{definition} 
A topology $\cO_t$ on $L^0(\R,\cF_T)$ is said $\cF_t$-positively homogeneous if for any sequence $(X^n)_{n\ge 1}$ of  $L^0(\R,\cF_T)$ that converges to $X$ in $\cO_t$, and for all $\alpha_t\in L^0(\Reel^+,\cF_t)$, $(\alpha_t X^n)_{n\ge 1}$ converges to $\alpha_t X$ in $\cO_t$.
\end{definition}

\begin{definition} 
A topology $\cO_t$ on $L^0(\R,\cF_T)$ is said $\cF_t$-lower bond preserving if, for any $X\in L^0(\R,\cF_T)$ such that $X\ge m_t$ for some $m_t\in L^0(\R,\cF_t)$ and    for any sequence $(X^n)_{n\ge 1}$ of  $L^0(\R,\cF_T)$ that converges to $X$ in $\cO_t$, there exists a subsequence $(X^{k_n})_{n\ge 1}$   such that $X^{k_n}\ge \mu_t$ for some $\mu_t\in L^0(\R,\cF_t)$.
\end{definition}

\section{The NFL  and the NFLVR conditions}\label{NFL-VR}
Let us define $\cA_{t,T}:=\cV_{t,T}-L^0(\Reel_+,\cF_T)$ (resp. $\cA_{t,T}^c:=\cV_{t,T}^c-L^0(\Reel_+,\cF_T)$)  the set of all attainable claims from discrete-time (resp. continuous-time) portfolio processes. We denote by $L^{\infty}(\R,\cF_T)$ the set of all equivalence classes of bounded random variables $X$ such that $\| X\|_{\infty}<\infty$.  Consider the corresponding sets  $\cA_{t,T}^{\infty}:= \cA_{t,T}\cap L^{\infty}(\R,\cF_T)$ and $\cA_{t,T}^{c,\infty}:= \cA_{t,T}^c\cap L^{\infty}(\R,\cF_T)$  of  bounded attainable claims. Then, we denote by $\overline{\cA}_{t,T}^{w,\infty}$ and  $\overline{\cA}_{t,T}^{c,w,\infty}$ the  weak closures of $\cA_{t,T}^{\infty}$ and $\cA_{t,T}^{c,\infty}$ respectively with respect to the topology $\sigma(L^{\infty},L^1)$.   
\subsection{The NFL condition}

The NFL condition is very well known in mathematical finance. It means that it is not possible to asymptotically  get (in limit) a strictly positive profit when starting from a zero initial capital and following a bounded self-financing portfolio process. Here, asymptotically means that 
we complete the set of bounded self-financing portfolio processes by their limits in $L^\infty$ w.r.t. $\sigma(L^{\infty},L^1)$.

\begin{definition} Let $(\cO_t)_{t\le T}$ be a collection of topologies on $L^0(\R,\cF_T)$ and $\cV_{t,T}^c=\cV_{t,T}^c(\cO_t)$, $t\le T$.  The No Free Lunch condition (NFL, \cite{Kreps}) is defined at time $t$ by   $\overline{\cA}_{t,T}^{w,\infty}\cap L^{\infty}(\Reel^+,\cF_t)=\{0\}$ (resp. $\overline{\cA}_{t,T}^{c,w,\infty}\cap L^{\infty}(\Reel^+,\cF_t)=\{0\}$) for the model defined by the  discrete-time (resp. continuous-time) portfolio processes. We say that the NFL condition holds if it holds at any time $t\le T$.
\end{definition}

In the following, if $\cO$ and $\cO'$ are two topologies, we say that $\cO\subseteq\cO'$ if any open set of $\cO$ is an open set of $\cO'$. We  consider   a collection $(\cO_t)_{t\le T}$ of topologies on $L^0(\R,\cF_T)$ so that $\cV_{t,T}^c=\cV_{t,T}^c(\cO_t)$, $t\le T$.
\begin{lemm} \label{NFL-NFL0} Suppose that $\cO_0\subseteq\cO_t$ and $\cV_{t,T}\subseteq \cV_{0,T}$  for all $t\in [0,T]$. Then, the NFL condition holds for the continuous-time (resp.  discrete-time) portfolio processes  if and only if NFL holds at time $t=0$.
\end{lemm}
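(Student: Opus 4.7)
The plan is to reduce the ``if'' direction to a monotonicity statement about the cones $\overline{\cA}_{t,T}^{w,\infty}$ and $\overline{\cA}_{t,T}^{c,w,\infty}$: under the two hypotheses of the lemma, both shrink as $t$ moves away from $0$, so NFL at $t=0$ is the strongest condition and implies all the others. The ``only if'' direction is immediate, by specialising the universal quantifier ``for all $t\le T$'' at $t=0$.

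For the discrete-time ``if'' direction, the hypothesis $\cV_{t,T}\subseteq \cV_{0,T}$ yields
$\cA_{t,T}=\cV_{t,T}-L^0(\Reel_+,\cF_T)\subseteq \cV_{0,T}-L^0(\Reel_+,\cF_T)=\cA_{0,T}$,
hence $\cA_{t,T}^{\infty}\subseteq \cA_{0,T}^{\infty}$. Since the closure operator in $\sigma(L^\infty,L^1)$ is monotone with respect to inclusion, this passes to $\overline{\cA}_{t,T}^{w,\infty}\subseteq \overline{\cA}_{0,T}^{w,\infty}$. Intersecting both sides with $L^{\infty}(\Reel^+,\cF_t)$ and invoking NFL at $t=0$ on the right-hand side then forces the left-hand intersection to be $\{0\}$, which is precisely NFL at time $t$.

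For the continuous-time case, the extra ingredient is the analogous inclusion $\cV_{t,T}^c(\cO_t)\subseteq \cV_{0,T}^c(\cO_0)$. I would obtain it by picking $V\in \cV_{t,T}^c(\cO_t)$ together with an $\cO_t$-convergent witnessing sequence $(V^n)_{n\ge 1}\subseteq \cV_{t,T}$ whose $\cO_t$-limit is $V$, and observing that since $\cO_0\subseteq \cO_t$, every $\cO_0$-neighbourhood of $V$ is an $\cO_t$-neighbourhood of $V$, hence eventually contains $V^n$; thus $V^n\to V$ also in $\cO_0$. Combined with $(V^n)\subseteq \cV_{t,T}\subseteq \cV_{0,T}$, this places $V$ in $\cV_{0,T}^c(\cO_0)$. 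From there, the rest of the argument repeats the discrete-time pattern: $\cA_{t,T}^{c,\infty}\subseteq \cA_{0,T}^{c,\infty}$, weak closure monotonicity, intersection with the positive cone, and application of NFL at $t=0$.

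The delicate step on which I would focus is the transfer of $\cO_t$-convergence to $\cO_0$-convergence, since in the applications the topologies $\cO_t$ are non-Hausdorff: one cannot speak of ``the'' limit but must argue purely at the level of the open-neighbourhood definition of a limit point, so as to certify that the same $V$ is simultaneously a limit for both topologies. Beyond this the proof is pure bookkeeping and uses neither the Fatou property nor the $\cF_t$-positive homogeneity nor the lower-bound preservation property of the $\cO_t$'s, only the two monotonicity hypotheses $\cO_0\subseteq \cO_t$ and $\cV_{t,T}\subseteq \cV_{0,T}$.
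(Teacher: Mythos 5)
Your proposal is correct and follows essentially the same route as the paper, which simply records the inclusions $\cV_{t,T}^c\subseteq \cV_{0,T}^c$ and hence $\overline{\cA}_{t,T}^{c,w,\infty}\subseteq \overline{\cA}_{0,T}^{c,w,\infty}$ (and their discrete-time analogues) and concludes; your explicit verification that $\cO_t$-convergence passes to $\cO_0$-convergence of the same limit point is exactly the step the paper leaves implicit. The only caveat, shared with the paper's own one-line argument, is the final intersection step: since the definition intersects with $L^{\infty}(\Reel^+,\cF_t)$ rather than $L^{\infty}(\Reel^+,\cF_T)$, one is implicitly using that NFL at $t=0$ controls non-negative $\cF_t$-measurable limits as well, not only $\cF_0$-measurable ones.
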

\begin{proof} By the assumptions, we deduce that $\cV_{t,T}^c\subseteq \cV_{0,T}^c$ for all $t\in [0,T]$. We deduce that $\overline{\cA}_{t,T}^{c,w,\infty}\subseteq \overline{\cA}_{0,T}^{c,w,\infty}$  for all $t\in [0,T]$.  The conclusion follows. \end{proof}

\begin{prop} \label{propNFLt} Suppose that the topology $\cO_t$, $t\le T$, satisfies the Fatou property, is $\cF_t$-positively homogeneous and is $\cF_t$-lower bond preserving. Assume that $\cV_{t,T}$ is a  $\cF_t$ positive cone, i.e. $\cV_{t,T}$ is convex and $\alpha_t\cV_{t,T}\subseteq \cV_{t,T}$ for all $\alpha_t\in L^0(\Reel^+,\cF_t)$.  Then, with $\cV_{t,T}^c=\cV_{t,T}^c(\cO_t)$,  the following statement are equivalent:

\begin{itemize}
\item [1.)] NFL holds at time $t$ for the  model defined by the discrete-time portfolio processes.

\item [2.)] There exists $Q_t\sim P$ such that $E_{Q_t}(V)\le 0$ for all $V\in \cV_{t,T}$ such that $V$ is bounded from below by a constant.

\item [3.)] NFL holds at time $t$ for the  model defined by the continuous-time portfolio processes.

\item [4.)] There exists $Q_t\sim P$ such that $E_{Q_t}(V)\le 0$ for all $V\in \cV_{t,T}^c$ such that $V$ is bounded from below by a constant.
\end{itemize}
\end{prop}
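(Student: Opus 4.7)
The plan is to prove the four equivalences via three steps: (1)$\Leftrightarrow$(2) and (3)$\Leftrightarrow$(4) from a Kreps--Yan type separation; (4)$\Rightarrow$(2) trivially from $\cV_{t,T}\subseteq\cV_{t,T}^c$; and (2)$\Rightarrow$(4) as the substantive step. Taken together, these yield in particular the financially meaningful equivalence (1)$\Leftrightarrow$(3).

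For (1)$\Leftrightarrow$(2) and identically (3)$\Leftrightarrow$(4), I would apply a Kreps--Yan type separation to the weak-$\ast$ closed convex cones $\overline{\cA}_{t,T}^{w,\infty}$ (resp.\ $\overline{\cA}_{t,T}^{c,w,\infty}$) of $L^\infty(\cF_T)$: each contains $-L^\infty_+(\cF_T)$, and NFL at time $t$ asserts its intersection with $L^\infty(\R^+,\cF_t)$ is trivial. A standard Hahn--Banach plus exhaustion argument then delivers a probability $Q_t\sim P$ with $E_{Q_t}(X)\le 0$ on the whole cone, hence on every $V\in\cV_{t,T}$ (resp.\ $\cV_{t,T}^c$) bounded below by a constant. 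Conversely, weak-$\ast$ continuity of $X\mapsto E_{Q_t}(X)$ on $L^\infty$ propagates the inequality to the closure, forcing its intersection with $L^\infty(\R^+,\cF_t)$ to be trivial. The convex-cone structure of $\cV_{t,T}^c$ needed in the second equivalence is inherited from that of $\cV_{t,T}$ via $\cF_t$-positive homogeneity of $\cO_t$ together with continuity of addition of the pseudo-distance topology introduced later.

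The crucial step (2)$\Rightarrow$(4) proceeds as follows. Fix $V^c\in\cV_{t,T}^c$ with $V^c\ge m$ for some constant $m$, and a sequence $V^n\in\cV_{t,T}$ with $V^n\to V^c$ in $\cO_t$. By the $\cF_t$-lower bond preserving property, after extracting a subsequence there exists $\mu_t\in L^0(\R,\cF_t)$ with $V^n\ge\mu_t$ for every $n$. For $j\ge 1$ set $\alpha_t^{(j)}:=\mathbf{1}_{\{\mu_t\ge -j\}}\in L^0(\R^+,\cF_t)$. The $\cF_t$-positive cone assumption gives $\alpha_t^{(j)}V^n\in\cV_{t,T}$ with $\alpha_t^{(j)}V^n\ge -j$, and $\cF_t$-positive homogeneity of $\cO_t$ gives $\alpha_t^{(j)}V^n\to\alpha_t^{(j)}V^c$ in $\cO_t$. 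Hypothesis (2) yields $E_{Q_t}(\alpha_t^{(j)}V^n)\le 0$; combining the Fatou property of $\cO_t$ with the classical Fatou lemma applied to the non-negative $\alpha_t^{(j)}V^n+j$ then gives $E_{Q_t}(\alpha_t^{(j)}V^c)\le 0$. Letting $j\to\infty$, $\alpha_t^{(j)}V^c\to V^c$ a.s.\ since $\mu_t$ is a.s.\ finite; splitting $V^c$ into positive and negative parts and using $V^c\ge m$ (so $(V^c)_-\le |m|$ and $E_{Q_t}((V^c)_+)\le |m|$ follows from monotone convergence applied to the inequality just obtained), a further monotone convergence on each part yields $E_{Q_t}(V^c)\le 0$.

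The main obstacle lies precisely in this last step: lower-bond preservation only supplies an $\cF_t$-measurable (not deterministic) lower bound $\mu_t$, so the cutoff by $\mathbf{1}_{\{\mu_t\ge -j\}}$ is needed to bring the approximants within the scope of hypothesis (2). This cutoff must simultaneously stay inside $\cV_{t,T}$ (using the $\cF_t$-positive cone property), preserve the topological convergence (using $\cF_t$-positive homogeneity of $\cO_t$), and survive passage to expectation (using the Fatou property of $\cO_t$). All three technical assumptions on $\cO_t$ and the positive-cone assumption on $\cV_{t,T}$ are thus unavoidable and enter exactly at this single point.
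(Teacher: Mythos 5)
Your proof is correct, and its overall architecture (Kreps--Yan for $(1)\Leftrightarrow(2)$ and $(3)\Leftrightarrow(4)$, the trivial inclusion $\cV_{t,T}\subseteq\cV_{t,T}^c$ for $(4)\Rightarrow(2)$, and $(2)\Rightarrow(4)$ as the substantive step) matches the paper's. Where you genuinely diverge is in the execution of $(2)\Rightarrow(4)$. The paper works \emph{conditionally}: it first upgrades hypothesis 2.) to $E_{Q_t}(V\,\vert\,\cF_t)\le 0$ for all $V\in\cV_{t,T}$ bounded below by a constant (using the $\cF_t$-positive cone property via indicators of $\cF_t$-sets), then extends "by rescaling" to $V$ bounded below by an $\cF_t$-measurable random variable, and finally applies a conditional Fatou lemma to the subsequence supplied by the Fatou and lower-bound-preserving properties. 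You instead stay \emph{unconditional}: you truncate the approximants with $\alpha_t^{(j)}=\mathbf{1}_{\{\mu_t\ge -j\}}$ so that each $\alpha_t^{(j)}V^n$ has a genuine constant lower bound $-j$ and hypothesis 2.) applies verbatim, pass to the limit in $n$ by Fatou, and then remove the truncation by a monotone/dominated convergence argument in $j$ (which works because $V^c\ge m$ controls the negative part uniformly). Your route avoids the paper's somewhat informal "rescaling" step and the machinery of generalized conditional expectations, at the cost of an extra limiting procedure; it uses exactly the same four hypotheses at exactly the same points, which you correctly identify. One shared soft spot: both you and the paper assert that $\overline{\cA}_{t,T}^{c,w,\infty}$ is a convex cone so that Kreps--Yan applies in continuous time; this requires that sums of $\cO_t$-convergent sequences converge to the sum of the limits, which does not follow from the three abstract properties assumed on $\cO_t$ alone (you honestly flag this by appealing to the additivity of the later pseudo-distance topology, which is how the paper's intended application resolves it as well).
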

\begin{proof} By the assumptions, $\overline{\cA}_{t,T}^{w,\infty}$ and $\overline{\cA}_{t,T}^{c,w,\infty}$ are positive cones. Therefore the equivalences between 1.) and 2.) and between 3.) and 4.) are immediate consequences of the Kreps-Yan theorem, see \cite[Theorem 2.1.4]{KS}. Indeed, if $E_{Q_t}(V)\le 0$ for all $V\in \cA_{t,T}^c\cap L^{\infty}(\R,\cF_T)$ (resp.  $\cA_{t,T}\cap L^{\infty}(\R,\cF_T)$, it suffices to apply the Fatou lemma  to the sequence $V^m=V1_{\{V\le m\}}\in L^{\infty}(\R,\cF_T)$, as $m\to \infty$, if   $V$ is bounded from below, to deduce  4.) (resp. 2.)). It is clear that 4.) implies 2.) since $\cV_{t,T}\subseteq \cV_{t,T}^c$.  It remains to show that 2.) implies 4.). We first observe that 2.) implies that $E_{Q_t}(V \vert \cF_t)\le 0$ for all bounded from below $V\in \cV_{t,T}$, since $\cV_{t,T}$ is a $\cF_t$ positive cone. We then deduce by rescaling that the inequality $E_{Q_t}(V \vert \cF_t)\le 0$  also holds if $V$ is bounded from below by an $\cF_t$-measurable random variable. Then, consider $V\in \cV_{t,T}^c$ such that $V\ge m$ a.s. for some $m\in \Reel$. By definition,  $V=\lim_n V^n$ in $\cO_t$, for some convergent sequence of elements $V^n\in \cV_{t,T}$. As $\cO_t$ satisfies the Fatou property, we may suppose w.l.o.g. that $V\le \liminf_n V^n$. Moreover, as $\cO_t$ is $\cF_t$- lower bond preserving, we may also suppose that $V^n\ge \mu_t$ a.s., for all $n\ge 1$, where $\mu_t\in L^0(\Reel,\cF_t)$. Then,  $E_{Q_t}(V \vert \cF_t)\le \lim_n E_{Q_t}(V^n\vert\cF_t)$ by the Fatou lemma. As $E_{Q_t}(V^n\vert\cF_t)\le 0$ by the remark above, the conclusion follows. \end{proof}

\begin{remark} The equivalent probability measure $Q_t\sim P$ in Statemement 2.) is generally interpreted as a risk-neutral probability measure, see \cite{DMW}. 
\end{remark}

\begin{definition}\label{LB}The price process is said locally bounded if there exists a sequence of increasing stopping times $(T^n)_{n\ge 1}$ and a  real-valued sequence $(M^n)_{n\ge 1}$ such that $\lim_{n\to \infty}T^n=+\infty$ and  the stopped processes $S^{T^n}$ are bounded by $M^n$. 
\end{definition}

Note that, if the jumps $\Delta S_t=S_t-S_{t-}$ are uniformly bounded by a constant $M\ge 0$, it suffices to consider $T^n=\inf\{t\ge T_{n-1}:~S_t\ge n\}$ so that  $S^{T^n}\le M+n$. 

\begin{coro} \label{NFL-LocalMM} Suppose that $\cO_0\subseteq\cO_t$ for all $t\le T$. Suppose that the topology $\cO_0$ satisfies the Fatou property, is $\cF_0$-positively homogeneous and is $\cF_0$- lower bond preserving. Assume that $\cV_{t,T}$ is given by (\ref{Elmt2}) for all $t\le T$ and $S$ is a locally bounded process. Then, if  NFL holds for the discrete-time (resp. continuous-time) portfolios,  there exists a local martingale measure for $S$. Moreover, if $\cV_{t,T}=^{a}\!\!\cV_{t,T}$, the existence of a local martingale measure for $S$ implies NFL for both discrete-time and  continuous-time portfolios .
\end{coro}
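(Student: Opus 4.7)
The plan is to reduce everything to the case $t=0$ via Lemma~\ref{NFL-NFL0}, and then apply Proposition~\ref{propNFLt} at $t=0$. That proposition is applicable because $\cV_{0,T}^{\rm rand}$ is convex and $\cF_0$-positively homogeneous; it translates NFL at $t=0$ in both the discrete- and continuous-time frameworks into the single condition that there exists $Q_0\sim\bP$ with $E_{Q_0}(V)\le 0$ for every $V\in\cV_{0,T}$ bounded from below. Both halves of the corollary then follow from this common characterization.

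For the first implication, I would fix the localizing sequence $(T^n)_{n\ge 1}$ from Definition~\ref{LB} (replacing it by $T^n\wedge T$ if needed so that $S^{T^n}$ is bounded by $M^n$ on $[0,T]$) and, for deterministic $0<s<t<T$, a coordinate $k\in\{1,\dots,d\}$ and any $A\in\cF_s$, construct the simple strategy with stopping times $\tau_0=0$, $\tau_1=s\wedge T^n$, $\tau_2=t\wedge T^n$, $\tau_3=T$ and quantities $\theta_{\tau_0}=\theta_{\tau_2}=0$, $\theta_{\tau_1}=\pm 1_{A\cap\{s<T^n\}}e_k$. The crucial observation is that its terminal value equals $\pm 1_A(S^{T^n}_t-S^{T^n}_s)$, which is bounded in absolute value by $2M^n$. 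Applying the expectation inequality from Proposition~\ref{propNFLt} in both signs yields $E_{Q_0}\bigl(1_A(S^{T^n}_t-S^{T^n}_s)\bigr)=0$ for every $A\in\cF_s$, so $S^{T^n}$ is a $Q_0$-martingale for each $n$ and $S$ is a $Q_0$-local martingale by definition.

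For the converse, assume $\cV_{t,T}={}^{a}\cV_{t,T}$ and let $Q\sim\bP$ be a local martingale measure for $S$. For any admissible simple strategy $\theta$, the process $(\cI_u(\theta))_{u\in[0,T]}$ from~(\ref{IntegOperator}) is a finite sum of terms $\theta_{\tau_{i-1}}(S_{\tau_i\wedge u}-S_{\tau_{i-1}\wedge u})$; localizing in space by $\{|\theta_{\tau_{i-1}}|\le N\}$ and in time by the reducing sequence of $S$ under $Q$, optional stopping shows that each summand, hence the whole sum, is a $Q$-local martingale. Admissibility makes this process bounded from below, so Fatou's lemma turns it into a $Q$-supermartingale, yielding $E_Q(V)\le\cI_0(\theta)=0$ for every $V=\cI_T(\theta)\in\cV_{0,T}$ bounded below. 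Proposition~\ref{propNFLt} then delivers NFL at $t=0$ in both frameworks, and Lemma~\ref{NFL-NFL0} lifts it to every $t\le T$.

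The main technical obstacle will be the strict-increase requirement on the stopping times in~(\ref{Elmt2}): the times $s\wedge T^n$ and $t\wedge T^n$ can coincide on $\{T^n\le s\}$. I would handle this by partitioning $\Omega$ along the $\cF_{T^n}$-measurable event $\{T^n\le s\}$, on which the factor $1_{A\cap\{s<T^n\}}$ vanishes so the corresponding summand contributes zero and the strategy can be collapsed to fewer stopping times, and its complement $\{T^n>s\}$, on which $0<s<t\wedge T^n\le T$ are already strictly ordered. Stitching the two pieces together produces a legitimate element of $\cV_{0,T}^{\rm rand}$ with the claimed terminal value.
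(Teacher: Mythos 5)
Your proposal is correct and follows essentially the same route as the paper: reduce to $t=0$ via Lemma~\ref{NFL-NFL0}, use the Kreps--Yan characterization of Proposition~\ref{propNFLt} with the test claims $\pm 1_A\left(S^{T^n}_{t}-S^{T^n}_{s}\right)$ to extract the local martingale measure, and for the converse show $E_Q[\cI_{T\wedge T^n}(\theta)]=0$ by optional stopping/tower property and pass to the limit with Fatou using admissibility. Your final paragraph on collapsing coinciding stopping times on $\{T^n\le s\}$ is a technical point the paper's proof silently glosses over, but it does not change the substance of the argument.
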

\begin{proof}
Note that $\cV_{t,T}\subseteq \cV_{0,T}$ by (\ref{Elmt2}). Therefore, as $\cO_0\subseteq\cO_t$, it suffices to consider the NFL condition at time $t=0$ by Lemma \ref{NFL-NFL0}. By Proposition \ref{propNFLt}, NFL in discrete time and in continuous time are equivalent. In the following, we use the notations of Definition \ref{LB}. If NFL holds, the local martingale measure $Q=Q_0$ for $S$ is given 
by Proposition \ref{propNFLt}. Indeed, for each $n\ge 1$, and $t_1\le t_2$ such that $t_2\le T$, $V=\pm\left(S_{t_2\wedge T^n}-S_{t_1\wedge T^n}\right)1_{F_{t_1}}\in \cV_{0,T}$ for all $F_{t_1}\in \cF_{t_1}$ and $V$ is bounded from below by $-M^n$. So, we deduce that $E_Q(\left(S_{t_2\wedge T^n}-S_{t_1\wedge T^n}\right)1_{F_{t_1}})=0$ and finally $E_Q(S^{T^n}_{t_2}\vert \cF_{t_1})=S^{T^n}_{t_1}$. This implies that $S$ is a local martingale under $Q$. At last, if $\cV_{0,T}=^{a}\!\!\cV_{0,T}$, consider an admissible simple strategy $\theta$ such that $\cI_u(\theta)\ge m$ for all $u\in[0,T]$, see (\ref{IntegOperator}). Suppose that there exists a local martingale measure $Q$ for $S$. So, there exists an increasing sequence $(T^n)_{n\ge 1}$ of stopping times such that $\lim_n T^n=\infty$ and the stopped process $S^{T^n}$ is a martingale, for all $n\ge 1$. It is easily seen that $E_{Q}[\cI_{T\wedge T^n}(\theta)]= 0$. Indeed, it suffices to successively apply to tower property knowing that the generalized conditional expectation $E_{Q}(\theta_{\tau_{i-1}}\left(S_{\tau_{i}\wedge T^n}-S_{\tau_{i-1}\wedge T^n}\right)\vert\cF_{\tau_{i-1}})=0$. Moreover,  $\cI_{T\wedge T^n}(\theta)\ge m$ by the admissibility property. Therefore, $E_{Q}[\cI_{T}(\theta)]\le \liminf_n E_{Q}[\cI_{T\wedge T^n}(\theta)]\le 0$, by the Fatou lemma. The conclusion follows by  Proposition \ref{propNFLt}. \end{proof}

\subsection{The NFLVR condition}

The  NFLVR condition is also well known in mathematical finance. The financial interpretation is the same as the NFL one, i.e. it is an asymptotic no-arbitrage condition, but the topology $\sigma(L^{\infty},L^1)$ is replaced by the strong topology defined by the $L^{\infty}$ norm. \smallskip

 Let $\cA_{t,T}^{\infty}:= \cA_{t,T}\cap L^{\infty}(\R,\cF_T)$ and $\cA_{t,T}^{c,\infty}:= \cA_{t,T}^c\cap L^{\infty}(\R,\cF_T)$ be the  sets of  bounded attainable claims. Then, we denote by $\overline{\cA}_{t,T}^{\infty}$ and  $\overline{\cA}_{t,T}^{c,\infty}$ the  norm closures of $\cA_{t,T}^{\infty}$ and $\cA_{t,T}^{c,\infty}$ respectively with respect to the topology induced by the  norm $\| \cdot\|_{\infty}$.

\begin{definition} The condition NFLVR holds at time $t\le T$ for the discrete-time portfolios (resp. continuous-time portfolios) if $\overline{\cA}_{t,T}^{\infty}\cap L^{\infty}(\Reel_+,\cF_T)=\{0\}$ (resp. $\overline{\cA}_{t,T}^{c,\infty}\cap L^{\infty}(\Reel_+,\cF_T)=\{0\}$). We say that NFLVR holds if NFLVR holds at any time $t\le T$.
\end{definition}

We easily observe that NFL implies NFLVR. Actually, under some conditions on the price process, NFL and NFLVR are equivalent \cite[Corollary 1.2]{DScha1} to the existence of a local martingale measure, as we shall see. Note that it is not trivial whether the NFLVR condition for discrete-time portfolios is equivalent to the NFLVR condition for continuous-time portfolios. This is not true in general, see \cite[Example 6.5.]{DScha1}.  But we have the following:

\begin{prop} Suppose that $\cO_0\subseteq\cO_t$ for all $t\le T$. Suppose that the topology $\cO_0$ satisfies the Fatou property, is positively homogeneous and is $\cF_0$-lower bond preserving. Assume that $\cV_{t,T}=^a \!\! \cV_{t,T}$ is given by (\ref{Elmt2}) for all $t\le T$ and $S$ is a continuous process. Then,  the conditions  NFL and NFLVR for discrete-time portfolios and  the conditions  NFL and NFLVR for continuous-time portfolios  are equivalent to the existence of a local martingale measure for $S$. 
\end{prop}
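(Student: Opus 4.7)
The aim is to show that the five assertions---NFL for discrete-time portfolios, NFLVR for discrete-time portfolios, NFL for continuous-time portfolios, NFLVR for continuous-time portfolios, and the existence of a local martingale measure for $S$---are pairwise equivalent.

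First, I would check that continuity of $S$ forces local boundedness in the sense of Definition \ref{LB}: setting $T^n=\inf\{t\le T:\|S_t\|\ge n\}\wedge T$, continuity of $S$ yields $T^n\uparrow\infty$ almost surely and $\|S^{T^n}\|\le n$. This places the situation within the hypotheses of Corollary \ref{NFL-LocalMM}, which at once delivers the equivalence of NFL (discrete), NFL (continuous), and the existence of an equivalent local martingale measure (ELMM) for $S$.

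Second, NFL implies NFLVR in each setting as an immediate consequence of topological inclusions: the $\|\cdot\|_{\infty}$-closure is contained in the $\sigma(L^{\infty},L^1)$-closure, so $\overline{\cA}_{t,T}^{\infty}\subseteq\overline{\cA}_{t,T}^{w,\infty}$ and $\overline{\cA}_{t,T}^{c,\infty}\subseteq\overline{\cA}_{t,T}^{c,w,\infty}$. Moreover, since any $V\in\cV_{t,T}$ is the $\cO_t$-limit of the constant sequence equal to $V$, one has $\cV_{t,T}\subseteq\cV_{t,T}^c$, whence $\overline{\cA}_{t,T}^{\infty}\subseteq\overline{\cA}_{t,T}^{c,\infty}$ and NFLVR (continuous) implies NFLVR (discrete).

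The cycle is then closed by establishing that NFLVR entails the existence of an ELMM. I would invoke the Delbaen--Schachermayer fundamental theorem of asset pricing for continuous processes \cite[Corollary 1.2]{DScha1}: on a continuous, locally bounded price process, NFLVR for admissible strategies---including the simple strategies of the form (\ref{Elmt2}) and their $\cO_0$-limits, which approximate the integrands used in the classical theorem---implies the existence of an ELMM. Combined with the first step, this closes the equivalence.

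\textbf{Main obstacle.} The delicate implication is NFLVR $\Rightarrow$ ELMM. The classical DS argument is phrased for general predictable admissible integrands against a semimartingale, whereas in our setting $S$ is only assumed continuous and the class of strategies is the restricted simple class (\ref{Elmt2}) or its topological closure. The heart of the argument is to verify that the simple strategies, together with their limits in $\cO_0$, form a class rich enough for the DS closure--separation argument to go through; continuity of $S$ is decisive here, since it permits uniform approximation of the relevant integrands on compact time intervals and ensures that no jumps escape the stopping-time grid.
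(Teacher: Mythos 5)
Your architecture matches the paper's: NFL $\Rightarrow$ NFLVR trivially (norm closure inside weak closure), the equivalence of NFL (discrete), NFL (continuous) and the existence of a local martingale measure via Corollary \ref{NFL-LocalMM} (your observation that continuity of $S$ gives local boundedness via $T^n=\inf\{t:\,\|S_t\|\ge n\}$ is exactly the needed verification), and the inclusion $\cA_{t,T}^{\infty}\subseteq\cA_{t,T}^{c,\infty}$ to pull NFLVR (continuous) back to NFLVR (discrete). All of that is sound and is essentially what the paper does.

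The gap is at the step you yourself flag as the ``main obstacle,'' and it is not closed. You cannot invoke \cite[Corollary 1.2]{DScha1} here: that result presupposes NFLVR with respect to \emph{general admissible predictable integrands} against a \emph{semimartingale}, whereas what you have in hand is NFLVR with respect to the simple strategies of (\ref{Elmt2}) only, and $S$ is merely assumed continuous (it is not yet known to be a semimartingale at this point of the argument). Deducing the general-integrand NFLVR from the simple-integrand one is precisely the hard content of Section 7 of Delbaen--Schachermayer, and your sketch (``uniform approximation of the relevant integrands on compact time intervals'') does not substitute for it --- approximating integrands does not by itself transfer the no-arbitrage property, since limits of simple-strategy gains need not be controlled without a priori bounds. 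The paper closes this step by citing \cite[Theorem 7.6]{DScha1}, which is stated exactly for continuous adapted processes and asserts that NFLVR \emph{for simple integrands} already implies the existence of an equivalent local martingale measure. With that single (correct) citation in place of Corollary 1.2, your proof coincides with the paper's; without it, the crucial implication NFLVR (discrete) $\Rightarrow$ ELMM remains unproved.
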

\begin{proof} Recall that the NFL condition for discrete-time portfolios implies the NFLVR condition for discrete-time portfolios. By \cite[Theorem 7.6]{DScha1}, there exists a local martingale measure for $S$. By Corollary \ref{NFL-LocalMM}, we deduce that NFL holds both for discrete-time and continuous-time portfolio processes. The conclusion follows.
\end{proof}

The result above implies that the price process $S$ needs to be a semi-martingale for the NFL condition to hold. The same holds if the NFLVR condition holds even for discrete-time portfolio processes, see \cite[Theorem 7.2]{DScha1} for locally bounded processes $S$. The next no-arbitrage condition AIP we consider does not necessitate the price process to be a semimartingale. 

\section{The AIP condition}\label{sec4}

The AIP condition has been initially introduced in \cite{CL} for discrete-time models. The financial interpretation  is that the hedging prices of non negative European claims are non negative or, equivalently,  the hedging prices of non negative hedgeable European claims are finite. The advantage of this condition is that it is sufficient, at least in discrete-time, to deduce the super-hedging prices without supposing that the price process is a semimartingale. \smallskip

Our goal is to study the AIP condition for continuous-time processes and relate it to the same condition for discrete-time processes. To do so, we shall use the notion of conditional essential infimum  and supremum, see \cite[Section 5.3.1]{KS}. We recall that, if $\cH$ is a sub $\sigma$-algebra, the $\cH$-measurable essential supremum $\esssup_{\cH}(\Gamma)$ of a collection $\Gamma$ of real-valued random variables is the smallest $\cH$-measurable random variable that dominates $\Gamma$ a.s. and we define $\essinf_{\cH}(\Gamma)=-\esssup_{\cH}(-\Gamma)$. If the elements of $\Gamma$ are $\cH$-measurable, we use the notation $\esssup(\Gamma):=\esssup_{\cH}(\Gamma)$. If $\Gamma=\{\gamma\}$ is a singleton, we write $\esssup_{\cH} \Gamma=\esssup_{\cH} \gamma$.

\begin{theo} Let $\Gamma$ be a family of $\cF_T$-measurable random variables in $L^0(\R,\cF_T)$ and let $\cH$ be a sub $\sigma$-algebra of $\cF_T$. There exists a unique $\cH$-measurable random variable denoted by $\esssup_{\cH} \Gamma$ such that:
\begin{itemize}
\item [1)]  $\esssup_{\cH} \Gamma\ge \gamma$ a.s. for all $\gamma \in \Gamma$. \smallskip

\item [2)] If $\gamma_{\cH}$ is $\cH$-measurable and satisfies $\gamma_{\cH} \ge \gamma$ a.s. for all $\gamma \in \Gamma$, then $\gamma_{\cH} \ge \esssup_{\cH} \Gamma$ a.s..

\end{itemize}

\end{theo}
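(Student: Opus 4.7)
The plan is to establish uniqueness by a direct comparison argument and to establish existence by reducing to the classical (unconditional) essential infimum of the family of $\cH$-measurable a.s.\ upper bounds of $\Gamma$.

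For uniqueness, I would suppose that $Y_1$ and $Y_2$ are both $\cH$-measurable and satisfy 1) and 2). Since $Y_1$ dominates $\Gamma$ a.s.\ and is $\cH$-measurable, property 2) applied to $Y_2$ with candidate $\gamma_\cH := Y_1$ yields $Y_2 \le Y_1$ a.s. By symmetry $Y_1 \le Y_2$ a.s., hence $Y_1 = Y_2$ a.s.

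For existence, I would introduce the set of $\cH$-measurable a.s.\ upper bounds
$\Gamma_\cH := \{\eta \in L^0(\Reel,\cH) : \eta \ge \gamma \text{ a.s.\ for every } \gamma \in \Gamma\}$,
and observe that $\Gamma_\cH$ is closed under pairwise minima, since for $\eta_1,\eta_2 \in \Gamma_\cH$ the random variable $\min(\eta_1,\eta_2)$ is $\cH$-measurable and still a.s.\ dominates every $\gamma \in \Gamma$. Hence $\Gamma_\cH$ is downward-directed. The classical essential infimum construction applied to such a family produces a decreasing sequence $(\eta_n)_{n\ge 1} \subseteq \Gamma_\cH$ with $\eta_n \downarrow Y := \essinf \Gamma_\cH$ a.s.; $Y$ is $\cH$-measurable as an a.s.\ limit of $\cH$-measurable random variables. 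Property 1) then follows by letting $n\to\infty$ in $\eta_n \ge \gamma$ for each fixed $\gamma \in \Gamma$ (the relevant exceptional null set is the countable union of the exceptional null sets over $n$), and property 2) is immediate, because any competing $\cH$-measurable a.s.\ upper bound $\gamma_\cH$ of $\Gamma$ lies in $\Gamma_\cH$ and therefore dominates $Y$ a.s.

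The main technical step is the extraction of the decreasing minimizing sequence from the downward-directed family $\Gamma_\cH$. It rests on the standard Zorn-free countable selection argument for essential infima: one composes with a bounded strictly increasing transformation such as $\arctan$ so that expectations are defined, maximizes these expectations over countable subfamilies, and unions countably many subfamilies that approach the maximum. Since this construction is well documented (for instance in the reference \cite{KS} cited by the authors just before the theorem), I would simply invoke it rather than repeat it. A subsidiary issue that must be addressed is that $\Gamma_\cH$ may be empty or $Y$ may attain $+\infty$ on a set of positive measure; this is standardly resolved either by assuming that $\Gamma$ admits at least one $\cH$-measurable a.s.\ upper bound (e.g.\ a deterministic bound when the elements of $\Gamma$ are uniformly bounded from above) or by enlarging the codomain to $\overline{\Reel}$ in the statement.
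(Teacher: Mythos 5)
Your proof is correct, and it is essentially the argument the paper relies on: the paper gives no proof of this theorem at all, simply quoting it from the cited reference \cite[Section 5.3.1]{KS}, and your construction (uniqueness by mutual domination, existence as the essential infimum of the downward-directed family of $\cH$-measurable a.s.\ upper bounds, extracted via the standard $\arctan$/countable-selection argument) is exactly the standard one found there. You are also right to flag that, with $\Gamma\subseteq L^0(\Reel,\cF_T)$ and no integrability or boundedness hypothesis, the family of upper bounds may be empty, so the statement is only literally correct if one allows $\esssup_{\cH}\Gamma$ to take values in $\overline{\Reel}$ — a point the paper glosses over.
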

\begin{definition} \label{Price}

A contingent claim $h_{T}\in L^0(\R,\cF_T)$ is said to be super-hedgeable in discrete time (resp. continuous time) at time $t$ if there exists $p_{t}\in L^{0}(\R,\cF_{t})$ (called a super-hedging price) and a discrete-time (resp. continuous-time)  portfolio process  $ V_{t,T}$ such that $p_{t}+  V_{t,T} \ge h_T$. 
\end{definition}

Recall that the set of all  super-hedgeable claims in discrete time (resp. continuous time)  from the zero initial endowment at time $t$  is given by the set $\cA_{t,T}=\cV_{t,T}-L^0(\Reel_+,\cF_{T})$ (resp. $\cA_{t,T}^c$). We denote by $\cP_{t,T} (h_T)$ (resp. $\cP_{t,T} ^c(h_T)$)  the  set of super-hedging prices in discrete time (resp. in continuous time) for the claim $h_{T}\in L^0(\R,\cF_T)$. The infimum super-hedging price in discrete time (resp. in continuous time) is $\pi_{t,T} (h_T)= \essinf(\cP_{t,T} (h_T))$ (resp. $\pi_{t,T}^c (h_T)= \essinf(\cP_{t,T}^c (h_T))$).  We adopt the notation $\cP_{t,T} (0)=\cP_{t,T}$ (resp. $\cP_{t,T} ^c(0)=\cP_{t,T} ^c$), etc..when $h_T=0$. We observe that $\cP_{t,T}=  \cA_{t,T}\cap L^0( \Reel,\cF_{t})$ and $\cP_{t,T}^c= \cA_{t,T}^c\cap L^0( \Reel,\cF_{t})$. Moreover, 
  \bea \label{the prices} \cP_{t ,T}=\{\esssup_{\cF_{t}} ( -  v_{t,T}) :  v_{t,T} \in  \cV_{t,T} \}+ L^0( \Reel_+,\cF_{t}),\\
  \cP_{t ,T}^c=\{\esssup_{\cF_{t}} ( -  v_{t,T}) :  v_{t,T} \in  \cV_{t,T}^c \}+ L^0( \mathbb{\Reel_+},\cF_{t})
  .\eea
   Indeed, $p_t$ is a price in discrete time for $0$ if there exists $v_{t,T} \in  \cV_{t,T} $ such that $ p_t+  v_{t,T} \geq 0$ i.e $p_t\geqslant -  v_{t,T}$, which is equivalent to $ p_t \geqslant  \esssup_{\cF_{t}} (-  v_{t,T})$. We have a similar characterization for $\cP_{t ,T}^c$.

\begin{definition}
An instantaneous profit in discrete time (resp. in continuous time) at time $t<T$ is a strategy that super-replicates in discrete time (resp. in continuous time) the zero contingent claim  starting from a negative price $p_{t,T}\in  \cP_{t,T}\cap L^0(\Reel_-,\cF_t)$ (resp. $p_{t,T}\in  \cP_{t,T}^c\cap L^0(\Reel_-,\cF_t)$)   such that $p_{t,T}\ne 0$.  In the absence of such an instantaneous profit,  we say that the Absence of Instantaneous Profit (AIP)  holds at time $t$, i.e. \begin{equation}\label{NGD}
 \cP_{t,T}\cap L^{0}(\Reel_{-},\cF_t)=   \cA_{t,T}\cap L^0( \Reel_+,\cF_{t})=\{0\}.
\end{equation}
Respectively, $\cP_{t,T}^c\cap L^{0}(\Reel_{-},\cF_t)=   \cA_{t,T}^c\cap L^0( \Reel_+,\cF_{t})=\{0\}$ in continuous time. We say that AIP holds if AIP holds at any $t \le T$.
\end{definition}

\begin{remark}{\rm 
The NFLVR condition implies AIP.} $\Diamond$
\end{remark}

\begin{remark}{\rm 
 AIP in discrete time at time $t\le T$ is equivalent to $\pi_{t,T}(0) = 0$ or equivalently $\cP_{t,T} = L^{0}(\Reel_{+},\cF_t)$.
Indeed $\pi_{t,T}(0) \le 0$ as $0 \in  \cP_{t,T}$. Moreover, if AIP holds then $\cP_{t,T} \subset L^{0}(\Reel_+,\cF_t)$. To see it, consider  $p_{t,T} \in  \cP_{t,T}$. Then $1_{\{p_{t,T} \le 0\}} p_{t,T} \in  \cP_{t,T}$ hence $1_{\{p_{t,T} \le 0\}}p_{t,T}=0$ by AIP and $p_{t,T} \ge 0$. Conversely, any $p_t \ge 0$  is a price for the zero claim  since $0\in\cP_{t,T}$. The same holds in continuous time.} $\Diamond$
\end{remark}

The following lemma provides another financial interpretation of the AIP condition. Precisely, when starting from the zero initial endowment, it is not possible to obtain a  terminal wealth which, estimated at time $t$, is strictly positive on a non null $\cF_t$-measurable set. In particular, under AIP, there is a possibility to face a loss when starting from zero.

 \begin{lemm} \label{LemmAIP}The AIP condition holds in discrete time (resp. in continuous time) if and only if
 , for any $ t\le T$ and for all $ v_{t,T}\in \cV_{t,T}$ (resp. $\cV_{t,T}^c$), we have     $\essinf_{F_{t}} ( v_{t,T})\leqslant 0$.
\end{lemm}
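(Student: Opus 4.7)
The plan is to prove the two implications separately; I treat the discrete-time case, as the continuous-time case is identical once one checks that $\cV_{t,T}^c$ inherits an $\cF_t$-positive cone structure from $\cV_{t,T}$ through the $\cF_t$-positive homogeneity of $\cO_t$ (if $v^n\to v$ in $\cO_t$ with $v^n\in \cV_{t,T}$, then $\mathbf{1}_A v^n\in \cV_{t,T}$ converges to $\mathbf{1}_A v$, so $\mathbf{1}_A v\in \cV_{t,T}^c$ for any $A\in\cF_t$).

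For the easy direction ($\Leftarrow$), I assume $\essinf_{\cF_t}(v_{t,T})\le 0$ for every $v_{t,T}\in\cV_{t,T}$ and pick an arbitrary $\alpha_t\in \cA_{t,T}\cap L^0(\Reel_+,\cF_t)$. Writing $\alpha_t=v_{t,T}-w$ with $v_{t,T}\in\cV_{t,T}$ and $w\in L^0(\Reel_+,\cF_T)$, I get $v_{t,T}\ge \alpha_t$ a.s. Since $\alpha_t$ is $\cF_t$-measurable, the universal property of the conditional essential infimum forces $\alpha_t\le \essinf_{\cF_t}(v_{t,T})\le 0$, which together with $\alpha_t\ge 0$ gives $\alpha_t=0$. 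Hence AIP holds.

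For the harder direction ($\Rightarrow$), I argue by contradiction: fix $v_{t,T}\in\cV_{t,T}$, set $\alpha_t:=\essinf_{\cF_t}(v_{t,T})$ and suppose $\bP(\alpha_t>0)>0$. Let $A:=\{\alpha_t>0\}\in\cF_t$. The key localization step uses the $\cF_t$-positive cone structure: $\mathbf{1}_A v_{t,T}\in \cV_{t,T}$. I then introduce the truncated $\cF_t$-measurable random variable $\beta_t:=\mathbf{1}_A(\alpha_t\wedge 1)\in L^0(\Reel_+,\cF_t)$, which is nonzero since $\bP(A)>0$. On $A$, $\mathbf{1}_A v_{t,T}=v_{t,T}\ge \alpha_t\ge \beta_t$, and on $A^c$ both sides vanish, so $\mathbf{1}_A v_{t,T}-\beta_t\in L^0(\Reel_+,\cF_T)$, giving
\[
\beta_t=\mathbf{1}_A v_{t,T}-(\mathbf{1}_A v_{t,T}-\beta_t)\in \cA_{t,T}\cap L^0(\Reel_+,\cF_t)\setminus\{0\},
\]
a direct contradiction to AIP at time $t$.

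The main obstacle is the localization construction in the forward direction: I need to restrict $v_{t,T}$ to the bad event $A$ while staying inside $\cV_{t,T}$, which is exactly what the $\cF_t$-positive cone assumption (implicit in the paper's setting, made explicit in Proposition~\ref{propNFLt}) provides. Truncating $\alpha_t$ by $1$ is only a convenience to avoid integrability worries when $\alpha_t$ is unbounded; any nonzero $\cF_t$-measurable random variable dominated by $\mathbf{1}_A \alpha_t$ would do.
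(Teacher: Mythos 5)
Your proof is correct, and at bottom it is the same argument the paper compresses into the single line ``direct consequence of (\ref{the prices})'': the minimal prices for the zero claim are exactly the random variables $\esssup_{\cF_t}(-v_{t,T})=-\essinf_{\cF_t}(v_{t,T})$, and AIP is the statement that no such price is negative and nonzero. The backward implication, as you note, needs nothing beyond the universal property of $\essinf_{\cF_t}$. The real value of your write-up is in the forward implication: the paper's one-liner hides the fact that $\esssup_{\cF_t}(-v_{t,T})$ need only be negative on the event $A=\{\essinf_{\cF_t}(v_{t,T})>0\}$, so to contradict AIP one must localize to $A$, and $\mathbf{1}_A v_{t,T}\in\cV_{t,T}$ is \emph{not} a consequence of the lemma's stated hypotheses --- it is the $\cF_t$-positive cone (or $\cF_t$-decomposability) property that the paper only assumes explicitly in Proposition~\ref{propNFLt}, though it does hold for the examples (\ref{Elmt1}) and (\ref{Elmt2}) and silently underlies the earlier remark that AIP forces $\cP_{t,T}\subseteq L^{0}(\Reel_+,\cF_t)$ (the step $1_{\{p_{t,T}\le 0\}}p_{t,T}\in\cP_{t,T}$ is the same localization). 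You were right to surface this as an explicit hypothesis, and your verification that $\cV_{t,T}^c$ inherits it through the $\cF_t$-positive homogeneity of $\cO_t$ is exactly what is needed for the continuous-time half. Two minor remarks: the truncation $\alpha_t\wedge 1$ is not needed at all, since everything lives in $L^0$ with no integrability constraints ($\mathbf{1}_A\alpha_t$ itself serves as $\beta_t$); and your candidate $\beta_t$ is, up to sign, precisely the localized minimal price $-\mathbf{1}_A\essinf_{\cF_t}(v_{t,T})$ that the paper's formula (\ref{the prices}) produces.
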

\begin{proof}
This is a direct consequence of (\ref{the prices}). \end{proof}

\subsection{The AIP condition for discrete-time portfolio processes}

The following two propostions are direct consequences deduced from \cite{CL}.

\begin{prop} \label{AIP-St}
 Suppose that $d=1$ and the discrete-time portfolio processes are given by (\ref{Elmt1}). The  AIP condition holds in discrete time if and only if,   for all $ t_1< t_2<T$, $S_{t_1}\in\left[\essinf_{\cF_{t_1}}(S_{t_2}) , \esssup_{\cF_{t_1}}(S_{t_2})\right]$.\end{prop}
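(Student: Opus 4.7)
The plan is to translate the AIP condition, via Lemma \ref{LemmAIP}, into the requirement that $\essinf_{\cF_t}(v_{t,T})\le 0$ for every $v_{t,T}\in\cV_{t,T}$, and then prove the two implications separately. (I read the envelope condition in the statement as $t_1<t_2\le T$, since the case $t_2=T$ is needed at the base step of the induction below.)

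For sufficiency $(\Leftarrow)$, fix a subdivision $t=t_0<\cdots<t_n=T$ with $\theta_{t_i}\in L^0(\Reel,\cF_{t_i})$. The crux is the one-period bound: decomposing $\theta_{t_{n-1}}=\theta_{t_{n-1}}^{+}-\theta_{t_{n-1}}^{-}$ and pulling $\cF_{t_{n-1}}$-measurable factors outside the conditional essinf yields
\[
\essinf_{\cF_{t_{n-1}}}\!\bigl(\theta_{t_{n-1}}(S_{t_n}-S_{t_{n-1}})\bigr) = \theta_{t_{n-1}}^{+}\bigl(\essinf_{\cF_{t_{n-1}}}(S_{t_n})-S_{t_{n-1}}\bigr)-\theta_{t_{n-1}}^{-}\bigl(\esssup_{\cF_{t_{n-1}}}(S_{t_n})-S_{t_{n-1}}\bigr),
\]
which is non-positive by the envelope hypothesis, since both factors in parentheses have the correct sign. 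To globalize, I rely on the tower identity $\essinf_{\cF_{t_k}}(X)=\essinf_{\cF_{t_k}}(\essinf_{\cF_{t_{k+1}}}(X))$ (monotonicity of essinf provides one direction, while $\essinf_{\cF_{t_k}}(X)$ being $\cF_{t_{k+1}}$-measurable, since $\cF_{t_k}\subset\cF_{t_{k+1}}$, and minorising $X$ gives the other) and proceed by backward induction on $k$: writing $\sum_{i=k+1}^n\theta_{t_{i-1}}\Delta S_{t_i}=\theta_{t_k}\Delta S_{t_{k+1}}+\sum_{i=k+2}^n\theta_{t_{i-1}}\Delta S_{t_i}$, I take $\essinf_{\cF_{t_{k+1}}}$ to pull the first summand out, invoke the inductive bound to drop the second, then apply the one-period bound. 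Reaching $k=0$ gives $\essinf_{\cF_t}(v_{t,T})\le 0$.

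For necessity $(\Rightarrow)$, I argue contrapositively. Suppose $S_{t_1}<\essinf_{\cF_{t_1}}(S_{t_2})$ on some $F\in\cF_{t_1}$ with $\bP(F)>0$ (the symmetric failure is handled with $-1_F$). Consider AIP at time $t=t_1$ and the simple strategy supported on $[t_1,t_2]$ with quantity $\theta_{t_1}=1_F$, padded with zeros on $[t_2,T]$ if $t_2<T$. The corresponding portfolio $v_{t_1,T}=1_F(S_{t_2}-S_{t_1})$ lies in $\cV_{t_1,T}$ and satisfies
\[
\essinf_{\cF_{t_1}}(v_{t_1,T})=1_F\bigl(\essinf_{\cF_{t_1}}(S_{t_2})-S_{t_1}\bigr),
\]
which is strictly positive on $F$, contradicting Lemma \ref{LemmAIP}.

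The main technical obstacle is the sufficiency direction: it rests on two structural facts about conditional essential infimum (compatibility with the positive/negative part decomposition of a measurable coefficient, and the tower identity under nested filtrations), both of which must be derived from the defining universal property of $\essinf_{\cF_{t_k}}$. Once these are in hand, the backward induction and the envelope sign analysis are mechanical.
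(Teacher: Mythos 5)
Your proof is correct, but it does not follow the paper's route: the paper offers no argument for this proposition at all, declaring it (together with Proposition~\ref{prop-AIP-dt}) a ``direct consequence'' of \cite{CL}; the nearest in-paper argument is the proof of the stopping-time analogue, Proposition~\ref{Prop-AIP-SP}, which reduces to two-date submodels $(S_{\tau_1},S_{\tau_2})$, invokes the formula $\pi_{\tau_1,\tau_2}=-\delta_{{\rm conv}({\rm supp}_{\cF_{\tau_1}}(S_{\tau_2}))}(S_{\tau_1})$ imported from \cite{CL}, and then peels off trading dates by backward induction on prices. You share the backward-induction skeleton but replace the imported duality formula with a self-contained computation at the level of Lemma~\ref{LemmAIP}: the identity $\essinf_{\cF}(\theta X)=\theta^{+}\essinf_{\cF}(X)-\theta^{-}\esssup_{\cF}(X)$ for $\cF$-measurable $\theta$ (valid by localizing on $\{\theta\ge 0\}$ and $\{\theta<0\}$), the tower property of $\essinf$, and the sign analysis of the envelope condition; the converse via $\theta_{t_1}=\pm 1_F$ is exactly the right witness. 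What your approach buys is a fully elementary, citation-free proof; what it costs is that the scalar decomposition $\theta=\theta^{+}-\theta^{-}$ is intrinsically one-dimensional, whereas the paper's route through conditional supports and convex envelopes is what generalizes to $d\ge 1$. Your reading of the time range as $t_1<t_2\le T$ is also the right one: the endpoint $t_2=T$ is needed for the base step, and the paper's own multidimensional and stopping-time versions state the condition for $t_1\le t_2\le T$, so the strict inequality in the statement is a typo. The only point worth making explicit in a written version is that $\esssup_{\cF_{t_{n-1}}}(S_{t_n})$ may equal $+\infty$, so the one-period identity must be read with the convention $0\cdot(+\infty)=0$; the conclusion $\le 0$ is unaffected.
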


In the following, if $\cH$ is a sub $\sigma$-algebra, we denote by ${\rm supp}_{\cH}(X)$ the $\cH$-measurable conditional support of any random variable $X$, i.e. the smallest $\cH$-measurable random set ${\rm supp}_{\cH}(X)$ such that $X\in {\rm supp}_{\cH}(X)$ a.s., see \cite{EL}. The convex envelop of any $A\subseteq \Reel^d$ is denoted by ${\rm conv}(A)$.

\begin{prop}\label{prop-AIP-dt} Suppose that $d \ge 1$ and  the  discrete-time portfolio processes are given by (\ref{Elmt1}).
Then, AIP holds in  discrete time if and only if $S_{t_1} \in {\rm conv}({\rm supp}_{\cF_{t_1}}(S_{t_2}))$ for any $t_1 \le t_2 \le T$.
\end{prop}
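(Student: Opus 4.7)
The plan is to reduce the multi-step AIP condition to a one-step condition on each pair of consecutive trading dates, and then to identify this one-step condition with the conditional geometric inclusion $S_{t_1}\in{\rm conv}({\rm supp}_{\cF_{t_1}}(S_{t_2}))$ by a random Hahn--Banach argument. Both ingredients are essentially present in \cite{CL}, so the proof amounts to assembling them correctly in the $d\ge 1$ case.

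First I would establish that AIP (in the form of Lemma~\ref{LemmAIP}) is equivalent to the following one-step property: for every $t_1<t_2\le T$ and every $\theta\in L^0(\Reel^d,\cF_{t_1})$,
\bean
\essinf_{\cF_{t_1}}\bigl(\theta\cdot(S_{t_2}-S_{t_1})\bigr)\le 0.
\eean
One direction is immediate since a one-step portfolio $\theta\cdot(S_{t_2}-S_{t_1})$ belongs to $\cV_{t_1,T}$ by (\ref{Elmt1}). For the converse, assume some $v_{t,T}=\sum_{i=1}^n \theta_{t_{i-1}}\Delta S_{t_i}\in \cV_{t,T}$ satisfies $v_{t,T}\ge \gamma$ a.s. for some $\gamma\in L^0(\Reel_+,\cF_t)$ with $\bP(\gamma>0)>0$. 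Splitting off the last step as $v_{t,T}=v_{t,t_{n-1}}+\theta_{t_{n-1}}\Delta S_{t_n}$ and applying the one-step property at $(t_{n-1},t_n)$ to the $\cF_{t_{n-1}}$-measurable lower bound $\gamma-v_{t,t_{n-1}}$ of $\theta_{t_{n-1}}\Delta S_{t_n}$ yields $v_{t,t_{n-1}}\ge \gamma$ on the same set. Iterating this backward induction down to $v_{t,t_1}=\theta_t\Delta S_{t_1}\ge \gamma$ contradicts the one-step property at $(t,t_1)$.

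Second, I would prove that this one-step property at $(t_1,t_2)$ is equivalent to the inclusion $S_{t_1}\in{\rm conv}({\rm supp}_{\cF_{t_1}}(S_{t_2}))$ a.s. The ``if'' direction rests on the classical support characterization of closed convex hulls: if $S_{t_1}\in{\rm conv}({\rm supp}_{\cF_{t_1}}(S_{t_2}))$, then for any $\theta\in L^0(\Reel^d,\cF_{t_1})$,
\bean
\theta\cdot S_{t_1}\le \esssup_{\cF_{t_1}}\bigl(\theta\cdot S_{t_2}\bigr),
\eean
which, applied to both $\theta$ and $-\theta$, yields $\essinf_{\cF_{t_1}}(\theta\cdot(S_{t_2}-S_{t_1}))\le 0$. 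For the ``only if'' direction, suppose the inclusion fails on some $F\in\cF_{t_1}$ of positive measure. A measurable selection of a separating hyperplane between $S_{t_1}$ and the $\cF_{t_1}$-measurable random set ${\rm supp}_{\cF_{t_1}}(S_{t_2})$ in the sense of \cite{EL} produces a $\theta\in L^0(\Reel^d,\cF_{t_1})$, vanishing outside $F$, such that $\theta\cdot S_{t_1}>\esssup_{\cF_{t_1}}(\theta\cdot S_{t_2})$ on $F$, which contradicts the one-step property.

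The main technical obstacle is the ``only if'' part of the second step: one must choose the separating direction $\theta$ in an $\cF_{t_1}$-measurable way, which requires the full apparatus of conditional supports and random convex-analysis results (random Hahn--Banach and measurable selection) developed in \cite{EL} and already exploited in \cite{CL}. Once this machinery is available, the proposition follows by combining it with the backward-induction reduction from the first step.
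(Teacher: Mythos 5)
Your proof is correct and follows essentially the same route as the paper: the paper states this proposition as a direct consequence of \cite{CL}, and the proof it does spell out for the stopping-time analogue (Proposition \ref{Prop-AIP-SP}) is exactly your two-step scheme --- peel off the last trading date by backward induction to reduce to a two-date model, then invoke the one-step characterization of AIP via the conditional support of $S_{t_2}$. The only point to watch is in your separation argument: strict separation of $S_{t_1}$ from ${\rm conv}({\rm supp}_{\cF_{t_1}}(S_{t_2}))$ on $F$ requires this random convex set to be closed (as it is in the setup of \cite{CL}, where the closed convex envelope is used); with a non-closed convex hull only non-strict separation is available and that step would not go through as written.
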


Similarly, we may show the following:

\begin{prop}\label{Prop-AIP-SP} Suppose that the discrete-time portfolio processes are given by (\ref{Elmt2}). Then, AIP holds in  discrete time if and only if $S_{\tau_1} \in {\rm conv}({\rm supp}_{\cF_{\tau_1}}(S_{\tau_2}))$ for every stopping times $\tau_1,\tau_2\in \cT_{0,T}$ such that $\tau_1 \le \tau_2 $.
\end{prop}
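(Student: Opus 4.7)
The plan is to adapt the proof of Proposition \ref{prop-AIP-dt} to the stopping-time setting, using Lemma \ref{LemmAIP}---``AIP at $t$ iff $\essinf_{\cF_t}(v)\le 0$ for every $v\in\cV_{t,T}$''---as the workhorse. The task splits into (i) deducing this conditional essinf bound from the geometric support condition at stopping times, by a backward recursion over the trading times of the portfolio, and (ii) converting a failure of the support condition at some $(\tau_1,\tau_2)$ into a $\cF_t$-measurable, non-negative, non-zero super-hedgeable claim for some deterministic $t$, via a conditional Hahn--Banach separation.

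For the sufficient direction, I fix $v=\sum_{i=1}^n \theta_{\tau_{i-1}}(S_{\tau_i}-S_{\tau_{i-1}})\in\cV_{t,T}^{\rm rand}$ and set $V_k=\sum_{i=1}^k \theta_{\tau_{i-1}}(S_{\tau_i}-S_{\tau_{i-1}})$, which is $\cF_{\tau_k}$-measurable. I will establish by downward induction on $k$ that $\essinf_{\cF_{\tau_k}}(v)\le V_k$; specialising to $k=0$ then reads $\essinf_{\cF_t}(v)\le 0$, and Lemma \ref{LemmAIP} closes the direction. The base $k=n$ is $v=V_n$. For the step, the monotonicity of essinf in the conditioning $\sigma$-algebra gives $\essinf_{\cF_{\tau_{k-1}}}(v)\le \essinf_{\cF_{\tau_k}}(v)\le V_k$; since the left-hand side is $\cF_{\tau_{k-1}}$-measurable, I may replace $V_k$ on the right by $\essinf_{\cF_{\tau_{k-1}}}(V_k)=V_{k-1}+\essinf_{\cF_{\tau_{k-1}}}(\theta_{\tau_{k-1}}\cdot (S_{\tau_k}-S_{\tau_{k-1}}))$, and applying the geometric hypothesis at $(\tau_{k-1},\tau_k)$ to the $\cF_{\tau_{k-1}}$-linear functional $y\mapsto \theta_{\tau_{k-1}}\cdot y$ forces the last essinf to be $\le 0$.

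For the necessary direction, assume AIP and suppose the support condition fails on some $F\in\cF_{\tau_1}$ with $P(F)>0$. A conditional Hahn--Banach separation for the $\cF_{\tau_1}$-measurable random closed convex set ${\rm conv}({\rm supp}_{\cF_{\tau_1}}(S_{\tau_2}))$ (cf.\ \cite{EL,CL}) yields $\theta\in L^0(\Reel^d,\cF_{\tau_1})$ and $\delta\in L^0(\Reel_+,\cF_{\tau_1})$ with $\delta>0$ on $F$ such that $1_F\theta\cdot(S_{\tau_2}-S_{\tau_1})\ge 1_F\delta$; the portfolio trading $1_F\theta$ on $[\tau_1,\tau_2]$ and $0$ elsewhere belongs to $\cV_{0,T}^{\rm rand}$ and dominates $1_F\delta$. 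To feed this into the AIP hypothesis, which is available only at deterministic times, I slice $F$ by $F_t:=F\cap\{\tau_1\le t\}\in\cF_t$: since $F_t\nearrow F$, some $t<T$ has $P(F_t)>0$ and the claim $1_{F_t}\delta\in L^0(\Reel_+,\cF_t)$ is non-zero. The real obstacle is to rebuild the arbitrage portfolio inside $\cV_{t,T}^{\rm rand}$---trades must now begin at the deterministic time $t$ rather than at $\tau_1\le t$---so the uncontrolled increment $S_t-S_{\tau_1}$ must be absorbed, for instance by an auxiliary trade $-1_{F_t}\theta$ at $t$ compensated by $+1_{F_t}\theta$ at $\tau_2\vee t$, or more robustly by first approximating $\tau_1$ from above by a discrete-valued stopping time so that $F$ decomposes into finitely many pieces $F\cap\{\tau_1=t_j\}\in\cF_{t_j}$ on which Proposition \ref{prop-AIP-dt} directly applies.
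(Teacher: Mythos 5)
Your sufficiency direction is correct and is essentially the paper's argument in different clothing: the paper peels off the last trading date by invoking the one-period minimal-price formula $\pi_{\tau_{n-1},\tau_n}=0$ from \cite{CL} and inducting on prices, while you induct on the conditional essential infima $\essinf_{\cF_{\tau_k}}(v)\le V_k$; the key one-period inequality $\essinf_{\cF_{\tau_{k-1}}}(\theta_{\tau_{k-1}}\cdot(S_{\tau_k}-S_{\tau_{k-1}}))\le 0$ does follow from the support condition as you indicate, since the half-space $\{y:\theta_{\tau_{k-1}}\cdot(y-S_{\tau_{k-1}})\ge m\}$ is a closed convex $\cF_{\tau_{k-1}}$-measurable set containing the conditional support, hence its convex hull, hence $S_{\tau_{k-1}}$, forcing $m\le 0$. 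This half is fine and arguably more self-contained than the paper's appeal to the pricing formula.

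The necessity direction is where your proposal has a genuine gap, and it sits exactly where you flag it. After the separation step you hold $1_F\theta\cdot(S_{\tau_2}-S_{\tau_1})\ge 1_F\delta$ with $F\in\cF_{\tau_1}$, but AIP is assumed only at deterministic times, and neither of your two repairs closes the distance. The ``auxiliary trade'' does not work: a strategy in $\cV_{t,T}$ launched at $t$ on $F_t=F\cap\{\tau_1\le t\}$ can only produce $1_{F_t}\theta\cdot(S_{\tau_2\vee t}-S_t)$, which differs from the separated quantity by the $\cF_t$-measurable but sign-uncontrolled term $1_{F_t}\theta\cdot(S_t-S_{\tau_1})$ (and loses the part of $F_t$ on which $\tau_2\le t$); a self-financing round trip $-1_{F_t}\theta$ followed by $+1_{F_t}\theta$ nets to a zero position and cannot manufacture an $\cF_t$-measurable random variable from zero initial capital. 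Discretizing $\tau_1$ from above is closer to a workable argument, but it does not ``directly apply'' Proposition \ref{prop-AIP-dt}, since $\tau_2$ remains a stopping time, and it would require a limit argument using right-continuity of $S$ and of the filtration that you do not supply. The paper's route avoids all of this: it asserts that AIP passes to the two-date submodel $(S_{\tau_1},S_{\tau_2})$ with filtration $(\cF_{\tau_1},\cF_{\tau_2})$ and then reads off $S_{\tau_1}\in{\rm conv}({\rm supp}_{\cF_{\tau_1}}(S_{\tau_2}))$ from the \cite{CL} identity $\pi_{\tau_1,\tau_2}=-\delta_{{\rm conv}({\rm supp}_{\cF_{\tau_1}}(S_{\tau_2}))}(S_{\tau_1})$, so no separation and no return to deterministic times is needed. (The paper is itself terse about why AIP transfers to the stopping-time submodel, but that transfer --- equivalently, that $\essinf_{\cF_{\tau_1}}(\theta\cdot(S_{\tau_2}-S_{\tau_1}))\le 0$ for every $\cF_{\tau_1}$-measurable $\theta$ --- is exactly the statement you would need to prove, and your proposal leaves it open.)
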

\begin{proof}
Suppose that AIP holds and consider two stopping times $\tau_1 \le \tau_2$ in $[0,T]$.  Then, AIP holds for the two time steps smaller model  defined by $(S_{\tau_i})_{i=1,2}$ and $(\cF_{\tau_i})_{i=1,2}$. By \cite{CL}, we deduce that the minimal price of the zero claim for $(S_{\tau_i})_{i=1,2}$ is given by

$$ 0=\pi_{\tau_1,\tau_2}(S_{\tau_1},S_{\tau_2})=- \delta_{ {\rm conv}({\rm supp}_{\cF_{\tau_1}}(S_{\tau_2}))}(S_{\tau_1}),$$
where, for any $I\subseteq \Reel^d$, $\delta_I=(+\infty
)1_{I}$ with the convention $(+\infty)\times (0)=0$. Therefore, $S_{\tau_1} \in {\rm conv}({\rm supp}_{\cF_{\tau_1}}(S_{\tau_2})$. 

Reciprocally,  suppose that, for any $\tau_1 \le \tau_2 \le T$, $S_{\tau_1} \in {\rm conv}({\rm supp}_{\cF_{\tau_1}}(S_{\tau_2})$. Then,  $0=\pi_{\tau_1,\tau_2}(S_{\tau_1},S_{\tau_2})$ for any $\tau_1 \le \tau_2 \le T$. Consider $ p_t \in \cP_{t,T}$ such that $p_{t}+\sum_{i=1}^n \theta_{\tau_{i-1}} \Delta S_{\tau_i} \ge 0 $  for some strategies   $\theta_{\tau_i}\in L^0( \mathbb{R}^d,\cF_{\tau_i})$  and stopping times $t=\tau_0<\tau_1<\cdots<\tau_n=T$. Then, $p_{t}+ \sum \limits_{i=0}^{n-2} \theta_{\tau_{i}} \Delta S_{\tau_{i+1}}$ is a price for the zero claim in the  two time steps model $(S_{\tau_i})_{i=n-1,n}$. As $0=\pi_{\tau_{n-1},\tau_n}(S_{\tau_{n-1}},S_{\tau_n})$, we get that $p_{t}+ \sum \limits_{i=0}^{n-2} \theta_{\tau_{i}} \Delta S_{\tau_{i+1}}\ge 0$. By induction, we finally deduce that $p_t\ge 0$, i.e. AIP holds.
\end{proof}\smallskip

We know reformulate the proposition above when $d=1$ in term of sub-maxingales, see \cite{BCJ}.

\begin{definition}We say that a continuous-time process $M=(M_{t})_{t\le T}$ adapted to the filtration $(\cF_t)_{t\in [0,T]}$ is a sub-maxingale (resp. super-maxingale)  if, for any $u,t\in [0,T]$ such that $u\le t$, we have $\esssup_{\cF_u}M_t\ge M_u$ (resp. we have $\esssup_{\cF_u}M_t\le M_u$). Moreover, $M$ is said a maxingale if it is both a super-maxingale and a sub-maxingale.
\end{definition}

Note that the notion of maxingale is an adaptation of the martingale concept to the conditional supremum operator. Observe that, for a super-maxingale $M$, $\esssup_{\cF_u}M_t\le M_u$ implies that $M_u\ge M_t$ and we deduce that the  super-maxingales coincide with the non  increasing processes.

\begin{definition}We say that a continuous-time process $M=(M_{t})_{t\le T}$ adapted to the filtration $(\cF_t)_{t\in [0,T]}$ is a strong sub-maxingale if, for any $\tau\in \cT_{0,T}$, the stopped process $M^{\tau}$ is a sub-maxingale.
\end{definition}

An open issue is whether a  sub-maxingale may be a strong sub-maxingale. When the operator is the conditional expectation, the Doob's stopping Theorem \cite{JS} states that this is the case, at least when $M$ is bounded from above by a martingale, see \cite[Theorem 1.39]{JS}. By Lemma \ref{maxingale5}, we have:

\begin{prop}\label{S-SM} Let $M=(M_{t})_{t\le T}$  be a  right-continuous continuous-time process adapted to the filtration $(\cF_t)_{t\in [0,T]}$. Then, $M$ is a strong sub-maxingale if and only if for all stopping times $ \tau,S \in \cT_{0,T}$,  $ \esssup_{\cF_{S}}(M_{\tau }) \ge M_{ S\wedge \tau }$.\end{prop}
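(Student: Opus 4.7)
The plan is to prove the two directions separately, with the reverse direction following from a direct substitution and the forward direction being an immediate application of Lemma \ref{maxingale5}.

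For the reverse direction, I would fix an arbitrary stopping time $\sigma \in \cT_{0,T}$ and deterministic times $u \le t$ in $[0,T]$, then apply the hypothesis with $S := u$ (viewed as a constant stopping time) and $\tau := t \wedge \sigma$ (which is a stopping time). Since $u \le t$, one has $u \wedge (t \wedge \sigma) = u \wedge \sigma$, so the hypothesis reads $\esssup_{\cF_u}(M_{t \wedge \sigma}) \ge M_{u \wedge \sigma}$, i.e.\ $\esssup_{\cF_u}(M^\sigma_t) \ge M^\sigma_u$. This is precisely the sub-maxingale inequality for $M^\sigma$ at the deterministic pair $u \le t$. Since $\sigma$ was arbitrary, $M$ is a strong sub-maxingale.

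For the forward direction, I would fix arbitrary stopping times $S, \tau \in \cT_{0,T}$ and consider the stopped process $N := M^\tau$. By the strong sub-maxingale hypothesis, $N$ is a sub-maxingale, and it inherits right-continuity from $M$. Applying Lemma \ref{maxingale5} to the right-continuous sub-maxingale $N$ with the stopping times $S$ and $\tau$ yields $\esssup_{\cF_S}(N_\tau) \ge N_{S \wedge \tau}$. Unwinding the stopping, $N_\tau = M_{\tau \wedge \tau} = M_\tau$ and $N_{S \wedge \tau} = M_{(S \wedge \tau) \wedge \tau} = M_{S \wedge \tau}$, which gives the desired inequality $\esssup_{\cF_S}(M_\tau) \ge M_{S \wedge \tau}$.

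The main obstacle is the application of Lemma \ref{maxingale5}: its role is to extend the sub-maxingale inequality from the defining deterministic pairs $u \le t$ to arbitrary pairs of stopping times. This is the analogue, in the conditional-supremum setting, of Doob's optional sampling theorem, and right-continuity of $M$ should be the essential hypothesis making the extension legitimate. Once that lemma is granted, the proof above reduces to a choice of stopping times and an unwinding of the stopped-process notation.
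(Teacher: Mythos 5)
Your overall strategy is the same as the paper's: the reverse direction by direct substitution (the paper dismisses it as ``immediate''; your instantiation $S:=u$, $\tau:=t\wedge\sigma$ with $u\le t$ is correct and complete), and the forward direction by stopping the process at $\tau$ and invoking Lemma \ref{maxingale5}. However, your forward direction as written misapplies that lemma. Lemma \ref{maxingale5} is \emph{not} stated for arbitrary pairs of stopping times: it requires the stopping time in the first slot (the one at which $M$ is evaluated, called $\tau$ in the lemma) to have \emph{finite range}; only $S$ is allowed to be general. You apply it to $N=M^{\tau}$ with the pair $(S,\tau)$ where $\tau$ is an arbitrary stopping time, so the hypothesis ``$\tau(\Omega)$ is a finite set'' is not satisfied, and your closing remark that the lemma extends the inequality ``to arbitrary pairs of stopping times'' overstates what has actually been proved in the appendix.

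The repair is exactly the paper's move: apply Lemma \ref{maxingale5} to the right-continuous sub-maxingale $N=M^{\tau}$ with the \emph{deterministic} terminal time $T$ (whose range $\{T\}$ is trivially finite) in the first slot and $S$ in the second. This gives $\esssup_{\cF_S}(N_T)\ge N_{S\wedge T}$, and since $N_T=M_{\tau\wedge T}=M_{\tau}$ and $N_{S\wedge T}=M_{\tau\wedge S}$, this is literally the inequality you want. So the gap is a one-line fix rather than a wrong approach, but as invoked your application of the lemma is illegitimate; the whole point of first stopping at $\tau$ is precisely to reduce the general-$\tau$ case to an evaluation at the finite-range time $T$, and your write-up bypasses that reduction.
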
 
\begin{proof}
Suppose that $M$ is a strong sub-maxingale. Let $S,\tau \in \cT_{0,T}$. As $S^{\tau}$ is a sub-maxingale, we apply Lemma \ref{maxingale5} with the stopping time $S$ and the deterministic stopping time $T$. We get that $ \esssup_{\cF_{S}}(M_{\tau\wedge T })\ge M_{\tau \wedge S\wedge T}$, i.e. 
 $ \esssup_{\cF_{S}}(M_{\tau })\ge M_{\tau \wedge S}$. The reverse implication is immediate. \end{proof}

\begin{prop} Suppose that $d=1$ and the discrete-time portfolio processes are given by (\ref{Elmt2}). The following statements are equivalent:
\begin{itemize}  
\item [1.)]AIP condition holds in discrete-time.\smallskip

\item [2.)] We have $S_{\tau_1}\in\left[\essinf_{\cF_{\tau_1}}(S_{\tau_2}) , \esssup_{\cF_{\tau_1}}(S_{\tau_2})\right]$, for all $\tau_1, \tau_2 \in \cT_{0,T}$ such that  $ \tau_1\le  \tau_2$. \smallskip

\item [3.)] $S$ and $-S$ are strong sub-maxingales. 
\end{itemize}

\end{prop}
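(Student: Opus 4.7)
The plan is to establish $(1)\Leftrightarrow(2)$ via Proposition~\ref{Prop-AIP-SP} and $(2)\Leftrightarrow(3)$ via Proposition~\ref{S-SM}. For $(1)\Leftrightarrow(2)$, I use that in dimension $d=1$ the $\cF_{\tau_1}$-measurable conditional support ${\rm supp}_{\cF_{\tau_1}}(S_{\tau_2})$ is (a.s.) a closed subset of $\Reel$ whose convex envelop is precisely the closed random interval $[\essinf_{\cF_{\tau_1}}(S_{\tau_2}),\,\esssup_{\cF_{\tau_1}}(S_{\tau_2})]$. Hence the membership condition of Proposition~\ref{Prop-AIP-SP} reduces to exactly (2), so its equivalence with AIP is immediate.

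For $(3)\Rightarrow(2)$, given stopping times $\tau_1\le \tau_2$ I apply Proposition~\ref{S-SM} twice: once to the strong sub-maxingale $S$ with the choice $S:=\tau_1,\ \tau:=\tau_2$ (so that $S\wedge\tau=\tau_1$), which yields $\esssup_{\cF_{\tau_1}}(S_{\tau_2})\ge S_{\tau_1}$; and once to $-S$, which rewrites as $\essinf_{\cF_{\tau_1}}(S_{\tau_2})\le S_{\tau_1}$.

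For $(2)\Rightarrow(3)$, which is the main work, I need to check that for arbitrary stopping times $\sigma,\tau\in\cT_{0,T}$,
$\essinf_{\cF_\sigma}(S_\tau)\le S_{\sigma\wedge\tau}\le \esssup_{\cF_\sigma}(S_\tau)$,
so that Proposition~\ref{S-SM} applied to $S$ and $-S$ concludes. On $A^c:=\{\sigma>\tau\}$ this is trivial since $S_{\sigma\wedge\tau}=S_\tau$. On $A:=\{\sigma\le \tau\}\in\cF_\sigma$, one has $S_{\sigma\wedge\tau}=S_\sigma$; the idea is to apply (2) to the pair $\tau_1:=\sigma\le \tau_2:=\sigma\vee \tau$, obtaining $S_\sigma\le \esssup_{\cF_\sigma}(S_{\sigma\vee\tau})$, and then to transfer this bound to $\esssup_{\cF_\sigma}(S_\tau)$ by localization: for any $\cF_\sigma$-measurable $Y\ge S_\tau$, the variable $Y':=Y\mathbf{1}_A+S_\sigma\mathbf{1}_{A^c}$ is $\cF_\sigma$-measurable and dominates $S_{\sigma\vee\tau}$ (equal to $S_\tau$ on $A$, to $S_\sigma$ on $A^c$); hence $Y'\ge \esssup_{\cF_\sigma}(S_{\sigma\vee\tau})\ge S_\sigma$, and restricting to $A$ yields $Y\ge S_\sigma$. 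Choosing $Y:=\esssup_{\cF_\sigma}(S_\tau)$ delivers the upper bound, and the symmetric argument with $-S$ gives the lower bound. The hard point is precisely this last step: the naive attempt of applying (2) with $\tau_1:=\sigma\wedge\tau,\ \tau_2:=\tau$ only produces the inequality with $\cF_{\sigma\wedge\tau}$ in place of $\cF_\sigma$, which goes in the wrong direction since $\cF_{\sigma\wedge\tau}\subseteq\cF_\sigma$ forces $\esssup_{\cF_\sigma}(S_\tau)\le \esssup_{\cF_{\sigma\wedge\tau}}(S_\tau)$; the detour through $\sigma\vee\tau$ together with the $\cF_\sigma$-measurable splitting along $A=\{\sigma\le \tau\}$ is what converts (2) into the required bound.
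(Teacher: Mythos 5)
Your proof is correct, and its skeleton coincides with the paper's: $(1)\Leftrightarrow(2)$ comes from Proposition \ref{Prop-AIP-SP} together with the observation that for $d=1$ the convex envelope of the closed conditional support is the interval $\left[\essinf_{\cF_{\tau_1}}(S_{\tau_2}),\esssup_{\cF_{\tau_1}}(S_{\tau_2})\right]$, and $(3)\Rightarrow(2)$ is Proposition \ref{S-SM} applied to $S$ and $-S$, exactly as in the paper. The one place where you genuinely diverge is $(2)\Rightarrow(3)$. The paper fixes a stopping time $\tau$ and two deterministic dates $t_1<t_2$, splits on $A=\{\tau\wedge t_2<t_1\}\in\cF_{t_1}$, and applies $(2)$ to the ordered pair $\bigl(t_1,(t_2\wedge\tau)\vee t_1\bigr)$, thereby verifying the definition of a strong sub-maxingale directly. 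You instead prove the stronger two-stopping-time inequality $\esssup_{\cF_\sigma}(S_\tau)\ge S_{\sigma\wedge\tau}$ for arbitrary $\sigma,\tau\in\cT_{0,T}$ by splitting on $\{\sigma\le\tau\}\in\cF_\sigma$ and detouring through the pair $(\sigma,\sigma\vee\tau)$, then invoke the easy converse direction of Proposition \ref{S-SM}. Both arguments rest on the same trick (localize on the $\cF$-measurable crossing event and lift to an ordered pair of times so that $(2)$ is applicable), but your version delivers the stopping-time form of the inequality in one stroke and explicitly diagnoses why the naive choice $\tau_1=\sigma\wedge\tau$, $\tau_2=\tau$ fails (the conditioning $\sigma$-algebra $\cF_{\sigma\wedge\tau}\subseteq\cF_\sigma$ yields an inequality in the wrong direction), which is a useful clarification; the small price is that you must check that $\sigma\vee\tau$ is a stopping time and that $\{\sigma\le\tau\}\in\cF_\sigma$, facts the paper's deterministic-date version sidesteps.
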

\begin{proof}
Suppose that AIP holds. Condition AIP for the    discrete-time portfolios of  (\ref{Elmt2}) implies the statement 2.) by Proposition \ref{Prop-AIP-SP}. In particular,  $S$ and $-S$ are  sub-maxingales and, for any $t\in [0,T]$ and  $\tau \in \cT_{0,T}$ such that $\tau\ge t$ a.s., we have by 2.) the inequality
\bea \label{IneqSubMax} \esssup_{\cF_t}S_{\tau}\ge S_{t}.\eea For fixed $\tau \in \cT_{0,T}$, we deduce that $S^{\tau}$ is a sub-maxingale. To see it, consider $t_1<t_2\le T$. On the set $A=\{\tau\wedge t_2<t_1\}\in \cF_{t_1}$, we have 
$$1_{A}\esssup_{\cF_{t_1}}S_{t_2}^{\tau}=1_{A}\esssup_{\cF_{t_1}}S_{t_2\wedge\tau\wedge t_1}=1_{A}S_{\tau\wedge t_1}.$$
On $B=\Omega\setminus A$, as $(t_2\wedge\tau)\vee t_1\ge t_1$, we deduce from (\ref{IneqSubMax})  that 
$$1_{B}\esssup_{\cF_{t_1}}S_{t_2}^{\tau}=1_{B}\esssup_{\cF_{t_1}}S_{(t_2\wedge\tau)\vee t_1}\ge 1_{B}S_{t_1}=1_{B}S_{t_1\wedge\tau}.$$
Therefore, we conclude that $\esssup_{\cF_{t_1}}S_{t_2}^{\tau}\ge S_{t_1}^{\tau}$ and, finally, $S$ is a strong sub-maxingale.  By the same reasoning,  $-S$ is also a strong sub-maxingale. Therefore, 1.) implies 2.), which implies 3.). Moreover, 3.) implies 2.) by  Proposition \ref{S-SM}. At last, if 2.) holds, we conclude that 1.) holds by Proposition \ref{Prop-AIP-SP}.
\end{proof}

\subsection{The AIP condition for continuous-time portfolio processes}\label{AIP-continuous}

In  this section, we consider  topologies $(\cO_t)_{t\in [0,T]}$ such that $\cV_{t,T}^c=\cV_{t,T}^c(\cO_t)$ for all $t\le T$, and such that the AIP condition in continuous time and  in discrete time are equivalent, as stated in our main Theorem \ref{MR}. Precisely, we consider for any time $t\le T$, the topology on $L^0(\R,\cF_T)$ induced by the pseudo-distance:

\bea \label{PDIST} \hat d^+_t (X,Y)=E(\esssup_{\cF_t}(X-Y)^+\wedge  1  ),\quad X,Y \in  L^0( \Reel,\cF_{T}).\eea We send the readers to  Section \ref{PseudoDistance} for the  definition and the main properties of a pseudo-distance topology.\smallskip

We notice that a sequence of discrete-time portfolios $(V_{t,T}^n)_{n\ge 1}$ of $\cV_{t,T}$ is convergent in $\cO_t$ if and only if $\inf_{n\ge 1}V_{t,T}^n>-\infty$ a.s., see Proposition \ref{CriteriaConv}. So, $\cV_{t,T}^c=\cV_{t,T}^c(\cO_t)$ is an a priori large class of so-called continuous-time portfolios. In particular, if $(V_{t,T}^n)_{n\ge 1}$ is a  sequence of usual stochastic integrals that converge to some stochastic integral $\cI_{t,T}(\theta)$, then the convergence holds in probability hence so does in $\cO_t$ by Proposition \ref{CriteriaConv}. Any limit $V_{t,T}^c\in \cV_{t,T}^c$ satisfies $V_{t,T}^c\le \cI_{t,T}(\theta)$ by Proposition \ref{liminf} but $\cI_{t,T}$ does not necessarily belong to $V_{t,T}^c$. This means that  $\cI_{t,T}$ cannot necessarily be super-hedged asymptotically by simple strategies.

 Let us give a financial interpretation of the convergence in $\cO_t$. By Proposition \ref{set-of-limits}, $V_{t,T}^n$ converges to $V_{t,T}^c\in \cV_{t,T}^c$ if $V_{t,T}^c\le V_{t,T}^n +\alpha_t^n$ for all $n\ge 1$, where $\alpha_t^n\in L^0(\Reel_+,\cF_t)$ converges to $0$ in probability. Therefore, it is possible to reach (actually super-replicates) the continuous-time portfolio value $V_{t,T}^c$ from discrete-time portfolios up to an arbitrary small error. This is why we believe that this topology is well adapted to finance. By Proposition \ref{liminf}, Proposition \ref{linear+} and Proposition \ref{set-of-limits}, we obtain that $\cO_t$ satisfies the Fatou property, is $\cF_t$-positively homogeneous and is  $\cF_t$-low bound preserving. This implies that the NFL and the NFLVR conditions in discrete-time and continuous-time are equivalent as stated in Section \ref{NFL-VR} for these pseudo-distance topologies. We 
also have:

\begin{lemm}Suppose that, for any $t\le T$, $\cO_t$ is the pseudo-distance topology defined by (\ref{PDIST}) and $\cV_{t,T}^c=\cV_{t,T}^c(\cO_t)$. Then, the NFLVR condition in continuous-time is equivalent to the NA condition  $\cA_{t,T}^{c}\cap L^{0}(\Reel_+,\cF_T)=\{0\}$ in continuous-time, for all $t\le T$.
\end{lemm}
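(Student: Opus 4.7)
The plan is to prove both implications directly, using the characterization from Proposition \ref{set-of-limits}: a sequence $(X^n)$ converges to $X$ in $\cO_t$ iff one can write $X\le X^n+\alpha_n$ for some $\alpha_n\in L^0(\Reel_+,\cF_t)$ with $\alpha_n\to 0$ in probability. This super-replication description of $\cO_t$-convergence is the natural tool here.

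The direction NFLVR $\Rightarrow$ NA is immediate. Given $f\in\cA_{t,T}^c\cap L^0(\Reel_+,\cF_T)$, I decompose $f=V^c-g$ with $V^c\in\cV_{t,T}^c$ and $g\in L^0(\Reel_+,\cF_T)$. For each integer $n$, the truncation satisfies $f\wedge n=V^c-(g+f-f\wedge n)$ with $g+f-f\wedge n\ge 0$, so $f\wedge n\in\cA_{t,T}^{c,\infty}\cap L^\infty(\Reel_+,\cF_T)\subseteq\overline{\cA}_{t,T}^{c,\infty}\cap L^\infty(\Reel_+,\cF_T)$. NFLVR forces $f\wedge n=0$ for every $n$, hence $f=0$.

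For NA $\Rightarrow$ NFLVR, I fix $f\in\overline{\cA}_{t,T}^{c,\infty}\cap L^\infty(\Reel_+,\cF_T)$ and pick $f_n\in\cA_{t,T}^{c,\infty}$ with $\|f_n-f\|_\infty\to 0$. Writing $f_n=V_n^c-g_n$ with $V_n^c\in\cV_{t,T}^c$ and $g_n\ge 0$ gives $f\le V_n^c+\|f_n-f\|_\infty$, and Proposition \ref{set-of-limits} then yields $V_n^c\to f$ in $\cO_t$. Each $V_n^c$ is itself an $\cO_t$-limit of some $(V_{n,k})_{k\ge 1}\subset\cV_{t,T}$, i.e.\ $V_n^c\le V_{n,k}+\gamma_{n,k}$ with $\gamma_{n,k}\to 0$ in probability as $k\to\infty$. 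I would then diagonalise by choosing $k_n$ such that $\bP(\gamma_{n,k_n}>1/n)<1/n$ and setting $\tilde V_n:=V_{n,k_n}\in\cV_{t,T}$; combining the two super-replication inequalities gives $f\le\tilde V_n+\delta_n$ with $\delta_n:=\gamma_{n,k_n}+\|f_n-f\|_\infty\to 0$ in probability. Hence $\tilde V_n\to f$ in $\cO_t$, so $f\in\cV_{t,T}^c\subseteq\cA_{t,T}^c$, and NA yields $f=0$.

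The delicate step is this diagonal extraction: $\cV_{t,T}^c$ is defined as the set of $\cO_t$-limits of sequences in $\cV_{t,T}$ and is not a priori $\cO_t$-closed. What makes the argument work is the asymmetric, super-replication nature of the pseudo-distance $\hat d_t^+$, which allows two approximate super-replication inequalities to be composed into a single one and so collapses the double limit into a single $\cO_t$-convergent sequence of simple portfolios.
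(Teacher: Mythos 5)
Your proof is correct and follows essentially the same route as the paper's: the truncation step for NFLVR $\Rightarrow$ NA is the paper's observation that $\cA_{t,T}^{c}-L^0(\Reel_+,\cF_T)\subseteq \cA_{t,T}^{c}$, and your diagonal extraction is precisely the verification of the paper's terser claim that $\cA_{t,T}^{c,\infty}$ is $\|\cdot\|_\infty$-closed by Proposition \ref{set-of-limits}. No gaps; you merely make explicit a step the paper leaves implicit.
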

\begin{proof}
Notice that by Proposition \ref{set-of-limits}, $\cA_{t,T}^{c,\infty}$ is closed in $L^{\infty}$ hence we have  $\overline{\cA}_{t,T}^{c,\infty}=\cA_{t,T}^{c,\infty}$ and NFLVR reads as $\cA_{t,T}^{c}\cap L^{\infty}(\Reel_+,\cF_T)=\{0\}$, which is equivalent to the NA condition as $\cA_{t,T}^{c}-L^0(\Reel_+,\cF_T)\subseteq \cA_{t,T}^{c}$.
\end{proof}

The main result of this section is the following:

\begin{theo}\label{MR} Suppose that, for any $t\le T$, $\cO_t$ is the pseudo-distance topology defined by (\ref{PDIST}) and $\cV_{t,T}^c=\cV_{t,T}^c(\cO_t)$. Then, AIP holds in continuous time  if and only if AIP holds in discrete time.
\end{theo}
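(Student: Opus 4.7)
The plan is to prove the two implications separately. One direction is essentially free from $\cV_{t,T} \subseteq \cV_{t,T}^c$, while the converse reduces, via the representation of limits in the pseudo-distance topology $\cO_t$, to a one-line monotonicity computation on the conditional essential infimum.

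For the easy direction, I first observe that every $V_{t,T} \in \cV_{t,T}$ is the limit in $\cO_t$ of the constant sequence $V_{t,T}^n := V_{t,T}$, so $\cV_{t,T} \subseteq \cV_{t,T}^c$ and hence $\cA_{t,T} \subseteq \cA_{t,T}^c$. Therefore AIP in continuous time at a given time $t$ immediately implies AIP in discrete time at the same $t$. For the converse, assume AIP holds in discrete time, fix $t \le T$, and take an arbitrary $V_{t,T}^c \in \cV_{t,T}^c$. By the definition of $\cV_{t,T}^c(\cO_t)$ there exists a sequence $(V_{t,T}^n)_{n\ge 1} \subset \cV_{t,T}$ converging to $V_{t,T}^c$ in $\cO_t$, and by the characterization of $\cO_t$-limits (Proposition \ref{set-of-limits}, which is exactly the financial interpretation recalled in the paragraph preceding the theorem) I may choose $\alpha_t^n \in L^0(\Reel_+, \cF_t)$ with $\alpha_t^n \to 0$ in probability such that
\[
V_{t,T}^c \le V_{t,T}^n + \alpha_t^n \quad \text{a.s., for every } n \ge 1.
\]
Applying $\essinf_{\cF_t}$, using its monotonicity and the fact that $\alpha_t^n$ is $\cF_t$-measurable (so it pulls out of the conditional essential infimum), yields
\[
\essinf_{\cF_t}(V_{t,T}^c) \le \essinf_{\cF_t}(V_{t,T}^n) + \alpha_t^n.
\]
Lemma \ref{LemmAIP} applied to the discrete-time AIP at time $t$ gives $\essinf_{\cF_t}(V_{t,T}^n) \le 0$ for every $n$, hence $\essinf_{\cF_t}(V_{t,T}^c) \le \alpha_t^n$. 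Extracting a subsequence along which $\alpha_t^n \to 0$ a.s., I conclude that $\essinf_{\cF_t}(V_{t,T}^c) \le 0$ a.s. As $V_{t,T}^c$ was arbitrary, another application of Lemma \ref{LemmAIP} delivers AIP in continuous time at $t$, and since $t$ was arbitrary, AIP in continuous time holds.

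The step I expect to be the main obstacle is the appeal to Proposition \ref{set-of-limits}: one must be certain that convergence in $\cO_t$ really does yield \emph{$\cF_t$-measurable} nonnegative dominating errors $\alpha_t^n$ going to zero in probability, rather than merely some random errors converging in a weaker mode; this is exactly what makes the pseudo-distance $\hat d^+_t$ of \eqref{PDIST} well-adapted to the AIP setting, since a natural candidate is $\alpha_t^n := \esssup_{\cF_t}(V_{t,T}^c - V_{t,T}^n)^+$, which is $\cF_t$-measurable by construction and whose convergence to $0$ in probability is equivalent to $\hat d^+_t(V_{t,T}^c, V_{t,T}^n) \to 0$. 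Once that characterization is in place, the proof is essentially the two displayed inequalities above followed by a subsequence argument, and no further structure beyond discrete-time AIP and the basic properties of $\essinf_{\cF_t}$ is needed.
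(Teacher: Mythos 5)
Your proof is correct and follows essentially the same route as the paper: the reverse implication via $\cV_{t,T}\subseteq\cV_{t,T}^c$, and the forward implication by combining Proposition \ref{set-of-limits} with Lemma \ref{LemmAIP} and passing to the limit in $\essinf_{\cF_t}(V_{t,T}^c)\le\alpha_t^n$. Your version is slightly more explicit than the paper's (spelling out the easy direction and the a.s.\ subsequence extraction), but the argument is the same.
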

\begin{proof} It suffices to prove that AIP holds in continuous time if it holds in discrete time. By Lemma \ref{LemmAIP}, we have 
$\essinf_{F_{t}} ( v_{t,T})\leqslant 0$ for all $v_{t,T}\in \cV_{t,T}$. We have to show the same for $v_{t,T}^c\in \cV_{t,T}^c$. By Proposition \ref{set-of-limits}, $V_{t,T}^c\le V_{t,T}^n +\alpha_t^n$ for all $n\ge 1$, where $\alpha_t^n\in L^0(\Reel_+,\cF_t)$ converges to $0$ in probability and $V_{t,T}^n\in \cV_{t,T}$.  As $\alpha_t^n$ is $\cF_t$-measurable, we deduce that 
$$\essinf_{F_{t}} V_{t,T}^c\le \essinf_{F_{t}} V_{t,T}^n +\alpha_t^n\le \alpha_t^n.$$
As $n\to +\infty$, we deduce that $\essinf_{F_{t}} V_{t,T}^c\le 0$ hence AIP holds in continuous time by Lemma \ref{LemmAIP}.
\end{proof}

\section{The NUPBR no-arbitrage condition}

The No Unbounded Profit with Bounded Risk no-arbitrage condition NUPBR has been introduced in \cite{KK}. In our setting, this condition may be adapted if we only consider admissible portfolios. This is why, we suppose that the portfolio processes are generated by an operator $\cI$ as in Remark \ref{Operator}. We define for $m\in (0,\infty)$, $^{a}\cV_{t,T}(m)$ (resp. $^{a}\cV_{t,T}^c(m)$ in continuous time) the set of all admissible portfolio values $V_{t,T}=\cI(\theta)\in ^{a}\!\!\cV_{t,T}$ such that  $V_{t,u}=\cI_u(\theta)\ge -m$ for all $u\in[t,T]$.  

Let us define, for every $t\in [0,T]$, the space ${\rm SP}(\R,(\cF_u)_{u\in [t,T]})$   of all $(\cF_u)_{u\in [t,T]}$-adapted real-valued stochastic processes on $[t,T]$. We  consider the family of topologies $(\cO_t)_{t\in [0,T]}$ such that, for every $t\in [0,T]$, $\cO_t$ is the topology on  ${\rm SP}(\R,(\cF_u)_{u\in [t,T]})$ which is  induced by the pseudo-distance:
\bea   &&\hat d^+_t (X,Y)=E(\esssup_{u\in [t,T]}\esssup_{\cF_t}(X_u-Y_u)^+\wedge  1  ),\label{PDIST1}\\
&& X,Y \in  {\rm SP}(\R,(\cF_u)_{u\in [t,T]}).\nonumber\quad \quad\eea 

By the same reasoning as in the proof of Proposition \ref{set-of-limits}, a sequence $(X^n)_{n\ge 1}\in {\rm SP}(\R,(\cF_u)_{u\in [t,T]})$ converges to $X\in {\rm SP}(\R,(\cF_t)_{u\in [u,T]})$ in  $\cO_t$ if and only if there exists a sequence $(\alpha_t ^n)_{n\ge 1}$ such that $\alpha_t ^n$ tends to $0$ in probability as $n\to \infty$ and $X_u\le X^n_u+\alpha_t ^n$ for all $u\in [t,T]$. Moreover, adapting the Proposition \ref{CriteriaConv}, we may show that a sequence $(X^n)_{n\ge 1}\in {\rm SP}(\R,(\cF_u)_{u\in [t,T]})$ is convergent in $\cO_t$ if and only if $\inf_n X^n_u>-\infty$ a.s. for all $u\in [t,T]$. \smallskip

With $(\cO_t)_{t\in [0,T]}$ given by (\ref{PDIST1}), we define $\cV_{t,T}^c$ as the terminal values $V_{t,T}^c(T)$ of limit processes $V_{t,T}^c$ such that $V_{t,T}^c=\lim_n V_{t,T}^n$ 
where $V_{t,T}^n=(V_{t,T}^n(u))_{u\in [t,T]}$ are the discrete time processes associated to $\cV_{t,T}$, see Remark \ref{Operator}.

\begin{definition} We say that  NUPBR holds in discrete time (resp. in continuous time) at time $t\le T$ if, for any $m>0$, $^{a}\cV_{t,T}(m)$ (resp. $
^{a}\cV_{t,T}^c(m)$) is bounded in probability. We say that NUPBR holds if it holds at any time.
\end{definition}

 Recall that a sequence $(X^n)_{n\ge 0}$ of random variables is bounded in probability if, for all $\epsilon>0$, there exists $n_0\ge 1$ and $M>0$ such that, for all $n\ge n_0$, $P(\vert X^n\vert>M)\le \epsilon$ . More generally, a set $C\subseteq L^0(\R,\cF_T)$ is bounded in probability if any sequence $(X^n)_{n\ge 0}$ of $C$ is bounded in probability.\smallskip

In the setting of semimartingales, it is shown in \cite{KK} that NUPBR + NA, i.e. $\cV_{0,T}\cap L^0(\Reel^+,\cF_T)=\{0\}$, is equivalent to NFLVR. In particular, NUPBR alone does not necessarily implies NA. This is due to the fact that a portfolio $V_{t,T}\in \cV_{t,T}$ such that $V_{t,T}\ge 0$ is not necessary admissible. Otherwise, if $V_{t,T}$ is admissible, then by \cite[Theorem 3.12]{KK}, we get that $V_{t,T}(u)\ge 0$ for all $u\in [t,T]$ by the super-martingale property. Then, necessary $V_{t,T}=0$, i.e. NA would hold since, otherwise, the sequence $V_{t,T}^n=nV_{t,T}$, $n\ge 1$, is unbounded in probability.  In conclusion, NUPBR holds at time $t$ in continuous time (resp. in discrete time) implies NA (and so AIP) at time $t$ only for the restricted sets $^a\cV_{t,T}^c$ and $^a\cV_{t,T}$ respectively.\smallskip

 Our main result of this section is the following. Before, we recall a definition:
 
 \begin{definition} We say that a subset $\Gamma$ of $L^0(\R,\cF_T)$ is infinitely \\ $\cF_t$-decomposable (resp. $\cF_t$-decomposable) if for any partition of $\Omega$ (resp. finite partition) by elements $(F_t^n)_{n=1}^{\infty}$ of $\cF_t$ and any sequence $(X^n)_{n\ge 1}$ of $\Gamma$, we have $\sum_{n=1}^\infty X^n1_{F_t^n}\in \Gamma$. 
\end{definition}

\begin{theo}\label{SNUPBR} Suppose that, for  $t\le T$, $\cO_t$ is the pseudo-distance topology defined by (\ref{PDIST1}) and $\cV_{t,T}^c=\cV_{t,T}^c(\cO_t)$. Suppose that  $\cV_{t,T}$ is infinitely $\cF_t$-decomposable. Then,  NUPBR holds in discrete time if and only if it holds in continuous time.
\end{theo}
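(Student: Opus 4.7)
The plan is to prove the two implications separately, with the continuous-to-discrete direction being straightforward and the discrete-to-continuous direction being where the decomposability hypothesis plays a decisive role.

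For the easy direction, I would note that every $V \in \cV_{t,T}$ is the $\cO_t$-limit of the constant process sequence $V^n := V$ (the pseudo-distance $\hat d^+_t(V,V)$ vanishes), so $\cV_{t,T}\subseteq \cV_{t,T}^c$ and the inclusion preserves the admissibility bound: $^a\cV_{t,T}(m)\subseteq {}^a\cV_{t,T}^c(m)$. Hence boundedness in probability of the larger set forces boundedness in probability of the smaller one.

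For the hard direction, fix $m>0$ and an arbitrary sequence $(V^{c,n})_{n\ge 1}\subseteq {}^a\cV_{t,T}^c(m)$. By the characterization of $\cO_t$-convergence for processes recalled just before the theorem, for each $n$ there exist $(V^{n,k})_{k\ge 1}\subseteq \cV_{t,T}$ (viewed as processes) and $\cF_t$-measurable $\alpha_t^{n,k}\ge 0$ with $\alpha_t^{n,k}\to 0$ in probability as $k\to\infty$ and $V^{c,n}(u)\le V^{n,k}(u)+\alpha_t^{n,k}$ for every $u\in [t,T]$. I would choose $k_n$ so that $A_n:=\{\alpha_t^{n,k_n}\le 1\}\in \cF_t$ satisfies $P(A_n^c)\le 2^{-n}$. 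Admissibility of $V^{c,n}$ then gives $V^{n,k_n}(u)\ge V^{c,n}(u)-\alpha_t^{n,k_n}\ge -(m+1)$ on $A_n$ for every $u$. The key step is to invoke the infinite $\cF_t$-decomposability of $\cV_{t,T}$: pasting $V^{n,k_n}$ on $A_n$ with the zero portfolio on $A_n^c$ along the partition $\{A_n,A_n^c\}$ yields $\tilde V^n:= V^{n,k_n}1_{A_n}\in \cV_{t,T}$, and by construction $\tilde V^n(u)\ge -(m+1)$ for all $u$, i.e.\ $\tilde V^n\in {}^a\cV_{t,T}(m+1)$.

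By NUPBR in discrete time, $(\tilde V^n(T))_{n\ge 1}$ is bounded in probability. Combining $V^{c,n}(T)\le \tilde V^n(T)+1$ on $A_n$ with the estimate $P(A_n^c)\le 2^{-n}$ and the lower bound $V^{c,n}(T)\ge -m$, I obtain, for every $M>m$, the inequality $P(|V^{c,n}(T)|>M+1)\le P(\tilde V^n(T)>M)+2^{-n}$, from which boundedness in probability of $(V^{c,n}(T))_{n\ge 1}$ follows, establishing NUPBR in continuous time. The main obstacle is the mismatch between the \emph{constant} admissibility bound required in the definition of NUPBR and the \emph{random} bound $m+\alpha_t^{n,k}$ naturally inherited by the approximating discrete-time processes; the infinite $\cF_t$-decomposability is precisely the hypothesis that allows the $\cF_t$-measurable truncation along $A_n$ to remain inside $\cV_{t,T}$, thereby converting random admissibility into true discrete-time admissibility at a slightly larger level.
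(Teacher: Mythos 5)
Your proposal is correct in substance and follows the same overall strategy as the paper: approximate each $V^{c,n}$ by discrete-time processes $V^{n,k}$ with $\cF_t$-measurable errors $\alpha^{n,k}_t\to 0$ (Proposition \ref{set-of-limits}), force the error to be at most $1$ so that the approximants land in $^{a}\cV_{t,T}(m+1)$, and transfer boundedness in probability; the easy inclusion and the final probability estimate are both fine. The one place you diverge from the paper is the selection step, and it is exactly there that your argument relies on an assumption the theorem does not state: to conclude $\tilde V^n=V^{n,k_n}1_{A_n}\in\cV_{t,T}$ from decomposability along the partition $\{A_n,A_n^c\}$ you must paste $V^{n,k_n}$ with an element of $\cV_{t,T}$ vanishing on $A_n^c$, i.e. you need $0\in\cV_{t,T}$. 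That is harmless in every model of the paper (take the zero strategy in Remark \ref{Operator}), but it is not among the hypotheses. The paper's proof avoids it: after passing to a subsequence so that $\alpha^{n,k}_t\to 0$ a.s., one defines an integer-valued $\cF_t$-measurable index $m^n_t$ with $\alpha^{n,m^n_t}_t\le 1$ a.s., and then $V^{n,m^n_t}=\sum_j V^{n,j}1_{\{m^n_t=j\}}$ belongs to $\cV_{t,T}$ precisely by \emph{infinite} $\cF_t$-decomposability applied to the countable partition $\{m^n_t=j\}_{j\ge 1}$. This is the only place that hypothesis is used, and it removes both the need for a zero portfolio and the exceptional sets $A_n^c$ that you must carry through the final estimate. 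Note also that your variant only ever invokes finite decomposability, so what you have actually proved is the theorem under ``finitely $\cF_t$-decomposable and $0\in\cV_{t,T}$''; either state that extra assumption explicitly or switch to the random-index selection to match the hypotheses as written.
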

\begin{proof} It suffices to show that NUPBR holds in continuous time if it holds  in discrete time. To do so, suppose that $
^{a}\cV_{t,T}^c(m)$ is not bounded in probability for some $m>0$. Then, there exists a sequence $(V_{t,T}^{c,n})_{n\ge 1}\in 
^{a}\!\!\!\cV_{t,T}^c(m)$ and $\epsilon\in (0,1)$ such that $P(V_{t,T}^{c,n}>n)>\epsilon$ for all $n\ge 1$. By Proposition \ref{set-of-limits}, for all $n\ge 1$, there exists a sequence $(V_{t,T}^{n,m})_{m\ge 1}\in ^a\!\!\!\cV_{t,T}$ and a sequence $(\alpha^{n,m}_t)_{m\ge 1}\in L^0(\Reel_+,\cF_t)$ such that $\alpha^{n,m}_t$ converges to $0$ in probability as $m\to \infty$ and $V_{t,T}^{c,n}(u)\le V_{t,T}^{n,m}(u)+\alpha^{n,m}_t$, for all $m\ge 1$ and $u\in [0,T]$. We may assume w.l.o.g. that $\alpha^{n,m}_t$ converges to $0$ a.s. as $m\to \infty$. Then, there exists an integer-valued $\cF_t$-measurable random variable  $m^n_t$ such that $\alpha^{n,m^n_t}_t\in L^0([0,1],\cF_t)$. As $\cV_{t,T}$ is infinitely $\cF_t$-decomposable, we deduce that $V_{t,T}^{n,m^n_t}\in \cV_{t,T}$. Note that  $V_{t,T}^{c,n}(u)\le V_{t,T}^{n,m^n_t}(u)+1$ hence  $V_{t,T}^{n,m^n_t}\in  ^{a}\!\!\cV_{t,T}(m+1)$ for all $n\ge 1$. Moreover,  $\epsilon<P(V_{t,T}^{c,n}>n)\le P(V_{t,T}^{n,m^n_t}>n-1)$, for all $n\ge 1$. This implies that the sequence $(V_{t,T}^{n,m^n_t})_{n\ge 1}$ is not bounded in probability, contrarily to the assumption NUPBR for $\cV_{t,T}$. This contradiction allows one to conclude that NUPBR holds in continuous time.
\end{proof}

\section{Super-hedging prices}

\subsection{Super-hedging prices without no-arbitrage condition}

Recall that the super-hedging prices (resp. the infimum super-hedging price)  of a payoff $h_T\in L^0(\R,\cF_T)$ are defined after Definition \ref{Price}. Our main result is the following:

\begin{theo} \label{theoPrice1} Suppose that, for any $t\le T$, $\cO_t$ is the pseudo-distance topology defined by (\ref{PDIST}) and $\cV_{t,T}^c=\cV_{t,T}^c(\cO_t)$. Then, the infimum super-hedging prices of a payoff $h_T\in L^0(\R,\cF_T)$, in discrete time and in continuous time respectively,   coincide i.e. 
$$\pi_{t,T} (h_T)= \essinf(\cP_{t,T} (h_T))=\pi_{t,T}^c (h_T)= \essinf(\cP_{t,T}^c (h_T)).$$
\end{theo}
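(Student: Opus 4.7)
The inclusion $\cV_{t,T}\subseteq \cV_{t,T}^c$ (constant sequences trivially converge in $\cO_t$) yields $\cP_{t,T}(h_T)\subseteq \cP_{t,T}^c(h_T)$, and hence $\pi_{t,T}^c(h_T)\le \pi_{t,T}(h_T)$ directly from the definition of essential infimum. So the content of the theorem is the reverse inequality $\pi_{t,T}(h_T)\le \pi_{t,T}^c(h_T)$.

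The plan is to show that for every continuous-time super-hedging price $p_t\in \cP_{t,T}^c(h_T)$, one has $\pi_{t,T}(h_T)\le p_t$ a.s.\ Fix such a $p_t$ and pick $v_{t,T}^c\in \cV_{t,T}^c$ with $p_t+v_{t,T}^c\ge h_T$. By definition of $\cV_{t,T}^c=\cV_{t,T}^c(\cO_t)$ there is a sequence $(v_{t,T}^n)_{n\ge 1}\subseteq \cV_{t,T}$ with $v_{t,T}^n\to v_{t,T}^c$ in the pseudo-distance topology $\cO_t$ induced by \eqref{PDIST}. The key input is Proposition~\ref{set-of-limits}, which furnishes a sequence $(\alpha_t^n)_{n\ge 1}\subseteq L^0(\R_+,\cF_t)$ with $\alpha_t^n\to 0$ in probability and
\begin{equation*}
v_{t,T}^c\le v_{t,T}^n+\alpha_t^n\quad\text{a.s., for every }n\ge 1.
\end{equation*}
Combining this with $p_t+v_{t,T}^c\ge h_T$ gives $(p_t+\alpha_t^n)+v_{t,T}^n\ge h_T$ a.s.\ for each $n$. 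Since $p_t+\alpha_t^n$ is $\cF_t$-measurable, this precisely says that $p_t+\alpha_t^n\in \cP_{t,T}(h_T)$ for every $n\ge 1$.

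Applying the definition of $\pi_{t,T}(h_T)=\essinf \cP_{t,T}(h_T)$ yields $\pi_{t,T}(h_T)\le p_t+\alpha_t^n$ a.s.\ for every $n$. Extracting an a.s.\ convergent subsequence from $(\alpha_t^n)$ (possible because convergence in probability implies a.s.\ convergence along a subsequence) and letting $n\to\infty$ along this subsequence, we obtain $\pi_{t,T}(h_T)\le p_t$ a.s. As $p_t\in \cP_{t,T}^c(h_T)$ was arbitrary, the defining property of the essential infimum gives $\pi_{t,T}(h_T)\le \essinf \cP_{t,T}^c(h_T)=\pi_{t,T}^c(h_T)$, which closes the argument.

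There is no genuinely hard step here once Proposition~\ref{set-of-limits} is available: the financial content of the pseudo-distance $\hat d_t^+$—that limits in $\cO_t$ are dominated by discrete-time approximants up to an $\cF_t$-measurable error tending to $0$—is exactly tailored to translate a continuous-time super-hedge into a family of discrete-time super-hedges with vanishing excess initial capital. The only subtle point is the passage from convergence in probability of $(\alpha_t^n)$ to an a.s.\ bound that can be taken to the limit against the essential infimum, handled in the standard way via subsequence extraction. Notably, no no-arbitrage condition is invoked anywhere in the argument, consistent with the statement.
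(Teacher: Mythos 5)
Your proof is correct and follows essentially the same route as the paper: both reduce the problem to the reverse inequality and invoke Proposition~\ref{set-of-limits} to dominate the continuous-time portfolio by discrete-time ones plus an $\cF_t$-measurable error $\alpha_t^n\to 0$, concluding that $p_t+\alpha_t^n$ is a discrete-time price. Your explicit subsequence extraction to pass to the a.s.\ limit is a small touch of extra care that the paper leaves implicit.
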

\begin{proof}
As $\cP_{t,T} (h_T)\subseteq \cP_{t,T}^c (h_T)$, we have $\pi_{t,T}^c (h_T)\le \pi_{t,T} (h_T)$.  Consider a price $p_t\in \cP_{t,T}^c (h_T)$ such that $p_t+V_{t,T}^c\ge h_T$ for some $V_{t,T}^c\in \cV_{t,T}^v$. By Proposition \ref{set-of-limits}, we have $V_{t,T}^c\le V_{t,T}^n +\alpha_t^n$ for all $n\ge 1$, where $\alpha_t^n\in L^0(\Reel_+,\cF_t)$ converges to $0$ in probability and $V_{t,T}^n\in \cV_{t,T}$. We deduce that $p_t+\alpha_t^n\in \cP_{t,T} (h_T)$ hence $p_t+\alpha_t^n\ge \pi_{t,T} (h_T)$. As $n\to \infty$, we deduce that 
$p_t\ge \pi_{t,T} (h_T)$ hence $\pi_{t,T}^c (h_T)\ge \pi_{t,T} (h_T)$. The conclusion follows.
\end{proof}

\begin{remark}{\rm \quad \smallskip

1.) Note that, at any time, $\cP_{t,T}^c (h_T)$ may be empty. In that case, we also have $\cP_{t,T}^c (h_T)=\emptyset$ and $\pi_{t,T} (h_T)=\pi_{t,T}^c (h_T)=\infty$. Reciprocally, if we have $\cP_{t,T}^c (h_T)=\emptyset$, then $\pi_{t,T} (h_T)=\infty$ and we deduce that $\pi_{t,T}^c (h_T)=\infty$ by Theorem \ref{theoPrice1}.\smallskip

2.) If $\cV_{t,T}$ is a positive cone, then $\cP_{t,T}$ and $\cP_{t,T}^c$ are positive cones if $\cO_t$ is $\cF_t$-positively homogeneous. Therefore, $\pi_{t,T}=\pi_{t,T}(0)<0$ implies that $\pi_{t,T}=\pi_{t,T}^c (h_T)=-\infty$. Let us consider a payoff $h_T\in L^0(\R,\cF_T)$ such that $h_T\le \alpha S_T + \beta$ for some $\alpha,\beta\in \Reel$. Then, for all price $p_{t,T}\in \cP_{t,T}$, we deduce that $p_{t,T}+\alpha S_t + \beta \in \cP_{t,T}(h_T)$. Therefore, $\pi_{t,T}=\pi_{t,T}^c=-\infty$ implies that $\pi_{t,T}(h_T)=\pi_{t,T}^c(h_T)=-\infty$. This is why the condition AIP is financially meaning as it avoids this unrealistic situation where the prices of a positive payoff $h_T$ may be as negatively large as possible so that it is not possible to compute the infimum price.\smallskip

3.) If $\cV_{t,T}=\cup_{n\ge n} \cV_{t,T}(n)$ where $\cV_{t,T}(n)$ is an increasing sequence of discrete-time models, then observe that $\pi_{t,T}(h_T)=\inf_n\pi_{t,T}^n(h_T)$ where $\pi_{t,T}^n(h_T)$ are the infimum prices associated to the models $\cV_{t,T}(n)$, $n\ge 1$. Moreover, if $\cV_{t,T}(n)$ is a model only composed of a finite number of dates, then $\pi_{t,T}^n(h_T)$ may be computed as in \cite{CL}. This is the case in practice, if the trades only may be executed at deterministic dates, e.g. every second.
}$\Diamond$\end{remark}

\subsection{Infinitely $\cF_t$-decomposable extension of the discrete-time prices}

In the following, we show that the discrete-time portfolio processes may be extended without changing the infimum prices and we get a precise form of the set of super-hedging prices. We denote by ${\rm Part}_t(\Omega)$ the set of all $\cF_t$-measurable partitions of   $\Omega$ and we consider

\bea
\cV_{t,T}^{{\rm id}}=\left\{\sum_{n=1}^\infty X^n1_{F_t^n}:~X^n\in \cV_{t,T},\, (F_t^n)_{n=1}^{\infty}\in  {\rm Part}_t(\Omega)  \right\}.
\eea

Note that $\cV_{t,T}^{{\rm id}}$ is  infinitely  $\cF_t$-decomposable. We say that $\cV_{t,T}^{{\rm id}}$ is the discrete-time infinitely  $\cF_t$-decomposable extension of $\cV_{t,T}$.  We then denote by $\cP_{t,T}^{{\rm id}}(h_T)$ the set of all prices obtained from $\cV_{t,T}^{{\rm id}}$ and $\pi_{t,T}^{{\rm id}}(h_T):=\essinf_{\cF_t}\cP_{t,T}^{{\rm id}}(h_T)$. We denote by $\cV_{t,T}^{{\rm id},c}$ the continuous-time processes deduced from $\cV_{t,T}^{{\rm id}}$.

\begin{lemm} The AIP condition holds for $\cV_{t,T}$ if and only AIP holds for its infinitely  $\cF_t$-decomposable extension.
\end{lemm}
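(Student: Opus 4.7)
The plan is to work time-by-time via the characterization of AIP provided by Lemma \ref{LemmAIP}: at each $t\le T$, AIP at time $t$ is equivalent to $\essinf_{\cF_t}(v_{t,T})\le 0$ for every $v_{t,T}$ in the portfolio set of consideration. So it suffices to show, for each fixed $t$, that this inequality holds for all elements of $\cV_{t,T}$ if and only if it holds for all elements of $\cV_{t,T}^{{\rm id}}$.

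One direction is immediate: taking the trivial partition $F_t^1=\Omega$, $F_t^n=\emptyset$ for $n\ge 2$, we see $\cV_{t,T}\subseteq \cV_{t,T}^{{\rm id}}$, so AIP for $\cV_{t,T}^{{\rm id}}$ transfers to $\cV_{t,T}$.

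For the converse, assume AIP holds for $\cV_{t,T}$ and take an arbitrary $V=\sum_{n=1}^\infty X^n 1_{F_t^n}\in \cV_{t,T}^{{\rm id}}$ with $X^n\in \cV_{t,T}$ and $(F_t^n)_n\in{\rm Part}_t(\Omega)$. The key step is the localization identity
\[
\essinf_{\cF_t}(V)=\sum_{n=1}^\infty 1_{F_t^n}\,\essinf_{\cF_t}(X^n)\quad\text{a.s.}
\]
Once this is established, AIP for $\cV_{t,T}$ gives $\essinf_{\cF_t}(X^n)\le 0$ for every $n$, and summing over the $\cF_t$-partition yields $\essinf_{\cF_t}(V)\le 0$, which is AIP for $\cV_{t,T}^{{\rm id}}$ by Lemma \ref{LemmAIP}.

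To prove the localization identity, which is the only non-bookkeeping step, the argument goes as follows. Denote $W=\essinf_{\cF_t}(V)$ and $Z=\sum_n 1_{F_t^n}\essinf_{\cF_t}(X^n)$. The random variable $Z$ is $\cF_t$-measurable and satisfies $Z\le V$ a.s. (indeed, on each $F_t^n$ we have $Z=\essinf_{\cF_t}(X^n)\le X^n=V$), so $Z\le W$. Conversely, $W\le V$ implies $W1_{F_t^n}\le X^n 1_{F_t^n}$; the candidate $\tilde W_n:=W1_{F_t^n}+\essinf_{\cF_t}(X^n)1_{(F_t^n)^c}$ is $\cF_t$-measurable and dominated a.s. by $X^n$ on all of $\Omega$, hence $\tilde W_n\le \essinf_{\cF_t}(X^n)$, which on $F_t^n$ gives $W\le\essinf_{\cF_t}(X^n)$. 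Thus $W=Z$ a.s.. The main obstacle is exactly this localization step, and it only requires a little care because the partition is countable rather than finite; once the identity is in hand, the equivalence follows directly.
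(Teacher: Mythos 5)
Your proof is correct and follows essentially the same route as the paper: reduce to the criterion of Lemma \ref{LemmAIP} and use the localization property $1_{F_t^n}\essinf_{\cF_t}(V)=1_{F_t^n}\essinf_{\cF_t}(X^n)$ on each cell of the $\cF_t$-partition. The only difference is that you prove this localization identity in detail, whereas the paper simply asserts it.
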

\begin{proof}

It suffices to show that  AIP holds for its $\cF_t$-decomposable extension as soon as it holds for $\cV_{t,T}$. By Lemma \ref{LemmAIP}, let us show that $\essinf_{\cF_t}(V_{t,T})\le 0$ for all $V_{t,T}\in \cV_{t,T}^{{\rm id}}$. Suppose that $\cV_{t,T}^{{\rm id}}=\sum_{n=1}^\infty X^n1_{F_t^n}$ where $X^n\in \cV_{t,T}$ and $(F_t^n)_{n=1}^{\infty}\in  {\rm Part}_t(\Omega) $. Then, 
\bean 1_{F_t^m}\essinf_{\cF_t}(V_{t,T})=1_{F_t^m}\essinf_{\cF_t}(V_{t,T}1_{F_t^m})=1_{F_t^m}\essinf_{\cF_t}(X^n)\le 0.\eean
The conclusion follows.
\end{proof}

\begin{lemm} Suppose that $\cV_{t,T}$ is   $\cF_t$-decomposable, $t\le T$, and consider a payoff  $h_T\in L^0(\R,\cF_T)$. Then, we have $\pi_{t,T}^{{\rm id}}(h_T)=\pi_{t,T}(h_T)$ and 
$$\cP_{t,T}(h_T)\subseteq \cP_{t,T}^{{\rm id}}(h_T)\subseteq \overline{\cP}_{t,T}(h_T),$$
where $\overline{\cP}_{t,T}(h_T)$ is the closure of $\cP_{t,T}(h_T)$ in $L^0$.
\end{lemm}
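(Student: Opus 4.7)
The plan is to prove the two inclusions separately and then deduce the equality $\pi_{t,T}^{\rm id}(h_T)=\pi_{t,T}(h_T)$ as a formal consequence through monotonicity of the essential infimum together with stability under $L^0$-closure. The first inclusion $\cP_{t,T}(h_T)\subseteq \cP_{t,T}^{\rm id}(h_T)$ is immediate: any $X\in \cV_{t,T}$ super-hedging $h_T$ from a price $p_t$ corresponds, in the sense of the definition of $\cV_{t,T}^{\rm id}$, to the trivial partition $F_t^1=\Omega$.

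For the main inclusion $\cP_{t,T}^{\rm id}(h_T)\subseteq \overline{\cP}_{t,T}(h_T)$, I take $p_t\in \cP_{t,T}^{\rm id}(h_T)$ with $p_t+\sum_{n\ge 1}X^n 1_{F_t^n}\ge h_T$, where $X^n\in \cV_{t,T}$ and $(F_t^n)_{n\ge 1}\in {\rm Part}_t(\Omega)$, and I fix a reference $q_t\in \cP_{t,T}(h_T)$ with $q_t+Z\ge h_T$ for some $Z\in \cV_{t,T}$ (the case $\cP_{t,T}(h_T)=\emptyset$ is vacuous since then $\overline{\cP}_{t,T}(h_T)=\emptyset$). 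Setting $G_t^N:=\bigcup_{n>N}F_t^n$, I use the (finite) $\cF_t$-decomposability of $\cV_{t,T}$ to graft $Z$ onto the tail of the infinite combination by letting
$$Y^N:=\sum_{n=1}^N X^n 1_{F_t^n}+Z\,1_{G_t^N}\in \cV_{t,T},\qquad p_t^N:=p_t\,1_{(G_t^N)^c}+q_t\,1_{G_t^N}\in L^0(\R,\cF_t).$$
On each $F_t^n$ with $n\le N$, the inequality $p_t^N+Y^N\ge h_T$ reduces to $p_t+X^n\ge h_T$, which follows from the hypothesis on $p_t$; on $G_t^N$, it reduces to $q_t+Z\ge h_T$. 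Hence $p_t^N\in \cP_{t,T}(h_T)$. Since $(F_t^n)$ is a partition, $P(G_t^N)\to 0$, so $p_t^N-p_t=(q_t-p_t)1_{G_t^N}\to 0$ almost surely, and in particular in probability, yielding $p_t\in \overline{\cP}_{t,T}(h_T)$.

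For the identity of infima, monotonicity of $\essinf$ applied to the chain $\cP_{t,T}(h_T)\subseteq \cP_{t,T}^{\rm id}(h_T)\subseteq \overline{\cP}_{t,T}(h_T)$ yields $\essinf \overline{\cP}_{t,T}(h_T)\le \pi_{t,T}^{\rm id}(h_T)\le \pi_{t,T}(h_T)$. Conversely, any $p\in \overline{\cP}_{t,T}(h_T)$ is an in-probability limit (hence, along a subsequence, an a.s.\ limit) of prices $p^k\in \cP_{t,T}(h_T)$ with $p^k\ge \pi_{t,T}(h_T)$, so $p\ge \pi_{t,T}(h_T)$, whence $\essinf \overline{\cP}_{t,T}(h_T)\ge \pi_{t,T}(h_T)$, and the desired equality follows by sandwich. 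The main conceptual obstacle is that the tail $\sum_{n>N}X^n 1_{F_t^n}$ is generally not an element of $\cV_{t,T}$: the grafting construction bypasses this by substituting the full reference strategy $Z$ on $G_t^N$, paying for the substitution only on a vanishing $\cF_t$-event, which is exactly the reason why the $L^0$-closure rather than $\cP_{t,T}(h_T)$ itself appears in the statement.
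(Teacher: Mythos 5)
Your proposal is correct and follows essentially the same route as the paper: the trivial partition gives the first inclusion, and for the second the paper defines exactly your $p_t^N=p_t\,1_{\cup_{i\le N}F_t^i}+q_t\,1_{\Omega\setminus\cup_{i\le N}F_t^i}$, invokes finite $\cF_t$-decomposability to place it in $\cP_{t,T}(h_T)$, and passes to the limit in $L^0$. Your write-up merely makes explicit the grafted strategy $Y^N$ and the stability of the infimum under $L^0$-closure, both of which the paper leaves implicit (note that both you and the paper tacitly assume $\cP_{t,T}(h_T)\neq\emptyset$ by fixing a reference price).
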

\begin{proof}
As $\cV_{t,T}\subseteq \cV_{t,T}^{{\rm id}}$, we have $\cP_{t,T}(h_T)\subseteq \cP_{t,T}^{{\rm id}}((h_T)$ and $\pi_{t,T}^{{\rm id}}(h_T)\le \pi_{t,T}(h_T)$. Moreover, if $p_t\in \cP_{t,T}^{{\rm id}}((h_T)$, then we have $p_t+\sum_{i=1}^{\infty}V_{t,T}^i1_{F_t^i}\ge h_T$ for some $V_{t,T}^i\in \cV_{t,T}$, $i\ge 1$ and a partition $(F_t^i)_{i\ge 1}$ of $\Omega$ by elements of $\cF_t$. Consider $p_t^0\in \cP_{t,T}(h_T)$ and define $p_t^n=p_t1_{\cup_{i=1}^nF_t^i}+p_t^01_{\Omega\setminus \cup_{i=1}^nF_t^i}$, $n\ge 1$. As $\cV_{t,T}$ is   $\cF_t$-decomposable, $p_t^n\in \cP_{t,T}(h_T)$. Moreover, $p_t=\lim_{n\to \infty}p_t^n$. We then deduce that  $\cP_{t,T}^{{\rm id}}((h_T)\subseteq \overline{\cP}_{t,T}(h_T)$ hence $\pi_{t,T}^{{\rm id}}(h_T)\ge \pi_{t,T}(h_T)$. The conclusion follows. \end{proof}

\begin{prop} Consider a payoff  $h_T\in L^0(\R,\cF_T)$. Then, there exists $ \Lambda_t\in \cF_t$ such that  $\cP_{t,T}^{{\rm id}} (h_T)=L^0(J_{t,T}(h_T),\cF_t)$ and
$$J_{t,T}(h_T)=[\pi_{t,T}^{{\rm id}}  (h_T),\infty)1_{\Lambda_t}+(\pi_{t,T}^{{\rm id}}  (h_T),\infty)1_{\Omega\setminus\Lambda_t}.$$
\end{prop}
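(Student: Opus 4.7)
The plan is to derive the structure of $\cP_{t,T}^{{\rm id}}(h_T)$ from two closure properties it inherits from $\cV_{t,T}^{{\rm id}}$: upward closedness in $L^0(\Reel,\cF_t)$, and, crucially, infinite $\cF_t$-decomposability. The latter is immediate: if $p_t^n\in \cP_{t,T}^{{\rm id}}(h_T)$ has super-hedge $V^n\in\cV_{t,T}^{{\rm id}}$ and $(F_t^n)_{n\ge 1}$ is an $\cF_t$-partition of $\Omega$, then $\sum_n p_t^n 1_{F_t^n}$ is super-hedged by $\sum_n V^n 1_{F_t^n}\in \cV_{t,T}^{{\rm id}}$. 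In particular $\cP_{t,T}^{{\rm id}}(h_T)$ is stable under binary min via $p^1\wedge p^2 = p^1 1_{\{p^1\le p^2\}} + p^2 1_{\{p^1 > p^2\}}$, so by a standard exhaustion argument I would obtain a decreasing sequence $(p_t^n)\subset \cP_{t,T}^{{\rm id}}(h_T)$ with $p_t^n\downarrow \pi_{t,T}^{{\rm id}}(h_T)$ almost surely, together with super-hedges $V^n\in\cV_{t,T}^{{\rm id}}$.

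I would then define $\Lambda_t$ canonically as the essential supremum of the $\cF_t$-family
$$\cG := \{F_t\in \cF_t:\ \exists\, p_t\in\cP_{t,T}^{{\rm id}}(h_T),\ p_t 1_{F_t} = \pi_{t,T}^{{\rm id}}(h_T) 1_{F_t}\},$$
and check by a disjointify-and-paste argument that $\cG$ is stable under countable unions: disjointify a sequence drawn from $\cG$, combine the corresponding witnesses using infinite decomposability, and complete on the complement with any fixed element of $\cP_{t,T}^{{\rm id}}(h_T)$. Hence $\Lambda_t\in\cG$ itself, witnessed by some $p_t^{\star}\in \cP_{t,T}^{{\rm id}}(h_T)$ with super-hedge $V^{\star}\in\cV_{t,T}^{{\rm id}}$ such that $p_t^{\star} 1_{\Lambda_t} = \pi_{t,T}^{{\rm id}}(h_T) 1_{\Lambda_t}$.

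For the key inclusion $L^0(J_{t,T}(h_T),\cF_t)\subseteq \cP_{t,T}^{{\rm id}}(h_T)$, I would fix $p_t$ with $p_t\ge \pi_{t,T}^{{\rm id}}(h_T)$ on $\Lambda_t$ and $p_t > \pi_{t,T}^{{\rm id}}(h_T)$ on $\Omega\setminus\Lambda_t$, and introduce the $\cF_t$-measurable integer $n(\omega):=\inf\{n\ge 1:\ p_t^n(\omega)\le p_t(\omega)\}$, which is finite a.s.\ on $\Omega\setminus\Lambda_t$ since $p_t^n\downarrow \pi_{t,T}^{{\rm id}}(h_T)<p_t$ there. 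Pasting
$$V := V^{\star} 1_{\Lambda_t} + \sum_{k\ge 1} V^{k}\, 1_{\{n=k\}\cap(\Omega\setminus\Lambda_t)} \in \cV_{t,T}^{{\rm id}}$$
via infinite $\cF_t$-decomposability, a casewise check yields $p_t + V\ge h_T$, so $p_t\in \cP_{t,T}^{{\rm id}}(h_T)$. Conversely, any $p_t\in \cP_{t,T}^{{\rm id}}(h_T)$ satisfies $p_t\ge \pi_{t,T}^{{\rm id}}(h_T)$ by definition of the conditional essential infimum, and the set $\{p_t=\pi_{t,T}^{{\rm id}}(h_T)\}\in\cG$ is contained in $\Lambda_t$ by maximality, yielding the reverse inclusion.

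The main obstacle is the measurable pasting above: one must verify that $\Lambda_t$ and the family $(\{n=k\}\cap(\Omega\setminus\Lambda_t))_{k\ge 1}$ form an $\cF_t$-partition of $\Omega$ up to a null set, and then genuinely invoke the \emph{infinite} $\cF_t$-decomposability of $\cV_{t,T}^{{\rm id}}$, the finite counterpart being insufficient since $n(\omega)$ is unbounded. A minor point to handle at the outset is the case $\cP_{t,T}^{{\rm id}}(h_T)=\emptyset$, where $\pi_{t,T}^{{\rm id}}(h_T)\equiv +\infty$ and $J_{t,T}(h_T)=\emptyset$ makes the identity trivially true.
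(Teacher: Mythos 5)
Your argument is correct and follows essentially the same route as the paper: the paper likewise defines $\Lambda_t$ as a maximal element (via essential supremum and upward directedness) of the family of $\cF_t$-sets on which the infimum price $\pi_{t,T}^{{\rm id}}(h_T)$ is attained by some element of $\cP_{t,T}^{{\rm id}}(h_T)$, uses infinite $\cF_t$-decomposability for the pasting, and derives the inclusion $L^0(J_{t,T}(h_T),\cF_t)\subseteq \cP_{t,T}^{{\rm id}}(h_T)$ from the downward-directed decreasing sequence $p_t^n\downarrow \pi_{t,T}^{{\rm id}}(h_T)$ together with the $\cF_t$-measurable first-entrance index. The only cosmetic difference is that you inline the selection argument that the paper factors out as its auxiliary Lemma on prices strictly above $\pi_{t,T}^{{\rm id}}(h_T)$ (Lemma \ref{PriceCriteria}), and you handle the empty-price-set case slightly more explicitly.
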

\begin{proof} It suffices to argue on the set of all $\omega$ such that  $\pi_{t,T}^{{\rm id}}  (h_T)<\infty$. Therefore, we suppose w.l.o.g. that there exists $p_t^0\in \cP_{t,T}^{{\rm id}} (h_T)$. Let us consider
$$\Gamma_t=\left\{ \Lambda_t\in \cF_t:~   \pi_{t,T}^{{\rm id}}  (h_T)1_{ \Lambda_t}+ p_t^0   1_{\Omega\setminus\Lambda_t}\in   \cP_{t,T}^{{\rm id}} (h_T)\right\}.$$
Note that $\emptyset\in \Gamma_t$. As $\cV_{t,T}^{{\rm id}}$ is   infinitely  $\cF_t$-decomposable,  $\cP_{t,T}^{{\rm id}} (h_T)$ is infinitely  $\cF_t$-decomposable   by Lemma \ref{decomp}. We deduce that $\Lambda_t^1\cup \Lambda_t^2 \in \Gamma_t$ if $\Lambda_t^1,\Lambda_t^2 \in \Gamma_t$. Then, the family $\{1_{\Lambda_t}:\Lambda_t\in \Gamma_t\}$ is directed upward. We deduce that  $\esssup_{\Lambda_t\in \Gamma_t}1_{\Lambda_t}=1_{\Lambda_t^{\infty}}$ where $\Lambda_t^{\infty}$ is an increasing union of elements of $\Gamma_t$. As $\cP_{t,T}^{{\rm id}} (h_T)$ is infinitely  $\cF_t$-decomposable, we get that $\Lambda_t^{\infty}\in \Gamma_t$. We may also show that $\Lambda_t^{\infty}$ is independent of $p_t^0$. We then define $J_{t,T}(h_T)$ as above with $\Lambda_t=\Lambda_t^{\infty}$. We claim that $\cP_{t,T}^{{\rm id}} (h_T)=L^0(J_{t,T}(h_T),\cF_t)$. To see it, consider a price $p_t^0\in \cP_{t,T}^{{\rm id}} (h_T)$ and suppose that $p_t^0=\pi_{t,T}^{{\rm id}}  (h_T)$ on a non null set of $\Omega\setminus\Lambda_t$. Then, we get a contradiction with the maximality of $\Lambda_t$. So, we obtain that $\cP_{t,T}^{{\rm id}} (h_T)\subseteq L^0(J_{t,T}(h_T),\cF_t)$. Reciprocally, consider $p_t\in L^0(J_{t,T}(h_T),\cF_t)$. Then, $p_t^0=p_t+1_{\Lambda_t}>\pi_{t,T}^{{\rm id}}  (h_T)$ a.s. hence $p_t^0\in \cP_{t,T}^{{\rm id}} (h_T)$ by  Lemma \ref{PriceCriteria}. Moreover, $p_t\ge  \pi_{t,T}^{{\rm id}}  (h_T)1_{ \Lambda_t}+p_t^01_{\Omega\setminus\Lambda_t}$ by construction. Since $\pi_{t,T}^{{\rm id}}  (h_T)1_{ \Lambda_t}+p_t^01_{\Omega\setminus\Lambda_t}\in \cP_{t,T}^{{\rm id}} (h_T)$ by definition of $\Lambda_t$, we deduce that $p_t\in \cP_{t,T}^{{\rm id}} (h_T)$. Therefore, $\cP_{t,T} (h_T)=L^0(J_{t,T}(h_T),\cF_t)$.
\end{proof}

\begin{coro}
Suppose that $\cV_{t,T}$ is   $\cF_t$-decomposable, $t\le T$, and consider a payoff  $h_T\in L^0(\R,\cF_T)$. Then, the closure in $L^0$ of  $\cP_{t,T}(h_T)$, $\cP_{t,T}^{{\rm id}}(h_T)$ and $\cP_{t,T}^{{\rm id,c}}(h_T)$ coincide with $L^0([\pi_{t,T},\infty),\cF_t)$.
\end{coro}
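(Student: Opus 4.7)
The plan is to combine the two preceding statements. The Proposition just above gives $\cP_{t,T}^{{\rm id}}(h_T) = L^0(J_{t,T}(h_T), \cF_t)$ where $J_{t,T}(h_T)$ differs from $[\pi_{t,T}^{{\rm id}}(h_T), \infty)$ only by whether the left endpoint is included on $\Lambda_t$ or not on $\Omega\setminus\Lambda_t$; the preceding Lemma gives $\pi_{t,T}^{{\rm id}}(h_T) = \pi_{t,T}(h_T)$ together with the sandwich $\cP_{t,T}(h_T) \subseteq \cP_{t,T}^{{\rm id}}(h_T) \subseteq \overline{\cP}_{t,T}(h_T)$. Writing $\pi_{t,T}$ for $\pi_{t,T}(h_T)$ throughout, the target set $L^0([\pi_{t,T},\infty),\cF_t)$ is already $L^0$-closed.

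First I would show $\overline{\cP}_{t,T}^{{\rm id}}(h_T) = L^0([\pi_{t,T},\infty),\cF_t)$. The inclusion ``$\subseteq$'' follows because every element of $\cP_{t,T}^{{\rm id}}(h_T)$ dominates $\pi_{t,T}^{{\rm id}}(h_T)=\pi_{t,T}$ and the target set is closed in $L^0$. For ``$\supseteq$'', pick $p_t \in L^0([\pi_{t,T},\infty),\cF_t)$ and set $p_t^n := p_t + 1/n$ on $\{\pi_{t,T} > -\infty\}$; then $p_t^n > \pi_{t,T}^{{\rm id}}(h_T)$ everywhere, hence $p_t^n \in \cP_{t,T}^{{\rm id}}(h_T)$ by Lemma \ref{PriceCriteria} and the explicit form of $J_{t,T}(h_T)$, and $p_t^n \to p_t$ in $L^0$. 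The complementary region $\{\pi_{t,T} = -\infty\}$ is handled by truncating and patching with $\cF_t$-decomposability of $\cV_{t,T}^{{\rm id}}$.

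Next, taking the $L^0$-closure throughout the sandwich from the preceding Lemma gives the discrete-time identity $\overline{\cP}_{t,T}(h_T) = \overline{\cP}_{t,T}^{{\rm id}}(h_T) = L^0([\pi_{t,T},\infty),\cF_t)$. Finally, for the continuous-time set, the inclusion $\cP_{t,T}^{{\rm id}}(h_T) \subseteq \cP_{t,T}^{{\rm id},c}(h_T)$ (every discrete-time portfolio is constantly equal to itself as a continuous-time one) immediately yields $L^0([\pi_{t,T},\infty),\cF_t) \subseteq \overline{\cP}_{t,T}^{{\rm id},c}(h_T)$. For the reverse, I would replay the argument of Theorem \ref{theoPrice1} applied to $\cV_{t,T}^{{\rm id}}$: any $p_t \in \cP_{t,T}^{{\rm id},c}(h_T)$ satisfies $p_t + V_{t,T}^c \ge h_T$ with $V_{t,T}^c$ a limit of some $V_{t,T}^n \in \cV_{t,T}^{{\rm id}}$, and Proposition \ref{set-of-limits} provides $\alpha_t^n \in L^0(\Reel_+,\cF_t)$ with $\alpha_t^n \to 0$ in probability and $V_{t,T}^c \le V_{t,T}^n + \alpha_t^n$; hence $p_t + \alpha_t^n \in \cP_{t,T}^{{\rm id}}(h_T) \subseteq L^0([\pi_{t,T},\infty),\cF_t)$, so $p_t$ lies in the (closed) target set as well.

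The main delicate point is the first step: verifying that the possibly half-open left endpoint of $J_{t,T}(h_T)$ on $\Omega\setminus\Lambda_t$ really does disappear after $L^0$-closure. This needs a controlled approximation of boundary-attaining prices by strictly interior ones, together with $\cF_t$-decomposability to patch the approximations on $\Lambda_t$, on $\Omega\setminus\Lambda_t$, and on the set $\{\pi_{t,T} = -\infty\}$. Everything else is either direct set inclusion, taking closures, or a recycling of the argument of Theorem \ref{theoPrice1}.
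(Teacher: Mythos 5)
Your argument is correct and is exactly the chain the paper intends (it states this corollary without proof as an immediate consequence of the preceding Lemma, the Proposition on $J_{t,T}(h_T)$, Theorem \ref{theoPrice1} and Proposition \ref{set-of-limits}): the $+1/n$ shift puts you strictly above $\pi_{t,T}^{{\rm id}}(h_T)$, hence inside $L^0(J_{t,T}(h_T),\cF_t)$, and closures propagate through the sandwich. The only remark is that the separate treatment of $\{\pi_{t,T}=-\infty\}$ is unnecessary, since $p_t+1/n>\pi_{t,T}$ holds there automatically for real-valued $p_t$, so no truncation or decomposability patching is needed.
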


The natural question is whether $\cP_{t,T}^{{\rm id}} (h_T)=\cP_{t,T}(h_T)$. Actually, this is not the case in general, as shown in the following example:

\begin{example} {\rm We consider the framework of our paper between time $t=1$ and $t=2$. Suppose that $\Omega=\{\omega_i:~ i=1,2,3,4\}$, $\cF_1=\{A,A^c,\emptyset, \Omega\}$ where $A=\{\omega_1,\omega_2\}$, $A^c=\Omega\setminus A$, and $\cF_2$ is the family of all subsets of $\Omega$. We consider any probability measure $P$ on $\cF_2$ such that $P(\{\omega_i\})>0$ for all $i=1,2,3,4$. We assume that $\cV_{t,T}=\{V^1,V^2\}$ where $V^1(\omega_i)=i-1$ for $i=1,2,3,4$ and $(V^2(\omega_i))_{i=1}^4=\{-1,2,3,4\}$.  At last, we suppose that the payoff is $h(\omega_i)=i$ for $i=1,2,3,4$. Then, the minimal prices at time $t=1$ associated to $V^1,V^2$ are respectively  $p_1(V^1)=1$ and $p_1(V^2)=21_A$. Therefore,  $\cP_{1,2}(h)=L^0([1,\infty),\cF_1)\cup L^0([21_A,\infty),\cF_1)$. Then, $\pi_{1,2}(h)=1_A\notin \cP_{1,2}(h)$. On the other hand, we may see that $\cV_{t,T}^{{\rm id}} =\{V^1,V^2,V^3, V^4\}$ where $V^3=V^11_A+V^21_{A^c}$ and $V^4=V^21_A+V^11_{A^c}$. We then show that $p_1(V^3)=1_A$ and $p_1(V^4)=1+1_A$. It follows that $\pi_{1,2}^{{\rm id}}(h)=\pi_{1,2}(h)=1_A\in \cP_{1,2}^ {{\rm id}}(h)$ and $\cP_{t,T}^{{\rm id}}(h)=L^0([1_A,\infty),\cF_1)$. We conclude that $\cP_{t,T}^{{\rm id}}(h)\ne \cP_{1,2}(h)$.} $\Diamond$

\end{example}

\section{Topology defined by a semi-distance}\label{PseudoDistance}

\begin{definition} Let $E$ be a vector space. A semi-distance is a mapping $d$ defined on $E\times E$ with values in $\Reel_+$ such that the triangular inequality holds:
$$d(X,Y)\le d(X,Z)+d(Z,Y),\quad X,Y,Z\in E.$$

\end{definition}

\begin{example} \label{TopologieTauHatplus}{\rm  At time $t\le T$, we define on $ L^0( \Reel,\cF_{T})\times  L^0( \Reel,\cF_{T}) $ the pseudo-distance: $$\hat d^+_t (X,Y)=E(\esssup_{\cF_t}((X-Y)^+)\wedge  1  ),\quad  X,Y \in  L^0( \Reel,\cF_{T}).$$

Observe that only the triangle inequality is satisfied by   $d^+_t$.  In general $d^+_t(X,Y) \neq d^+_t(Y,X)$. For example, if $X+1\le Y$ a.s., then $d^+_t(X,Y)=0$ but $d^+_t(Y,X) =1$. In particular,  $d^+_t(X,Y)=0$ does not necessarily  imply that  $X=Y$ a.s.} $\Diamond$ 
 \end{example}
 
 \begin{example}\label{TopologieTau}{\rm 
Another pseudo-distance is given by  
$$d^+(X,Y)= E((X-Y)^+ \wedge 1).$$ Notice that  $d^+\le \hat d^+_t$.} $\Diamond$
\end{example} 
 
 A  pseudo-distance $d$   allows us to define a topologie on $ L^0(\Reel,\cF_{T})$. To do so, let us define, for every  $X_0\in L^0( \Reel,\cF_{T})$, the set
$$\cB_{\ep}(X_0) =\left\{X \in  L^0( \Reel,\cF_{T}): d(X_0,X )\le \ep \right\}$$ that we call ball of radius $\ep \in  \Reel^+$, centered at $X_0 \in  L^0( \Reel,\cF_{T}) $. A set $V \subseteq L^0( \Reel,\cF_T) $ is said a neighborhood of $X \in L^0( \Reel,\cF_{T})$ if there is $\ep \in (0,\infty) $ such that $\cB_{\ep}(X) \subset V$. A set $ O \subset L^0( \Reel,\cF_{T})  $ is said open if it is a neighborhood of all $X \in O$. We denote by $\cT_d$ the collection of all open sets.

\begin{lemm} The family $\cT_d$ of open sets defined from the pseudo-distance $d$  is a topology.

\end{lemm}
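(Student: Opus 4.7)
The plan is to verify directly the three topology axioms for $\cT_d$, working only from the given definitions of ball, neighborhood and open set. We must check that (i) $\emptyset$ and $L^0(\Reel,\cF_T)$ belong to $\cT_d$, (ii) $\cT_d$ is stable under arbitrary unions, and (iii) $\cT_d$ is stable under finite intersections. Note that since the pseudo-distance is not required to be symmetric, we must be careful never to interchange the arguments of $d$; however the three axioms can be handled without ever doing so.

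For (i): the empty set is open vacuously (there is no point at which to check the neighborhood property), and the whole space $L^0(\Reel,\cF_T)$ obviously contains every ball $\cB_{\ep}(X)$, so it is a neighborhood of each of its points. For (ii): if $(O_i)_{i\in I}$ is any family in $\cT_d$ and $X\in \bigcup_{i\in I}O_i$, then $X\in O_{i_0}$ for some $i_0$, and by openness there is $\ep>0$ with $\cB_{\ep}(X)\subseteq O_{i_0}\subseteq \bigcup_{i\in I}O_i$, so the union is open.

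For (iii), it suffices by induction to treat two sets $O_1,O_2\in\cT_d$. Given $X\in O_1\cap O_2$, openness gives $\ep_1,\ep_2>0$ with $\cB_{\ep_i}(X)\subseteq O_i$ for $i=1,2$. Set $\ep:=\min(\ep_1,\ep_2)>0$. If $Y\in \cB_{\ep}(X)$, i.e.\ $d(X,Y)\le \ep$, then $d(X,Y)\le \ep_i$ for each $i$, hence $Y\in \cB_{\ep_i}(X)\subseteq O_i$. Thus
$$ \cB_{\ep}(X)\subseteq \cB_{\ep_1}(X)\cap \cB_{\ep_2}(X)\subseteq O_1\cap O_2, $$
so $O_1\cap O_2$ is a neighborhood of each of its points, hence open.

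I do not anticipate any real obstacle: the proof is a routine axiom-check, and in particular it does not require the triangle inequality, since we are not asserting that balls themselves belong to $\cT_d$, only that $\cT_d$ satisfies the axioms of a topology as derived from the given neighborhood-based definition of openness. The asymmetry of $d$ is therefore harmless here, although it will matter for finer properties (Hausdorffness, openness of balls, etc.) addressed later in the section.
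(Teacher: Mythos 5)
Your proof is correct and follows essentially the same route as the paper: the empty set and the whole space by convention/triviality, arbitrary unions via the neighborhood definition, and finite intersections by taking the minimum of the radii. Your added observation that neither the triangle inequality nor symmetry is needed for this axiom check is accurate but does not change the argument.
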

\begin{proof}
 It is clear that $L^0( \Reel,\cF_{T})$ is a neighborhood of all its elements, i.e.  $L^0( \Reel,\cF_{T})\in \cT_d$, and $\emptyset\in \cT_d$ by convention. Let $(O_i)_{i\in I}$ be a family of  open sets. Let $x\in \bigcup_{i\in I} O_i$, so that $x\in O_i$ for some $i\in I$. As  $O_i$ is open,   $O_i$ is a neighborhood of $x$ and, consequently, $ \bigcup_{i\in I} O_i$ is a neighborhood of $x$. 
 
 Let $(O_i)_{i\in I}$ be a finite family of  open sets. Let $x\in \bigcap_{i\in I} O_i$, so that $x\in O_i$ for every $i\in I$. So, for every $i\in I$,  there exist $\ep_i\in (0,\infty)$ such that $\cB_{\ep_i}(x)\subset O_i$. Let $\ep = \inf_{i\in I}(\ep_i)\in (0,\infty)$. We have  $\cB_{\ep}(x)\subset O_i$  for every $i\in I$. We conclude that $\bigcap_{i\in I} O_i$ is open. \end{proof}\smallskip

 In the following, we denote by $\widehat{\cT}_t$ the topology associated to the  pseudo-distance $\hat d^+_t$ given in Example \ref{TopologieTauHatplus}. Similarly, we denote by $\widehat\cB_{\ep}(x)$ the associated balls. We also denote by $\cT$ the topology defined by $d^+$ as in Example  \ref{TopologieTau} while the associated balls are just denoted by $\cB_{\ep}(x)$.

\begin{remark} \label{rem-topol} {\rm We observe several basic properties which are of interest:\smallskip

 \noindent 1)  The topology defined by the pseudo-distance  is not separated in general. Take for example $X,Y \in  L^0( \Reel,\cF_{T})$ such that $Y>X$ a.s. For every  $\ep \in  \Reel^+ $, $X - Y < 0\le  \ep$ hence $(X - Y)^+ =0 \le \ep$. So, 
 $$\hat d^+_t(X - Y) =E(\esssup_{\cF_t}(X-Y)^+ \wedge 1) \le  \ep\wedge 1$$ and we conclude that $ Y\in \widehat{\cB}_{\ep}(X)$. \smallskip
 
\noindent 2) A sequence $(X_n)_{n\in \N}$ of elements in $L^0( \Reel,\cF_{T})$ converges  to $ X \in  L^0( \Reel,\cF_{T})$ with respect to $\cT_d$ if, for all  $\ep \in \Reel^+$,  there exist $n_0 \in \N $  such that, for any $n \ge n_0$, $X_n \in \cB_{\ep}(X)$. \smallskip

\noindent 3) If $A$ is a subset of $E$, then $X$ belongs to the closure of $A$ with respect to  $\cT_d$ if and only if $X=\lim_n(X_n)$, i.e. $d(X,X^n)\to 0$, where $(X_n)_{n\in \N}$ is a sequence of elements of $A$. Indeed, this is a direct consequence of the construction of the balls from $d$.

\noindent 4) If  $(X_n)_{n\in \N}$ converges  to $ X $ with respect to $\widehat{\cT}_t$ then $(X_n)_{n\in \N}$ converges  to $ X $ with respect to $\cT$, see Examples  \ref{TopologieTauHatplus} and \ref{TopologieTau}.

\noindent 5) If  $(X_n)_{n\in \N}$ converges  to $ X $ with respect to $\widehat{\cT}_t$  and 
 $(\tilde X_n)_{n\in \N}$ is another sequence such that $\tilde X_n\ge X_n$ a.s., for all $n\in \N$, then $(\tilde X_n)_{n\in \N}$ converges  to $ X $ with respect to $\widehat{\cT}_t$.} $\Diamond$
  \end{remark}

 \begin{remark}\label{Rq6}{\rm  We recall that $d(X,Y)=E(\vert X-Y \vert \wedge 1)$  is the distance generating the convergence in probability. So, a sequence $(X_n)_{n\in \N}$ of elements in $L^0( \R,\cF_{T})$ converges  to $ X \in  L^0( \R,\cF_{T})$ with respect to $\widehat \cT_t$, see Example \ref{TopologieTauHatplus},  if and only $\esssup_{\cF_t}(X-X_n)^+$ converges to $0$ in probability. Consequently there exists a subsequence $(X_{n_k})_k$ of $(X_n)_n$ such that $\esssup_{\cF_t}(X-X_{n_k})^+$ converges to $0$ almost surely, i.e.  for every $\ep \in \Reel^+$ there exists $k_0$ such that, for all $k>k_0$, we have $\esssup_{\cF_t}(X-X_{n_k})^+ \le \ep $, which implies that $X \le \ep+ X_{n_k}$.} $\Diamond$

\end{remark}

\begin{lemm}
If $F$ is  a  closed set for $\cT$ (resp. for $\widehat{\cT}_t$),  then $F$ is a lower set, i.e. $F-L^0(\Reel_+,\cF_T)\subseteq F$. \end{lemm}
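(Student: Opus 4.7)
The plan is to exploit the asymmetry of the pseudo-distances $d^+$ and $\hat d^+_t$: both satisfy $d(A,B)=0$ whenever $A\le B$ a.s., which means that passing from a point $X$ to any point $X-Y$ with $Y\ge 0$ costs nothing in the sense of the pseudo-distance. Since closure is characterized by sequential convergence (Remark \ref{rem-topol}, point 3), I only need to exhibit, for arbitrary $X\in F$ and $Y\in L^0(\Reel_+,\cF_T)$, a sequence in $F$ converging to $X-Y$.

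First I would take the constant sequence $X_n:=X$, which trivially lies in $F$. For the topology $\widehat{\cT}_t$, I would compute
\[
\hat d^+_t(X-Y,X_n)=E\bigl(\esssup_{\cF_t}\bigl((X-Y)-X\bigr)^+\wedge 1\bigr)=E\bigl(\esssup_{\cF_t}(-Y)^+\wedge 1\bigr)=0,
\]
because $Y\ge 0$ a.s. forces $(-Y)^+=0$ a.s. Hence $X_n\in \widehat{\cB}_\ep(X-Y)$ for every $\ep>0$, so $X_n\to X-Y$ in $\widehat{\cT}_t$. The identical computation with $d^+$ in place of $\hat d^+_t$ shows that $X_n\to X-Y$ in $\cT$ as well.

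Since $F$ is closed, it coincides with its closure in the respective topology, so using Remark \ref{rem-topol}, point 3, the limit $X-Y$ belongs to $F$. As $X\in F$ and $Y\in L^0(\Reel_+,\cF_T)$ were arbitrary, this gives $F-L^0(\Reel_+,\cF_T)\subseteq F$, completing the proof.

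I do not expect any real obstacle here: the whole statement is essentially a one-line consequence of the one-sided nature of the pseudo-distances, which makes $X$ a witness to the fact that $X-Y$ is in the closure of $\{X\}$. The only subtlety worth making explicit is that sequences suffice to describe closed sets in these pseudo-distance topologies, and that is already recorded in Remark \ref{rem-topol}.
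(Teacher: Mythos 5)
Your proof is correct and is essentially identical to the paper's own argument: the paper also observes that for $Z\le\gamma$ with $\gamma\in F$ one has $(Z-\gamma)^+=0$, so the constant sequence at $\gamma$ converges to $Z$ in the pseudo-distance topology and closedness of $F$ forces $Z\in F$. You have correctly placed the limit as the first argument of the pseudo-distance, consistent with the convergence convention in Remark \ref{rem-topol}, so there is no gap.
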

\proof

Indeed, consider $Z\le \gamma$ where $\gamma \in F$. Then, $(Z-\gamma)^+=0$ hence the constant sequence $(\gamma_n=\gamma)_{n\ge 1}$ converges to $Z$ and, finally, $Z\in F$. Note that, if $F$ is closed for $\cT$, it is closed for  $\widehat{\cT}_t$.
\endproof

 \begin{lemm} \label{linear} Let $d$ be a pseudo-distance on $E\times E$. Consider two sequences $(X_n)_{n\in \N}$ and $(Y_n)_{n\in \N}$ of elements in $E$ which  converge   to $ X , Y \in  L^0( \R,\cF_{T})$  respectively with respect to $\cT_d$.  If $d(a+b,a+c)\le d(b,c)$ for all $a,b,c\in E$, then  $(X_n+Y_n)_{n\in \N}$ converges to $X+Y$.\smallskip
 \end{lemm}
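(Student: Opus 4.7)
The plan is to exploit the triangle inequality for $d$ together with the translation-contraction hypothesis $d(a+b,a+c)\le d(b,c)$. Since convergence with respect to $\cT_d$ is characterized (by Remark \ref{rem-topol}, item 2) as $d(X_n,X)\to 0$ and $d(Y_n,Y)\to 0$, it suffices to show that $d(X_n+Y_n,X+Y)\to 0$.

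First I would insert an intermediate point and apply the triangle inequality:
$$d(X_n+Y_n,X+Y)\le d(X_n+Y_n,X+Y_n)+d(X+Y_n,X+Y).$$
Then, using the translation-contraction hypothesis with $a=Y_n$, $b=X_n$, $c=X$, I get $d(Y_n+X_n,Y_n+X)\le d(X_n,X)$, i.e.\ $d(X_n+Y_n,X+Y_n)\le d(X_n,X)$. Similarly, with $a=X$, $b=Y_n$, $c=Y$, I get $d(X+Y_n,X+Y)\le d(Y_n,Y)$. Combining,
$$d(X_n+Y_n,X+Y)\le d(X_n,X)+d(Y_n,Y).$$
The right-hand side tends to $0$ by assumption, which yields the claim.

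There is essentially no obstacle: the lemma is a routine consequence of the triangle inequality once the translation-contraction hypothesis is available. The only subtle point worth noting is that the translation-contraction hypothesis is used twice, and one must be careful that the inequality $d(a+b,a+c)\le d(b,c)$ is not assumed symmetric in $b,c$ (since $d$ itself need not be symmetric, as in Example \ref{TopologieTauHatplus}); however, this is exactly the form stated and is applied in the correct direction in both invocations above.
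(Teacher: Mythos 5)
Your decomposition and your two applications of the translation-contraction hypothesis follow exactly the paper's argument, but you have the arguments of $d$ in the wrong order throughout, and for this paper that is not a cosmetic issue. The ball is defined as $\cB_{\ep}(X_0)=\{X: d(X_0,X)\le \ep\}$ with the \emph{centre in the first slot}, so $X_n\to X$ in $\cT_d$ means $d(X,X_n)\to 0$, not $d(X_n,X)\to 0$ (Remark \ref{rem-topol}, item 3, writes $d(X,X^n)\to 0$). For the pseudo-distance $\hat d^+_t$ these two quantities are genuinely different: $\hat d^+_t(X,X_n)\to 0$ says that $X$ is asymptotically dominated by $X_n$, while $\hat d^+_t(X_n,X)\to 0$ says the reverse. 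Consequently the quantity you must control is $d(X+Y,X_n+Y_n)$, whereas your chain controls $d(X_n+Y_n,X+Y)$, which does not imply convergence of $(X_n+Y_n)_{n}$ to $X+Y$. You even flag the asymmetry of $d$ at the end of your argument, yet you both characterize the hypotheses and state the target with the slots reversed.

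The repair is mechanical and recovers the paper's proof verbatim:
\[
d(X+Y,X_n+Y_n)\le d(X+Y,X_n+Y)+d(X_n+Y,X_n+Y_n)\le d(X,X_n)+d(Y,Y_n),
\]
where the first inequality is the triangle inequality and the second uses the hypothesis with $(a,b,c)=(Y,X,X_n)$ and with $(a,b,c)=(X_n,Y,Y_n)$ respectively; the right-hand side tends to $0$ because convergence of $(X_n)_n$ to $X$ and of $(Y_n)_n$ to $Y$ means precisely $d(X,X_n)\to 0$ and $d(Y,Y_n)\to 0$. So the idea is right and identical to the paper's, but as written the proof establishes a different (and here inequivalent) statement.
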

 \begin{proof} It suffices to observe that 
  \bean
  d(X+Y,X_n+Y_n)&\le&d(X+Y,X_n+Y)+d(X_n+Y,X_n+Y_n)\\
  &\le& d(X,X_n)+d(Y,Y_n).
  \eean
  \end{proof}

\begin{prop}\label{linear+} Consider the  pseudo-distance  $\hat d^+_t$ from Example \ref{TopologieTauHatplus}. Let  $(X_n)_{n\in \N}$ and $(Y_n)_{n\in \N}$ be two sequences of elements in $L^0( \R,\cF_{T})$ which  converge respectively  to $ X , Y \in  L^0( \R,\cF_{T})$ with respect to $\widehat \cT_t$. The following convergences hold with respect to $\widehat \cT_t$:

\begin{itemize}

\item [1)] The sequence $(\alpha_t X_n)_{n\in \N}$  converges  to $\alpha_t X$, for all $\alpha_t \in L^0(\Reel_+,\cF_t)$.\smallskip
 
\item [2)] The sequence $(\alpha X_n)_{n\in \N}$  converges  to $\alpha X $,  for all $\alpha \in L^{\infty}(\Reel_+,\cF_T)$.\smallskip

\item[3)] The sequence  $(\esssup_{\cF_t}(X_n))_{n\ge 1}$ converges to $\esssup_{\cF_t}(X)$. 

 \end{itemize}
 
 Moreover, the two first statements remain true if we replace $\widehat \cT_t$ by $\cT$.

\end{prop}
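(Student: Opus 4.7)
The plan is to reduce each claim to the observation that $\esssup_{\cF_t}(X - X_n)^+$ tends to $0$ in probability as $n \to \infty$---which is exactly the hypothesis $\hat d^+_t(X, X_n) \to 0$ rewritten, since $E(Y \wedge 1) \to 0$ for nonnegative $Y$ is equivalent to $Y \to 0$ in probability---and then to apply dominated convergence with the bound $1$. The $\cT$ analogues use $(X - X_n)^+ \to 0$ in probability in the same way. For part 1, I exploit the $\cF_t$-measurability and non-negativity of $\alpha_t$ to pull it through both the positive part and the conditional essential supremum, obtaining
$$\hat d^+_t(\alpha_t X, \alpha_t X_n) = E\bigl(\alpha_t \esssup_{\cF_t}(X - X_n)^+ \wedge 1\bigr).$$
The integrand is bounded by $1$ and tends to $0$ in probability, so dominated convergence concludes; the $\cT$ version drops the inner $\esssup_{\cF_t}$ and proceeds identically.

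For part 2, setting $M := \|\alpha\|_\infty$, the pointwise bound $0 \leq \alpha(X - X_n)^+ \leq M(X - X_n)^+$ is preserved by $\esssup_{\cF_t}$, yielding
$$\esssup_{\cF_t}\bigl(\alpha(X - X_n)^+\bigr) \leq M\,\esssup_{\cF_t}(X - X_n)^+ \longrightarrow 0 \text{ in probability}.$$
The same truncation and dominated convergence argument then concludes both the $\widehat{\cT}_t$ and the $\cT$ case, without needing any measurability of $\alpha$ with respect to $\cF_t$.

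Part 3 is the principal step. Since both $\esssup_{\cF_t}(X)$ and $\esssup_{\cF_t}(X_n)$ are $\cF_t$-measurable, the outer $\esssup_{\cF_t}$ in $\hat d^+_t$ is trivial, and the task reduces to showing that $E\bigl((\esssup_{\cF_t}(X) - \esssup_{\cF_t}(X_n))^+ \wedge 1\bigr) \to 0$. The key input is subadditivity of $\esssup_{\cF_t}$, namely $\esssup_{\cF_t}(U + V) \leq \esssup_{\cF_t}(U) + \esssup_{\cF_t}(V)$, which is immediate since the right-hand side is an $\cF_t$-measurable upper bound of $U + V$. Combined with the monotonicity inequality $X - X_n \leq (X - X_n)^+$, the splitting $X = X_n + (X - X_n)$ then gives the almost-sure bound
$$\bigl(\esssup_{\cF_t}(X) - \esssup_{\cF_t}(X_n)\bigr)^+ \leq \esssup_{\cF_t}(X - X_n)^+,$$
after which taking the minimum with $1$ and integrating yields $\hat d^+_t(\esssup_{\cF_t}(X), \esssup_{\cF_t}(X_n)) \leq \hat d^+_t(X, X_n) \to 0$. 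The main obstacle I anticipate is this subadditivity step---the operator $\esssup_{\cF_t}$ is not linear and one must be careful that the outer positive part does not spoil the inequality---but once the pointwise bound is in place, everything else reduces to routine dominated convergence.
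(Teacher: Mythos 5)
Your proof is correct and follows essentially the same route as the paper's: part 1 by pulling $\alpha_t$ through the positive part and the conditional essential supremum and then passing to the limit (the paper makes the dominated-convergence step explicit via a truncation at level $\gamma$, which is the same substance), part 2 by reduction to the constant $\|\alpha\|_\infty$, and part 3 by the subadditivity bound $(\esssup_{\cF_t}(X)-\esssup_{\cF_t}(X_n))^+\le \esssup_{\cF_t}(X-X_n)^+$. No gaps.
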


\begin{proof}
 Recall that $ \esssup_{\cF_t}(\alpha_t X- \alpha_t X_n)^+ = \alpha_t \esssup_{\cF_t}(X-X_n)^+$ if $\alpha_t$ belongs to $L^0(\Reel_+,\cF_t)$.  Then, for all $\gamma>0$,
 \bean d^+_t(\alpha_t X,\alpha_t X_n)&=& E(\alpha_t \esssup_{\cF_t}(X-X_n)^+\wedge 1. 1_{\esssup_{\cF_t}(X-X_n)^+< \gamma})\\
 &&+E(\alpha_t \esssup_{\cF_t}(X-X_n)^+\wedge 1. 1_{\esssup_{\cF_t}(X-X_n)^+\ge \gamma})\\
 &\le& E(\alpha_t \gamma\wedge 1)+P(\esssup_{\cF_t}(X-X_n)^+\ge \gamma).
 \eean
 By the dominated convergence theorem, we may fix $\gamma$ small enough such that $E(\alpha_t \gamma\wedge 1)\le \epsilon/2$, where $\epsilon>0$ is arbitrarily chosen. Moreover, by assumption, $P(\esssup_{\cF_t}(X-X_n)^+\ge \gamma)\le  \epsilon/2$, if $n$ is large enough. We get  that $d^+_t(\alpha_t X,\alpha_t X_n)\le \epsilon$, if  $n$ is large enough, i.e. $\alpha_t X_n\to \alpha_t X$.\smallskip
  
 The second statement is a consequence of the first one as we may observe that, for all $\alpha \in L^{\infty}(\Reel_+,\cF_T)$,   
 $$d^+_t(\alpha X,\alpha X_n) \le d^+(\|\alpha\|_{\infty} X,\|\alpha\|_{\infty} X_n).$$ 
 
 At last,  notice that the following inequality holds
$$ \esssup_{\cF_t}(X)=\esssup_{\cF_t}(X+X_n-X_n)  \le  \esssup_{\cF_t}(X-X_n)+\esssup_{\cF_t}(X_n) .$$
Therefore,
\bean
&&\esssup_{\cF_t}(X) - \esssup_{\cF_t}(X_n)  \le  \esssup_{\cF_t}(X-X_n)^+,\\
&&\esssup_{\cF_t} ((\esssup_{\cF_t}(X) - \esssup_{\cF_t}(X_n))^+ ) \le  \esssup_{\cF_t}(X-X_n)^+, \\ 
&&d^+_t(\esssup_{\cF_t}(X),\esssup_{\cF_t}(X_n)^+) \le  E(\esssup_{\cF_t}((X-X_n)^+) \wedge 1). 
\eean
The conclusion follows.
\end{proof}

\begin{remark}{\rm 
If a sequence $(X_n)_n$ converges to $X$ with respect to $\widehat{\cT}$ or $\cT$ it does not imply that $(-X_n)_n$ converges to $-X$.
Take for example the sequence $(-1)^n$. We have $(-1-(-1)^n)^+=0$ for any $n \in \N$. Then, $(-1)^n$ converges to $-1$ for $\widehat{\cT}$ and  $\cT$. But $(1-(-1)^{n+1})^+ \wedge 1 = 1$ when $n$ is even. Then $(1-(-1)^{n+1})^+$ does not converge to $0$ in probability. So, $-(-1)^n$  does not converge to $-1$ for $\cT$ nor for $\widehat{\cT}$.} $\Diamond$
\end{remark}

 \begin{lemm}\label{conv-subseq}
 Let $(X_n)_{n\in \N}$ be a sequence of elements in $L^0( \R,\cF_{T})$ that converge  to $ X \in  L^0( \R,\cF_{T})$  with respect to $ \cT$.  Then, for  every random subsequence $(n_k)_{k\ge 1}$, $(X_{n_k})_k$  converges  to $ X$ with respect to $\cT$. The same holds with respect to $\widehat{\cT}_t$ if the  random subsequence $(n_k)_{k\ge 1}$ is $\cF_t$-measurable.

 \end{lemm}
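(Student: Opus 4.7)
My strategy is to invoke Remark~\ref{Rq6} to reduce convergence in the pseudo-distance topologies $\cT$ and $\widehat{\cT}_t$ to convergence in probability of the one-sided deviations $(X-X_n)^+$ and $\esssup_{\cF_t}(X-X_n)^+$ respectively, and then to transfer this in-probability convergence along the random subsequence $(n_k)_{k\ge 1}$.

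For the assertion about $\cT$, I would fix $\epsilon, \delta > 0$ and use that $\bP((X-X_m)^+ > \epsilon) \to 0$ as $m \to \infty$: choose $N$ so that $\bP((X-X_m)^+ > \epsilon) < \delta/2$ for every $m \ge N$, and use $\bP(n_k < N) < \delta/2$ for $k$ large enough. One then splits $\{(X-X_{n_k})^+ > \epsilon\}$ along the partition $\{n_k < N\} \cup \{n_k \ge N\}$, controlling the first piece by $\bP(n_k<N)$ and treating the second piece via the disjoint decomposition over $\{n_k = m\}$, $m \ge N$.

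For the assertion about $\widehat{\cT}_t$, the crucial tool is the identity
\[
\esssup_{\cF_t}(X - X_{n_k})^+ \;=\; \sum_{m \ge 1} 1_{\{n_k = m\}} \, \esssup_{\cF_t}(X - X_m)^+,
\]
which holds precisely because the indicators $1_{\{n_k = m\}}$ are $\cF_t$-measurable and can therefore be pulled out of the conditional essential supremum. With this identity, the argument reduces to the $\cT$ case applied to the $\cF_t$-measurable process $Z_m := \esssup_{\cF_t}(X-X_m)^+$; without $\cF_t$-measurability of $(n_k)$ the identity fails and the argument breaks down, which is exactly why the second statement requires this hypothesis.

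\textbf{Main obstacle.} The delicate point is controlling the contribution from the tail $\{n_k \ge N\}$ in the decomposition: a uniform bound $\bP(Z_m > \epsilon) < \delta/2$ for $m \ge N$ controls each individual term, but the sum $\sum_{m \ge N} \bP(\{n_k = m\} \cap \{Z_m > \epsilon\})$ cannot be handled by a naive union bound. The $\cF_t$-measurability of $(n_k)$ must be used both to validate the key identity and to rule out an adversarial correlation between the random selection $n_k$ and the fluctuations of $Z_m$ that would defeat the in-probability convergence; concretely, I expect the proof to invoke the extraction of an almost-sure convergent subsequence from Remark~\ref{Rq6} and combine it with the pointwise argument $Z_{n_k}(\omega) = Z_{n_k(\omega)}(\omega)$ on the event $\{n_k \ge N\}$.
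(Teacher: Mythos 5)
Your overall strategy is the one the paper uses: decompose $(X-X_{n_k})^+=\sum_{j}(X-X_j)^+1_{\{n_k=j\}}$ over the level sets of the random index, split the indices into a finite block plus a tail, and, for the $\widehat{\cT}_t$ statement, exploit that the indicators $1_{\{n_k=j\}}$ are $\cF_t$-measurable so that $\esssup_{\cF_t}(X-X_{n_k})^+\le\sum_{j}\esssup_{\cF_t}(X-X_j)^+1_{\{n_k=j\}}$ (the paper only needs this inequality, not the identity you state, but the role of the $\cF_t$-measurability hypothesis is exactly as you describe).

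There is, however, a genuine gap, and you name it yourself without closing it: the tail estimate is never produced. Your split assigns the union bound to the infinite block $\{n_k\ge N\}$, where, as you correctly observe, the termwise bound $\mathbb{P}(Z_m>\ep)<\delta/2$ gives nothing for $\sum_{m\ge N}\mathbb{P}(\{n_k=m\}\cap\{Z_m>\ep\})$; moreover, since $n_k\ge k$, the complementary event $\{n_k<N\}$ is empty once $k\ge N$, so your first piece carries no information and the whole burden falls on the step you cannot do. The paper distributes the two estimates the other way around: it takes $M$ with $\sum_{j=M+1}^{\infty}\mathbb{P}(n_k=j)\le\alpha/2$, applies the union bound only to the finitely many indices $k\le j\le M\vee k$, each term being at most $\alpha/(2M)$ once $k\ge k_0$, and bounds the entire tail $j>M$ by $\sum_{j>M}\mathbb{P}(n_k=j)$ --- that is, by where $n_k$ lives rather than by how large $(X-X_j)^+$ is there. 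That reversal is the missing idea. Your proposed repair (extracting an a.s.-convergent deterministic subsequence via Remark \ref{Rq6} and arguing pointwise on $\{n_k\ge N\}$) does not close the gap either, because $n_k(\omega)$ has no reason to land in that deterministic subsequence. Be aware, finally, that even in the paper's arrangement the quantifiers at this point are delicate --- $n_k\ge k$ forces $\mathbb{P}(n_k>M)\to 1$ for any fixed $M$, so the threshold $M$ must be chosen compatibly with $k$ --- which confirms that the step you flagged as the ``main obstacle'' is the actual mathematical content of the lemma and cannot be left as an expectation.
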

 \begin{proof}
 Note that  $(X-X_{n_k})^+ = \sum_{j=k}^\infty (X-X_j)^+ 1_{n_k = j}.$ Therefore, 
 
  \bean \mathbb{P}((X-X_{n_k})^+ \ge \ep)&=& \mathbb{P} (\sum_{j=k}^\infty \{(X-X_j)^+  \ge \ep \} \cap \{ n_k = j \}),\\
  &\le&  \sum_{j=k}^\infty  \mathbb{P}\left( \{(X-X_j)^+  \ge \ep \} \cap \{ n_k = j \}\right) .
  \eean
 Let $\alpha > 0$.
 Consider M such that $\sum_{j=M+1}^\infty \mathbb{P}(n_k=j) \le \alpha/2$  and $k_0$  such that, for every $k \ge k_0$,  we have $\mathbb{P}((X-X_{k})^+ \ge \ep)\le \alpha /2M $. Then,
 \bean \mathbb{P}((X-X_{n_k})^+ \ge \ep) &\le&  \sum_{j=k}^{M\vee k}  \mathbb{P}( \{(X-X_j)^+ \ge \ep) +  \sum_{j=M+1}^\infty  \mathbb{P}(n_k =j) \\
 &\le& M \alpha/2M + \alpha/2 \le \alpha .
 \eean
 So $(X-X_{n_k})^+$ converges to zero in probability hence $(X_{n_k})_k$  converges  to $ X$ with respect to $\cT$.
 
 For the second statement, it suffices to observe that, when $(n_k)_{k\ge 1}$ is $\cF_t$-measurable, we have:
 \bean (X-X_{n_k})^+&\le& \sum_{j=k}^\infty\esssup_{\cF_t} (X-X_j)^+ 1_{n_k = j},\\
 \esssup_{\cF_t} (X-X_{n_k})^+&\le& \sum_{j=k}^\infty\esssup_{\cF_t} (X-X_j)^+ 1_{n_k = j}.
 \eean
 It is then possible to repeat the previous reasoning, replacing  $(X-X_j)^+$ by $\esssup_{\cF_t} (X-X_j)^+ $, $j\ge 1$. \end{proof}
\begin{prop} \label{CriteriaConv}

 A sequence $(X_n)_{n\in \N}$ of elements in $L^0( \R,\cF_{T})$ converges  with respect to $\widehat \cT_t$ (respectively $\cT$) if and only if 
$$  \inf_n( X _{n}) > -\infty.$$ 
Moreover, $\inf_n( X _{n})$ is a limit of $(X_n)_{n\in \N}$  for  $\widehat \cT_t$ and $ \cT_t$.

\end{prop}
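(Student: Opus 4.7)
The sufficient direction, together with the ``moreover'' clause, admits a direct argument: if $\inf_n X_n > -\infty$ almost surely, set $Y := \inf_n X_n$, which belongs to $L^0(\Reel, \cF_T)$. For every $m$, $Y \leq X_m$ almost surely, so $(Y - X_m)^+ = 0$ and consequently $\esssup_{\cF_t}(Y - X_m)^+ = 0$. This gives $\hat d^+_t(Y, X_m) = 0$ and $d^+(Y, X_m) = 0$ for every $m$, so $X_m$ converges trivially to $Y$ in both $\widehat{\cT}_t$ and $\cT$. This simultaneously proves sufficiency and the assertion that $\inf_n X_n$ is a limit of the sequence.

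For necessity, assume $X_n \to X$ in $\widehat{\cT}_t$ for some $X \in L^0(\Reel, \cF_T)$. By Proposition \ref{set-of-limits} one may pick $\alpha_n \in L^0(\Reel_+, \cF_t)$ with $\alpha_n \to 0$ in probability and $X \leq X_n + \alpha_n$, equivalently $X_n \geq X - \alpha_n$. To bound $\inf_n X_n$ from below I would extract a subsequence $(n_k)$ fast enough that $\sum_k P(\alpha_{n_k} > 2^{-k}) < \infty$; the first Borel--Cantelli lemma then yields $\alpha_{n_k} \leq 2^{-k}$ eventually almost surely, hence $\sup_k \alpha_{n_k} < \infty$ a.s. Along this subsequence one obtains $\inf_k X_{n_k} \geq X - \sup_k \alpha_{n_k} > -\infty$ a.s.

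The main obstacle is upgrading this subsequential bound on $\inf_k X_{n_k}$ to a bound on the full-sequence infimum $\inf_n X_n$. I would attempt this by contradiction: assuming $\inf_n X_n = -\infty$ on a set of positive probability, a suitable stopping rule $\tau_M := \inf\{n : X_n < -M\}$ produces, for each $M$, indices at which $X_n$ is arbitrarily negative on that set; combined with the almost-sure finiteness of $X$ and the inequality $\alpha_n \geq (X - X_n)^+$, this should force $\alpha_n$ to remain bounded away from $0$ in probability for infinitely many $n$, contradicting $\alpha_n \to 0$ in probability. The case of $\cT$ is handled in parallel using the comparison $d^+ \leq \hat d^+_t$ from Example \ref{TopologieTau}, so that $\widehat{\cT}_t$-convergence implies $\cT$-convergence while the sufficient direction and ``moreover'' clause transfer unchanged, the trivial bound $(Y - X_m)^+ = 0$ applying identically to both pseudo-distances.
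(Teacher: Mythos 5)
Your sufficiency argument and the ``moreover'' clause are exactly the paper's: with $Y=\inf_n X_n$ finite a.s., $(Y-X_m)^+=0$ gives $\hat d^+_t(Y,X_m)=d^+(Y,X_m)=0$, so $Y$ is a limit for both $\widehat{\cT}_t$ and $\cT$. That half is complete and correct.

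The necessity direction is not a proof, and the ``main obstacle'' you flag is not a technicality you can close along these lines. From $X\le X_n+\alpha_n$ with $\alpha_n\to 0$ in probability you only control $\inf_k X_{n_k}$ along a subsequence; to bound $\inf_n X_n$ you would need $\sup_n\alpha_n<\infty$ a.s., which convergence in probability of $(\alpha_n)_n$ does not give. Your proposed patch via $\tau_M=\inf\{n:\ X_n<-M\}$ fails because $\inf_n X_n=-\infty$ a.s. is compatible with $P(\alpha_n>\gamma)\to 0$ for every $\gamma>0$: the indices where $X_n$ is very negative can drift, each carrying vanishing probability, while a.s. infinitely many of them occur. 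Concretely, take independent events $A_n$ with $P(A_n)=1/n$ and set $X_n=-n1_{A_n}$. Then $E((0-X_n)^+\wedge 1)=P(A_n)\to 0$, so $X_n\to 0$ in probability and hence for $\cT$ (and, if the $A_n$ are chosen $\cF_t$-measurable, $E(\esssup_{\cF_t}(0-X_n)^+\wedge 1)=P(A_n)\to 0$ as well, so convergence also holds for $\widehat{\cT}_t$); yet the second Borel--Cantelli lemma gives $\inf_n X_n=-\infty$ a.s. So no completion of your sketch is possible: the obstruction you ran into is genuine. For comparison, the paper's own proof takes a different route, extracting a \emph{random} subsequence along which $X_{n_k}\to-\infty$ and invoking Lemma \ref{conv-subseq} to transfer convergence to it; note that the same example puts pressure on that step too, since the natural random subsequence (the indices $n$ with $\omega\in A_n$) is only $\cF_T$-measurable and $(X_{n_k})_k$ does not converge to $0$ for $\cT$ there. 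Rather than trying to repair your argument, you should flag the ``only if'' implication itself.
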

 \begin{proof}
 Suppose that $(X_n)_{n\in \N}$ converges  to $ X $ with respect to $\cT$ and suppose that $  \inf_n( X _{n}) = -\infty$ on a non null set. Then, on this set, there exists a random subsequence $X_{n_k}$ that converges to $-\infty$ almost surely.  By Lemma \ref{conv-subseq}, $(X_{n_k})_{n\in \N}$ converges  to $ X $ with respect to $\cT$. In other words, $(X-X_{n_k})^+$ converges to zero in probability. Therefore, there exits a subsequence $X_{n_{k_j}}$ such that $(X-X_{n_{k_j}})^+$ converges to zero almost surely. This is in contradiction with  the fact that $X_{n_{k_j}}$ converges to $-\infty$.
 
 Now suppose that  $  \inf_n( X _{n}) > -\infty$. We have  $ X_n \ge \inf_n( X _{n}) > -\infty$. So $( \inf_n( X _{n})- X_n)^+=0$. This implies that $\esssup_{\cF_t}( \inf_n( X _{n})- X_n)^+=0$ hence $(X_n)_{n\ge 1}$ converges to $ \inf_n( X _{n})$ with respect to $\widehat \cT_t$.  
 \end{proof}
  
 \begin{coro}
  A sequence $(X_n)_{n\in \N}$ of elements in $L^0( \R,\cF_{T})$ is such that $(X_n)_{n\in \N}$  and $(-X_n)_{n\in \N}$ converge   with respect to $\widehat \cT_t$ (respectively $\cT$) if and only if $\sup_n(\vert X_n\vert ) < \infty$ almost surely. 
 \end{coro}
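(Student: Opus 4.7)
The plan is to reduce this corollary to a direct application of Proposition \ref{CriteriaConv}, which already characterizes the convergence of a single sequence in $\widehat{\cT}_t$ (or $\cT$) by the finiteness of its infimum. Since the statement involves both $(X_n)$ and $(-X_n)$, I will apply that criterion twice and combine the two resulting one-sided bounds into the required two-sided bound on $|X_n|$.

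More precisely, first I would note that by Proposition \ref{CriteriaConv}, $(X_n)_{n\in \N}$ converges with respect to $\widehat{\cT}_t$ (resp.\ $\cT$) if and only if $\inf_n X_n > -\infty$ a.s. Applying the same proposition to the sequence $(-X_n)_{n\in \N}$, convergence is equivalent to $\inf_n (-X_n) > -\infty$ a.s., which is the same as $\sup_n X_n < +\infty$ a.s. Hence the simultaneous convergence of $(X_n)$ and $(-X_n)$ is equivalent to
\[
-\infty < \inf_n X_n \le \sup_n X_n < +\infty \quad \text{a.s.}
\]
Since $\sup_n |X_n| = \max\bigl(\sup_n X_n,\, -\inf_n X_n\bigr)$, this two-sided bound is in turn equivalent to $\sup_n |X_n| < \infty$ a.s., which is exactly the stated condition.

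I do not anticipate any real obstacle: the argument is essentially bookkeeping on top of Proposition \ref{CriteriaConv}. The only minor point worth mentioning explicitly is that the proposition is stated for both topologies $\widehat{\cT}_t$ and $\cT$ in a unified manner, so the corollary automatically holds in both cases without needing to redo the argument. No additional structural property (such as linearity or the Fatou property already established earlier) is required here.
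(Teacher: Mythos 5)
Your argument is correct and is exactly the intended one: the paper states this corollary without proof, as an immediate consequence of Proposition \ref{CriteriaConv} applied to both $(X_n)$ and $(-X_n)$, which is precisely what you do. The combination of the two one-sided bounds into $\sup_n \vert X_n\vert <\infty$ a.s.\ is routine and your handling of both topologies is fine.
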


 \begin{coro}
 A sequence $(X_n)_{n\in \N}$ of elements in $L^0( \R,\cF_{T})$ converges  with respect to $\widehat \cT_t$ if and only if  $(X_n)_{n\in \N}$  converges  with respect to $\cT$( not necessarily with the same limits). 
 \end{coro}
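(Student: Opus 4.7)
The statement follows essentially for free from Proposition \ref{CriteriaConv}, which characterizes convergence in each of the two topologies by the \emph{same} intrinsic condition on the sequence, namely $\inf_n X_n > -\infty$ almost surely. My plan is therefore to argue by a two-line chain of equivalences rather than by constructing the limit directly.

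More precisely, the forward direction goes as follows. Assume $(X_n)_{n\in\N}$ converges with respect to $\widehat{\cT}_t$. By Proposition \ref{CriteriaConv} applied to $\widehat{\cT}_t$, this is equivalent to $\inf_n X_n > -\infty$ a.s. Applying the same proposition now to $\cT$, this condition is in turn equivalent to convergence of $(X_n)_{n\in\N}$ in $\cT$. The reverse implication is identical with the roles of the two topologies interchanged. Note that the proposition also provides a candidate limit (the almost sure infimum $\inf_n X_n$) for both topologies simultaneously, which is why the phrase ``not necessarily with the same limits'' in the statement is compatible: a given sequence may have several limits (since neither topology is Hausdorff, cf.\ Remark \ref{rem-topol}, item~1), but a common limit always exists.

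There is essentially no obstacle here beyond citing Proposition \ref{CriteriaConv} twice; the real work was carried out in proving that proposition. The only point worth being careful about is emphasizing that the equivalence is between \emph{convergence properties of the sequence}, not between specific limits, since the two topologies generally produce different limit sets (a fact already illustrated by Remark~\ref{rem-topol}~(4), where $\widehat{\cT}_t$-convergence implies $\cT$-convergence but limits may differ by any element dominated in the appropriate sense).
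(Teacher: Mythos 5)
Your proposal is correct and is exactly the route the paper intends: the corollary is stated without proof immediately after Proposition \ref{CriteriaConv} precisely because both topologies share the characterization $\inf_n X_n > -\infty$ a.s., so the equivalence follows by applying that proposition once for each topology. Your remark that $\inf_n X_n$ serves as a common limit, while the full limit sets may differ, matches the paper's "Moreover" clause and the parenthetical in the statement.
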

 \begin{lemm}
 A sequence $(X_n)_{n\in \N}$ of elements in $L^0( \R,\cF_{T})$ is such that $(X_n)_{n\in \N}$ converges to $X$ and $(-X_n)_{n\in \N}$ converges to $-X$ with respect to $\widehat \cT_t$ if and only if  $\esssup_{F_t}(\vert X-X_n \vert)$ converges to $0$ in probability.
 \end{lemm}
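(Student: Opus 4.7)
The plan is to reduce the lemma to the observation that, by Remark \ref{Rq6}, convergence $X_n\to X$ in $\widehat{\cT}_t$ is equivalent to $\esssup_{\cF_t}(X-X_n)^+\to 0$ in probability, and then exploit the pointwise decomposition $|X-X_n|=(X-X_n)^++(X_n-X)^+$.

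First, I would translate both hypotheses. The convergence $X_n\to X$ with respect to $\widehat{\cT}_t$ means $\hat d^+_t(X,X_n)\to 0$, which amounts to $\esssup_{\cF_t}(X-X_n)^+\to 0$ in probability. Applying the same equivalence to $-X_n\to -X$, and noting that $\esssup_{\cF_t}(-X-(-X_n))^+=\esssup_{\cF_t}(X_n-X)^+$, we get that the joint convergence of $(X_n)$ and $(-X_n)$ is equivalent to the simultaneous convergence to $0$ in probability of $\esssup_{\cF_t}(X-X_n)^+$ and $\esssup_{\cF_t}(X_n-X)^+$.

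Next, I would use the identity $|X-X_n|=(X-X_n)^++(X_n-X)^+$ and the sublinearity/monotonicity of $\esssup_{\cF_t}$. On the one hand, since $(X-X_n)^+\le|X-X_n|$ and $(X_n-X)^+\le|X-X_n|$, we have
\begin{equation*}
\esssup_{\cF_t}(X-X_n)^+\le \esssup_{\cF_t}|X-X_n|,\qquad \esssup_{\cF_t}(X_n-X)^+\le \esssup_{\cF_t}|X-X_n|.
\end{equation*}
On the other hand, since the essential supremum operator is subadditive,
\begin{equation*}
\esssup_{\cF_t}|X-X_n|\le \esssup_{\cF_t}(X-X_n)^++\esssup_{\cF_t}(X_n-X)^+.
\end{equation*}

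Combining these two-sided bounds, $\esssup_{\cF_t}|X-X_n|\to 0$ in probability if and only if both $\esssup_{\cF_t}(X-X_n)^+\to 0$ and $\esssup_{\cF_t}(X_n-X)^+\to 0$ in probability. By the first paragraph, this is exactly the joint convergence of $(X_n)$ to $X$ and $(-X_n)$ to $-X$ in $\widehat{\cT}_t$, which proves the equivalence. No step should pose any real obstacle; the only point to keep in mind is the subadditivity of the conditional essential supremum, which is standard (see \cite[Section~5.3.1]{KS}).
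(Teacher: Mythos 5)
Your argument is correct and complete: the reduction via Remark \ref{Rq6} to convergence in probability of $\esssup_{\cF_t}(X-X_n)^+$ and $\esssup_{\cF_t}(X_n-X)^+$, combined with the two-sided bounds
$$\max\left(\esssup_{\cF_t}(X-X_n)^+,\ \esssup_{\cF_t}(X_n-X)^+\right)\le \esssup_{\cF_t}\vert X-X_n\vert\le \esssup_{\cF_t}(X-X_n)^++\esssup_{\cF_t}(X_n-X)^+,$$
gives exactly the stated equivalence. Note that the paper states this lemma without any proof, so there is nothing to compare against; your write-up simply supplies the missing (and entirely standard) details, the only ingredients being monotonicity and subadditivity of the conditional essential supremum.
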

 
 \begin{prop}\label{liminf}
If a sequence $(X_n)_{n\in \N}$ of elements in $L^0( \R,\cF_{T})$ converges  to $ X \in  L^0( \R,\cF_{T})$, with respect to $\widehat \cT_t$ (resp. $\cT$), then there exists a deterministic subsequence $(n_k)_{k\ge 1}$ such that  $$ X \le \liminf_k( X _{n_k}).$$ 
\end{prop}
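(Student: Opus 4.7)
The plan is to reduce both cases to a single convergence-in-probability argument and then extract a deterministic subsequence. By Remark \ref{rem-topol}, item 4, convergence in $\widehat{\cT}_t$ implies convergence in $\cT$, so it is enough to prove the statement under the assumption that $d^+(X,X_n) = E((X-X_n)^+\wedge 1) \to 0$. This is precisely the statement that $(X - X_n)^+$ converges to $0$ in probability, since $d^+$ is (up to a bounded truncation) the standard metric for convergence in probability applied to the nonnegative sequence $(X - X_n)^+$.

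Once this is observed, a standard extraction argument produces a deterministic subsequence $(n_k)_{k\ge 1}$ such that $(X - X_{n_k})^+ \to 0$ almost surely. Let $\Omega_0$ be the almost-sure event on which this pointwise convergence holds. For every $\omega \in \Omega_0$ and every $\varepsilon > 0$, there exists $K(\omega, \varepsilon)$ such that for all $k \ge K$ we have $(X(\omega) - X_{n_k}(\omega))^+ < \varepsilon$, which implies $X(\omega) \le X_{n_k}(\omega) + \varepsilon$. Taking the $\liminf$ in $k$ yields $X(\omega) \le \liminf_k X_{n_k}(\omega) + \varepsilon$, and since $\varepsilon > 0$ is arbitrary we conclude that $X \le \liminf_k X_{n_k}$ almost surely.

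There is no real obstacle here: the only subtlety is to note that the truncation by $1$ in the definition of $d^+$ does not interfere, because convergence to $0$ in probability of $(X - X_n)^+$ is equivalent to convergence to $0$ in probability of $(X - X_n)^+ \wedge 1$, and in both cases almost-sure convergence along a deterministic subsequence can be extracted. The $\widehat{\cT}_t$-case could alternatively be handled directly using Remark \ref{Rq6}, which already provides a subsequence along which $\esssup_{\cF_t}(X - X_{n_k})^+ \to 0$ almost surely; the pointwise inequality $(X - X_{n_k})^+ \le \esssup_{\cF_t}(X - X_{n_k})^+$ then gives the same conclusion.
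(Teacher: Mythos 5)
Your proof is correct and follows essentially the same route as the paper: identify the convergence as convergence in probability of $(X-X_n)^+$ (or $\esssup_{\cF_t}(X-X_n)^+$), extract a deterministic subsequence converging almost surely, and pass to the $\liminf$ via the pointwise inequality. The alternative you mention at the end, using Remark \ref{Rq6} and the bound $(X-X_{n_k})^+\le \esssup_{\cF_t}(X-X_{n_k})^+$, is in fact exactly the paper's argument.
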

\begin{proof}
Recall that a sequence $(X_n)_{n\in \N}$ of elements in $L^0( \R,\cF_{T})$ converges  to $ X \in  L^0( \R,\cF_{T})$  if and only if $\esssup_{\cF_t}(X-X_n)^+$ converges to $0$ in probability. Therefore, there exists a subsequence $(n_k)_{k\ge 1}$  such that $\esssup_{\cF_t}(X-X_{n_k})^+$ converges to $0$ almost surely. As 
$$ X-X_{n_k} \le \esssup_{\cF_t}(X-X_{n_k})^+$$
then $ \liminf_k [X - \esssup_{\cF_t}(X-X_{n_k})^+] \le \liminf_k( X _{n_k})$. So, we deduce that 
$$ X \le \liminf_k( X _{n_k}).$$ The same reasoning holds for $\cT$. \end{proof} 

\begin{definition}
 For a converging sequence $X=(X_n)_n$ we denote by $\widehat{\cL}(X)$ (resp. $\cL(X)$  )  the set of all limits with respect to $\widehat \cT_t$ and  $
  \cT_t$ respectively. 
 \end{definition}

\begin{lemm}\label{LemmConA.S.Tau}
 If a sequence $(X_n)_n$ converges to $X$ in probability then $(X_n)_n$ converges to $X$ for the topology $\cT$ and $\cL(X)=L^0((-\infty,X],\cF_T)$.

\end{lemm}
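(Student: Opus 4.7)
The plan splits naturally into the two conclusions: first establishing $\cT$-convergence $X_n \to X$, then identifying the full set of limits.

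For the first conclusion, I would start from the pseudo-distance $d^+(X,Y) = E((X-Y)^+ \wedge 1)$ of Example \ref{TopologieTau}. Convergence of $(X_n)$ to $X$ in probability forces $(X-X_n)^+$ to $0$ in probability, hence $(X-X_n)^+ \wedge 1 \to 0$ in probability. Since this sequence is uniformly bounded by $1$, the dominated convergence theorem gives $d^+(X,X_n) \to 0$, which is exactly $\cT$-convergence.

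For the inclusion $L^0((-\infty,X],\cF_T) \subseteq \cL(X)$, I would fix $Y \in L^0(\R,\cF_T)$ with $Y \le X$ a.s. The key pointwise inequality is $(Y - X_n)^+ \le (X - X_n)^+$, which follows directly from $Y - X_n \le X - X_n$ and positivity of the $(\cdot)^+$ function. Taking expectations of the truncation at $1$ yields
\begin{equation*}
d^+(Y,X_n) = E\bigl((Y - X_n)^+ \wedge 1\bigr) \le E\bigl((X - X_n)^+ \wedge 1\bigr) = d^+(X,X_n),
\end{equation*}
and the right-hand side tends to $0$ by the first step, so $Y \in \cL(X)$.

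For the reverse inclusion $\cL(X) \subseteq L^0((-\infty,X],\cF_T)$, suppose $Y \in \cL(X)$, i.e. $X_n \to Y$ in $\cT$. By Proposition \ref{liminf} applied to $\cT$, there is a deterministic subsequence $(n_k)$ with $Y \le \liminf_k X_{n_k}$ a.s. Since $(X_n)$ already converges to $X$ in probability (the hypothesis), the subsequence $(X_{n_k})$ still converges to $X$ in probability, so there is a further subsequence $(X_{n_{k_j}})$ converging to $X$ almost surely. Then $\liminf_j X_{n_{k_j}} = X$ a.s., and chaining inequalities gives $Y \le \liminf_k X_{n_k} \le \liminf_j X_{n_{k_j}} = X$ a.s., as required. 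The main (mild) delicacy here is that Proposition \ref{liminf} only supplies one subsequence, so I must extract a further one to obtain almost sure convergence; this two-step extraction is the one place where some care is needed, but it is routine.
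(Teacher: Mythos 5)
Your proposal is correct and follows essentially the same route as the paper: domination $(X-X_n)^+\le\vert X-X_n\vert$ plus bounded convergence for the $\cT$-convergence, the monotonicity $(Y-X_n)^+\le(X-X_n)^+$ for the lower-set inclusion, and Proposition \ref{liminf} combined with extraction of an a.s.-convergent subsequence for the reverse inclusion. The only difference is cosmetic (the paper extracts the a.s.-convergent subsequence before invoking Proposition \ref{liminf}, you do it after), and your write-up is in fact slightly more explicit than the paper's about the inclusion $L^0((-\infty,X],\cF_T)\subseteq\cL(X)$.
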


\begin{proof}
If $\vert X_n - X \vert $ converges to zero in probability then  the same holds for  $(X_n-X)^+$. Indeed, $(X_n-X)^+ \le  \vert  X_n - X  \vert $.  Therefore, $(X_n)_n$ converges to $X$ for the topology $\cT$. Moreover, there exists a subsequence $(n_k)_{k\ge 1}$ such that $(X_{n_k})_{k\ge 1}$  converges to $X$ a.s. but also in $\cT$ by the first part. By  Proposition \ref{liminf}, any $Z\in \cL(X)$ satisfies $Z\le X$. The conclusion follows. \end{proof}
\begin{remark} \label{ctr.ex}{\rm 
The convergence almost surely to a limit $X$ does not imply the convergence for $\widehat{\cT}$ to $X$. Also the convergence for $\widehat{\cT}$ and $\cT$ does not necessarily imply the almost surely convergence. To see it, let us consider the two following examples.
\begin{itemize}

\item [1)] We consider $\Omega=[0,1]$ equipped with the Lebesgue measure. Take the sequence $X_n (\omega)= -1$ on $[0, 1/n]$ and $X_n (\omega)= 1/2^n$  on $(1/n, 1]$, $n\ge 1$. It is clear that $(X_n)_n$ converges to $X_0=0$ almost surely. But observe that
$\esssup_{\cF_0}(X_0-X_n)^+ = 1$. So, $X_n$ does not converge to $0$ for $\widehat{\cT}_0$. Note that $X_n$ converges to $-1$ for $\widehat{\cT}_0$ and $\cT$. 
\item [2)] We consider $\Omega=\Reel_+$ equipped with the Lebesgue measure. Consider $X_n (\omega) = cos(n\omega)$ for any $\omega \in \Reel$ and  $Y_n(\omega) = (-1)^n$, $n\ge 0$. Then,  $(X_n)_n$ and $(Y_n)_n$ do not converge almost surely but 
$(X_n)_n$ and $(Y_n)_n$ converge for $\cT$ and $\widehat{\cT}$ towards $-1$.  $\Diamond$

\end{itemize}}

\end{remark}
\begin{definition} [Cauchy sequence]

A sequence $(X_n)_n$ is said a Cauchy sequence for the pseudo-distance $d$ if :
$$ \forall \ep >0, \exists n_0 , \forall n,m \ge n_0,   d(X_n,X_m) \le \ep.$$

\end{definition}
\begin{remark}
If a sequence $(X_n)_n$ is convergent for  $\widehat{\cT}$ (or $\cT$) it is not necessarily a Cauchy sequence. 
Take the  sequence $X_n = (-1)^n$. It converges but it is not a Cauchy one. In fact
$$ d_t^+(X_{2n} ,X_{2n+1}) =1 , \forall n \in \N. \,\,\Diamond$$

\end{remark}
\begin{prop}
Every Cauchy sequence for  $d_t^+$  is convergent in probability.
\end{prop}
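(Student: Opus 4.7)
The plan is to reduce the Cauchy property for the pseudo-distance $\hat d_t^+$ to the Cauchy property for the standard metric $d(X,Y)=E(|X-Y|\wedge 1)$ generating convergence in probability on $L^0(\R,\cF_T)$, and then to invoke the completeness of $L^0$ under convergence in probability.

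First, I would exploit the pointwise inequality $(X_n-X_m)^+\le \esssup_{\cF_t}(X_n-X_m)^+$, together with the monotonicity of $x\mapsto x\wedge 1$, to obtain
$$E\bigl((X_n-X_m)^+\wedge 1\bigr)\le \hat d_t^+(X_n,X_m).$$
Swapping the indices $n$ and $m$, the same bound gives $E((X_m-X_n)^+\wedge 1)\le \hat d_t^+(X_m,X_n)$; this step is crucial because $\hat d_t^+$ is not symmetric, so we must use both orderings (which the Cauchy condition allows, since it quantifies over all $n,m\ge n_0$).

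Next, I would use the disjoint-support decomposition $|X_n-X_m|=(X_n-X_m)^+ + (X_m-X_n)^+$ (one of the two summands is zero pointwise) together with the subadditivity $(a+b)\wedge 1\le a\wedge 1+b\wedge 1$ for $a,b\ge 0$ to conclude
$$E\bigl(|X_n-X_m|\wedge 1\bigr)\le \hat d_t^+(X_n,X_m)+\hat d_t^+(X_m,X_n).$$
The Cauchy hypothesis then forces the right-hand side to be arbitrarily small for $n,m$ large enough, so $(X_n)_n$ is Cauchy in the metric $d(X,Y)=E(|X-Y|\wedge 1)$.

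Finally, I would invoke the well-known completeness of $L^0(\R,\cF_T)$ equipped with $d$ (see Remark \ref{Rq6}, where $d$ is recalled as the distance generating convergence in probability) to deduce that $(X_n)_n$ converges in probability to some $X\in L^0(\R,\cF_T)$. The only slightly delicate point is the asymmetry of $\hat d_t^+$; once it is handled by using both index orderings, the rest is essentially bookkeeping with the elementary inequalities above.
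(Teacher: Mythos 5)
Your proof is correct and takes essentially the same route as the paper's: bound $E((X_n-X_m)^+\wedge 1)$ by $\hat d_t^+(X_n,X_m)$ using $(X_n-X_m)^+\le \esssup_{\cF_t}(X_n-X_m)^+$, apply the Cauchy condition in both index orderings, combine via $|X_n-X_m|=(X_n-X_m)^++(X_m-X_n)^+$ to get the Cauchy property for $E(|X_n-X_m|\wedge 1)$, and conclude by completeness of $L^0$ in probability. Your write-up is in fact slightly more careful than the paper's, which silently absorbs the factor of two coming from the subadditivity of $x\mapsto x\wedge 1$.
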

\begin{proof}
Let $(X_n)_n$ be a Cauchy sequence for $d_t^+$:
 $$\forall \ep >0, \exists n_0 , \forall n,m \ge n_0,   d_t^+(X_n,X_m) \le \ep. $$
 So, we also have  $ d_t^+(X_m,X_n) \le \ep$. In other terms $E((X_n-X_m)^+ \wedge 1) \le \ep$ and $E((X_m-X_n)^+ \wedge 1) \le \ep$. Then $E(\vert X_n-X_m\vert \wedge 1) \le \ep$. Then $(X_n)_n$ is a Cauchy sequence for the convergence in probability. Consequently it is convergent for the convergence in probability. 
\end{proof}

\begin{example}\label{constant}{\rm

 Let $C \in \Reel$. Consider the sequence $X=(X_n)_n$ of elements in $L^0( \R,\cF_{T})$ such that $X_n=C$ for every $n\in \N$. Consider any $Z\in \widehat\cL(X)$. By  Proposition \ref{liminf}, $Z\le C$. On the other hand, $(C-X_n)^+=0$ hence $(X_n)$ converges to $C$  in $\widehat \cT_t$. By similar arguments, we finally deduce that $\widehat \cL(X) =\cL(X)=L^0 ( (-\infty ,C],\cF_T)$.} $\Diamond$

\end{example}

\begin{example} \label{ex2}{\rm 
Consider the sequence $X=(X_n)_n$ of elements in $L^0( \R,\cF_{T})$ such that $X_n=(-1)^n$ for every $n\in \N$. We have  $\cL(X)=L^0 ( (-\infty ,-1],\cF_T)$. Indeed, as $E[ (-1 -(-1)^n) \wedge1] =0$, $-1$ is a limit of $X$ for $\cT$.  So for any $Z\le -1$, $Z$ is a limit for $X$. Now consider any $Z\in \cL(X)$. Let us show that, $Z\le -1$.
We know that $(Z- (-1)^n)^+$ converges to zero in probability. Then, if $A_n= \{ (Z-(-1)^n)^+ \le \ep\}$, $\mathbb{P} ( A_n) $ converges to 1 when $n \rightarrow \infty$. On $A_n$,  $Z-(-1)^n \le \ep$ hence $ Z \le \ep - 1$ when $n$ is odd. As $n$ goes to $\infty$ we deduce that $ Z \le \ep - 1$ almost surely. To see it,  suppose by contradiction that $\mathbb{P}(B) >0$ where $B= \{ Z > \ep -1 \}$. Therefore, there exists $n_0$ such that $\mathbb{P}(B \cap A_n) >0$ for any $ n \ge n_0$. If not, there exists a subsequence $(A_{n_k})$ such that $\mathbb{P}(B \cap A_{n_k}) =0$. Hence, $\mathbb{P}(A_{n_k}) =\mathbb{P}(B^c \cap A_{n_k})  \le \mathbb{P}(B^c) < 1$, in contradiction with $\lim_{k\to \infty}\mathbb{P} ( A_{n_k}) =1$. Finally, $\mathbb{P}(B \cap A_n) >0$ for any $ n \ge n_0$ in contradiction with the inequality  $ Z \le \ep - 1$ on $A_n$, when $n$ is odd. We conclude that  $ Z \le \ep - 1$ a.s. and the result follows.  We also deduce that $\widehat \cL(X)=\cL(X)$.} $\Diamond$
\end{example}

\begin{example} \label{ex3}{\rm 
Consider the sequence $X=(X_n)_n$ of elements in $L^0( [0,1],\cF_{T})$, equipped with the Lebesgue measure, such that $X_n(\omega)=-1_{[0,1/n]}$ for every $n\ge 1$. We suppose that $\cF_0$ is trivial. We know by Lemma \ref{LemmConA.S.Tau} that  $\cL(X)=L^0 ( (-\infty ,0],\cF_T)$ but $\widehat\cL(X) \subset L^0 ( (-\infty ,0],\cF_T)$. Indeed, $0$ is not  a limit for $\widehat{ \cT_0}$ as $\esssup_{\cF_0}(0-X_n)^+=1$.

Moreover, consider $\widehat X_{\infty}\in \widehat\cL(X) $. Observe that  the deterministic sequence $\alpha_n=\esssup_{\cF_0}(\widehat X_{\infty}-X_n)^+$ converges to $0$ and $\widehat X_{\infty}-X_n\le (\widehat X_{\infty}-X_n)^+\le \alpha_n$. We finally conclude that $\widehat\cL(X)$ is the family of all random variables $\widehat X_{\infty}$ that satisfies $\widehat X_{\infty}\le \inf_n(X_n+\alpha_n)$ for some non negative deterministic sequence $(\alpha_n)_{n\ge 1}$ with $\lim_{n\to \infty}\alpha_n=0$. For example, take $\alpha_n=1$ if $n<n_0$, $n_0>0$ is fixed, and $\alpha_n=0$ otherwise. Then,  $Z_{n_0}= \inf_{n \ge n_0} X_n \in \widehat\cL(X)$.} $\Diamond$
\end{example}

  \begin{prop} \label{set-of-limits} If a sequence $X=(X_n)_n$ of elements in $L^0( \Reel,\cF_{T})$ converges in $\widehat\cT$, then the set 
  $\widehat\cL(X)$ coincides with the family of all $\widehat X_{\infty}$ such that $\widehat X_{\infty}\le \inf_n(X_n+\alpha_n)$ for some sequence $(\alpha_n)_{n\ge 1}$ in $L^0( \Reel_+,\cF_{t})$ that converges to zero in probability. 
  If a sequence $X=(X_n)_n$ of elements in $L^0( \Reel,\cF_{T})$ converges in $\cT$, then the set 
  $\cL(X)$ coincides with the family of all $ X_{\infty}$ such that $ X_{\infty}\le \inf_n(X_n+\alpha_n)$ for some sequence $(\alpha_n)_{n\ge 1}$ in $L^0( \Reel_+,\cF_{T})$ that converges to zero in probability. 
  
  \end{prop}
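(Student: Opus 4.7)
The plan is to verify the two inclusions by exhibiting an explicit canonical sequence $(\alpha_n)$ in one direction, and by a direct comparison estimate in the other direction.

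For the case of $\widehat\cT_t$, I first handle the inclusion ``$\subseteq$''. If $\widehat X_\infty\in \widehat\cL(X)$, then by definition $\hat d^+_t(\widehat X_\infty,X_n)=E(\esssup_{\cF_t}(\widehat X_\infty-X_n)^+\wedge 1)\to 0$, which (as in Remark~\ref{Rq6}) means that $\esssup_{\cF_t}(\widehat X_\infty-X_n)^+$ converges to $0$ in probability. I then set
\[
\alpha_n:=\esssup_{\cF_t}(\widehat X_\infty-X_n)^+\in L^0(\Reel_+,\cF_t).
\]
By construction $\alpha_n\to 0$ in probability, and since $(\widehat X_\infty-X_n)^+\le \alpha_n$ a.s.\ by definition of the $\cF_t$-essential supremum, we have $\widehat X_\infty\le X_n+\alpha_n$ for every $n$, hence $\widehat X_\infty\le \inf_n(X_n+\alpha_n)$.

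For the reverse inclusion ``$\supseteq$'', given any $\widehat X_\infty$ satisfying $\widehat X_\infty\le \inf_n(X_n+\alpha_n)$ with $(\alpha_n)\subset L^0(\Reel_+,\cF_t)$ converging to $0$ in probability, I observe that $(\widehat X_\infty-X_n)^+\le \alpha_n$ a.s.. Because $\alpha_n$ is $\cF_t$-measurable and nonnegative, taking the $\cF_t$-essential supremum preserves this bound:
\[
\esssup_{\cF_t}(\widehat X_\infty-X_n)^+\le \alpha_n.
\]
Since $\alpha_n\to 0$ in probability, the left-hand side does too, whence $\hat d^+_t(\widehat X_\infty,X_n)\to 0$ by dominated convergence (or by the standard relation between convergence in probability and convergence of $E(\cdot\wedge 1)$), so $\widehat X_\infty\in \widehat\cL(X)$.

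The statement for $\cT$ is proved by exactly the same argument, with $\alpha_n$ now taken in $L^0(\Reel_+,\cF_T)$: in the forward direction set $\alpha_n:=(X_\infty-X_n)^+$, which is nonnegative, $\cF_T$-measurable and tends to $0$ in probability since $X_n\to X_\infty$ in $\cT$; in the reverse direction, $(X_\infty-X_n)^+\le \alpha_n\to 0$ in probability directly yields $d^+(X_\infty,X_n)\to 0$. No essential-supremum step is needed here, which is why $\cF_T$-measurability of $\alpha_n$ is enough. The only subtle point, and the reason the $\widehat\cT_t$ case forces $\alpha_n$ to live in $L^0(\Reel_+,\cF_t)$, is precisely that one must be able to pull $\alpha_n$ through $\esssup_{\cF_t}$; this is the one and only place where $\cF_t$-measurability is used.
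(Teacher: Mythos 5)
Your proof is correct and follows essentially the same route as the paper: both directions use the canonical choice $\alpha_n=\esssup_{\cF_t}(\widehat X_\infty-X_n)^+$ (resp.\ $\alpha_n=(X_\infty-X_n)^+$ for $\cT$) one way, and the bound $\esssup_{\cF_t}(\widehat X_\infty-X_n)^+\le\alpha_n$, valid because $\alpha_n$ is $\cF_t$-measurable, the other way. Your writeup merely spells out the $\cF_t$-measurability point and the passage from convergence in probability to convergence of $E(\cdot\wedge 1)$, which the paper leaves implicit.
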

  \begin{proof} Consider a sequence $X=(X_n)_n$ of elements in $L^0( \Reel,\cF_{T})$ converging for $\widehat\cT$.
  Let $\widehat X_{\infty}\in \widehat\cL(X) $. By definition, $\alpha_n=\esssup_{\cF_t}(\widehat X_{\infty}-X_n)^+$ converges to $0$ in probability. As $\widehat X_{\infty}-X_n\le \esssup_{\cF_t}(\widehat X_{\infty}-X_n)^+\le \alpha_n$, then we deduce that $\widehat X_{\infty}\le \inf_n(X_n+\alpha_n)$. Conversely, if $\widehat X_{\infty}\le \inf_n(X_n+\alpha_n)$, then $\widehat X_{\infty}\le X_n+\alpha_n$. Therefore, $\esssup_{\cF_t}(\widehat X_{\infty}-X_n)^+\le \alpha_n$ and the conclusion follows.
  For the second statement it suffices to consider $\alpha_n=( X_{\infty}-X_n)^+$. \end{proof}

\section*{Declarations}

Not applicable

\begin{appendices}

\section{Proof of Proposition \ref{S-SM}}\label{secA1}

 The proof of Proposition \ref{S-SM} is deduced from Lemma \ref{maxingale5}. To get it, we first show intermediate steps such as the following Lemma \ref{maxingale1}, Lemma \ref{maxingale2}, Lemma \ref{maxingale3} and Lemma \ref{maxingale4}.
 \begin{lemm}\label{maxingale1}
Let $(M_t)_{t\in [0,T]}$ be  a sub-maxingale. Let $ \tau$ be a stopping time such that $\tau(\Omega)= \{ t_1, t_2, \cdots,t_n \}$ where $(t_i)_{i=1}^n$ is an increasing sequence of discrete dates. Then, for all $i=1,\cdots,n$, we have  
 $ \esssup_{\cF_{t_i}}(M_{\tau }) \ge M_{\tau \wedge t_i}$.

\end{lemm}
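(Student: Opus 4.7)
My plan is to proceed by backward induction on $i$, from $i=n$ down to $i=1$. The base case $i=n$ is immediate: since $\tau\le t_n$ pointwise, $\tau\wedge t_n = \tau$, and $M_\tau$ is $\cF_{t_n}$-measurable (because $\tau$ is a stopping time taking values in $\{t_1,\ldots,t_n\}$), so $\esssup_{\cF_{t_n}}(M_\tau) = M_\tau = M_{\tau\wedge t_n}$.

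For the inductive step, I would apply $\esssup_{\cF_{t_i}}$ to the induction hypothesis $\esssup_{\cF_{t_{i+1}}}(M_\tau)\ge M_{\tau\wedge t_{i+1}}$. Since $\cF_{t_i}\subseteq \cF_{t_{i+1}}$, the tower-type identity $\esssup_{\cF_{t_i}}(\esssup_{\cF_{t_{i+1}}}(\cdot)) = \esssup_{\cF_{t_i}}(\cdot)$ together with monotonicity of the conditional essential supremum reduces the task to proving the single inequality
\[
\esssup_{\cF_{t_i}}(M_{\tau\wedge t_{i+1}}) \ge M_{\tau\wedge t_i}.
\]

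To establish this, I would split on the $\cF_{t_i}$-measurable sets $A=\{\tau\le t_i\}$ and $A^c=\{\tau\ge t_{i+1}\}$, a valid partition because $\tau$ takes only the discrete values $t_1,\ldots,t_n$ and is a stopping time. On $A$ we have $M_{\tau\wedge t_{i+1}} = M_\tau$, which is already $\cF_{t_i}$-measurable (since $\tau\le t_i$ there); on $A^c$ we have $M_{\tau\wedge t_{i+1}} = M_{t_{i+1}}$. Pulling the $\cF_{t_i}$-measurable indicators in and out of the conditional essential supremum and invoking the sub-maxingale property $\esssup_{\cF_{t_i}}(M_{t_{i+1}}) \ge M_{t_i}$ on $A^c$ yields
\[
\esssup_{\cF_{t_i}}(M_{\tau\wedge t_{i+1}}) \ge M_\tau\, 1_A + M_{t_i}\, 1_{A^c} = M_{\tau\wedge t_i},
\]
which closes the induction.

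The main obstacle is not conceptual but notational: one must justify carefully that $\cF_{t_i}$-measurable indicators can be pulled through the conditional essential supremum, and that the decomposition $\{\tau\le t_i\}$, $\{\tau\ge t_{i+1}\}$ is $\cF_{t_i}$-measurable. Both rely on standard properties of conditional essential suprema recorded in \cite[Section 5.3.1]{KS} and on the discrete range of $\tau$, which excludes values strictly between $t_i$ and $t_{i+1}$.
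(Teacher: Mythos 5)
Your proof is correct and follows essentially the same route as the paper's: the key one-step inequality $\esssup_{\cF_{t_i}}(M_{\tau\wedge t_{i+1}})\ge M_{\tau\wedge t_i}$ is obtained via the identical partition into $\{\tau\le t_i\}$ and its complement, and the general case is then assembled from the tower property and monotonicity of the conditional essential supremum. The only difference is cosmetic: you run a backward induction on the conditioning index $i$, whereas the paper fixes $i$ and inducts forward on the evaluation index $j$.
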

\begin{proof}  We have:

\bean \esssup_{\cF_{t_i}}(M_{\tau \wedge t_{i+1}}) 
&=& \esssup_{\cF_{t_i}}(M_{\tau \wedge t_{i+1}} 1_{\{\tau  \le t_i \}})+\esssup_{\cF_{t_i}}(M_{\tau \wedge t_{i+1}} 1_{\{\tau >t_i \}}),\\
&=&1_{\{\tau >t_i \}}\esssup_{\cF_{t_i}}(M_{ t_{i+1}} )+ 1_{\{\tau  \le t_i \}} \esssup_{\cF_{t_i}}(M_{\tau \wedge t_{i}} ),\\
&\ge &1_{\{\tau >t_i \}}M_{ t_{i}} + 1_{\{\tau  \le t_i \}}  M_{\tau \wedge t_{i}} 
= M_{\tau \wedge t_i}.\eean

If $j >i+1$, argue by induction. By the tower property, we first have $\esssup_{\cF_{t_i}}(M_{\tau \wedge t_{j}}) =\esssup_{\cF_{t_i}}(\esssup_{\cF_{t_{j-1}}}(M_{\tau \wedge t_{j}}) )$.  Therefore, by the first step above,  $\esssup_{\cF_{t_i}}(M_{\tau \wedge t_{j}})\ge \esssup_{\cF_{t_i}}(M_{\tau \wedge t_{j-1}})$ and we conclude by induction. 
\end{proof}

\begin{lemm}\label{maxingale2}
Let $ \tau$ be a stopping time such that $\tau(\Omega)= \{ t_1, t_2, \cdots,t_n \}$ where $(t_i)_{i=1}^n$ is an increasing sequence of discrete dates. Then, for any random variable $X$, we have
$$\esssup_{\cF_{\tau}}(X 1_{\{\tau = t_i\}})= \esssup_{\cF_{t_i}}(X )1_{\{\tau = t_i\}}.$$
\end{lemm}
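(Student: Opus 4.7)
The plan is to prove the two-sided inequality by exploiting the well-known fact that on the event $A_i := \{\tau = t_i\}$, the traces of the $\sigma$-algebras $\cF_\tau$ and $\cF_{t_i}$ coincide, and then appeal to the universal (smallest majorant) characterization of the conditional essential supremum.

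First I would establish two measurability facts that are the technical backbone of the argument. Fact (a): if $Y$ is $\cF_{t_i}$-measurable, then $Y\,1_{A_i}$ is $\cF_\tau$-measurable. Indeed, since $\tau$ is discrete, $B\in\cF_\tau$ iff $B\cap\{\tau=t_j\}\in\cF_{t_j}$ for every $j$, and for the set $\{Y1_{A_i}\le c\}$ this check reduces (after splitting on the sign of $c$) to membership of $\{Y\le c\}\cap A_i$ in $\cF_{t_i}$ and of $A_i^c\cap\{\tau=t_j\}$ in $\cF_{t_j}$, both of which are immediate because $A_i\in\cF_{t_i}$. Fact (b): conversely, if $Z$ is $\cF_\tau$-measurable, then $Z\,1_{A_i}$ is $\cF_{t_i}$-measurable; this follows from the same identity $\{Z\le c\}\cap A_i\in\cF_{t_i}$ together with $A_i^c\in\cF_{t_i}$.

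For the inequality $\esssup_{\cF_\tau}(X1_{A_i})\le \esssup_{\cF_{t_i}}(X)\,1_{A_i}$, I would set $Y:=\esssup_{\cF_{t_i}}(X)$ and observe that $Y1_{A_i}$ is $\cF_\tau$-measurable by Fact (a) and satisfies $Y1_{A_i}\ge X1_{A_i}$ pointwise. The conclusion follows from the minimality of the $\cF_\tau$-essential supremum. For the reverse inequality I would set $Z:=\esssup_{\cF_\tau}(X1_{A_i})$. By Fact (b), $Z1_{A_i}$ is $\cF_{t_i}$-measurable; consider $W:=Z1_{A_i}+(+\infty)1_{A_i^c}$, which is $\cF_{t_i}$-measurable and dominates $X$ everywhere (on $A_i$: $W=Z\ge X1_{A_i}=X$; on $A_i^c$: trivially). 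Hence $W\ge \esssup_{\cF_{t_i}}(X)$, and multiplying by $1_{A_i}$ gives $Z1_{A_i}\ge\esssup_{\cF_{t_i}}(X)\,1_{A_i}$. It remains only to note that $Z\ge X1_{A_i}=0$ a.s.\ on $A_i^c$, so $Z\ge 0=\esssup_{\cF_{t_i}}(X)1_{A_i}$ on $A_i^c$; the two pieces combine into $Z\ge\esssup_{\cF_{t_i}}(X)\,1_{A_i}$ everywhere.

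The main subtlety—and the one I would be most careful to write out—is the step proving the $\ge$ direction off the event $A_i$: one has to justify that the $\cF_\tau$-essential supremum of $X1_{A_i}$ is automatically nonnegative on $A_i^c$, which is what allows us to compare $Z$ with $\esssup_{\cF_{t_i}}(X)1_{A_i}$ globally and not just on $A_i$. The rest is essentially the standard $\cF_\tau$/$\cF_{t_i}$ trace identification, which is exactly where the discreteness of $\tau$ enters.
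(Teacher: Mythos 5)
Your proof is correct and follows essentially the same route as the paper's: both directions come from the minimality characterization of the conditional essential supremum combined with the two measurability transfers across $\{\tau=t_i\}$ — your Facts (a) and (b) are exactly the paper's verifications that $\esssup_{\cF_{t_i}}(X)1_{\{\tau=t_i\}}$ is $\cF_{\tau}$-measurable and that $\esssup_{\cF_{\tau}}(X1_{\{\tau=t_i\}})$ is $\cF_{t_i}$-measurable. The only cosmetic difference is your extended-valued majorant $W=Z1_{A_i}+(+\infty)1_{A_i^c}$ (with $A_i=\{\tau=t_i\}$), which sits slightly outside the real-valued minimality property as stated in the paper; replacing it by the real-valued majorant $Z1_{A_i}+\esssup_{\cF_{t_i}}(X)1_{A_i^c}$ gives the same conclusion without that caveat.
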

\begin{proof}
As $1_{\{\tau = t_i\}}$ is $\cF_{\tau}$-mesurable, then we get that
$$\esssup_{\cF_{\tau}}(X1_{\{\tau = t_i\}})=\esssup_{\cF_{\tau}}(X ) 1_{\{\tau = t_i\}}.$$
 Since $X 1_{\{\tau = t_i\}}\le \esssup_{\cF_{t_i}}(X )1_{\{\tau = t_i\}}$, we deduce that  
 $$\esssup_{\cF_{\tau}}(X 1_{\{\tau = t_i\}})\le \ \esssup_{\cF_{\tau}}(\esssup_{\cF_{t_i}}(X )1_{\{\tau = t_i\}}).$$
 We  claim that $Z=\esssup_{\cF_{t_i}}(X )1_{\{\tau = t_i\}}$ is $\cF_{\tau}$-mesurable. For any $k \in \Reel$, 
 $$\{Z \le k \} = \{0 \le k \} \cap \{ \tau \neq t_i \} \cup \{ \tau = t_i \} \cap \{ \esssup_{\cF_{t_i}}(X) \le k  \}.$$
 Note that 
  $\{0 \le k \} = \emptyset$ or $\Omega$ and 
   $\{ \tau \neq t_i \} \in \cF_{\tau}$ hence $\{0 \le k \} \cap \{ \tau \neq t_i \}\in \cF_{\tau}$.
   Now let us show that 
   $B = \{ \tau = t_i \} \cap \{ \esssup_{\cF_{t_i}}(X) \le k  \} \in \cF_{\tau}$. To do  so, we evaluate
   $B \cap \{ \tau \le t \}$ for $t\ge 0$. Note  that $t_j \le  t < t_{j+1}$ for some $t_j \in \{t_0,\cdots,t_n,t_{n+1}\}$, where $t_{n+1}=\infty$. So, we deduce that $B \cap \{ \tau \le t \}$ coincides with $B \cap \{ \tau \le t_j \}= \emptyset $ if $t_j < t_i$. Otherwise, we obtain  that $B \cap \{ \tau \le t \}=B \in F_{t_i} \subseteq F_{t_j} \subseteq F_t.$ Therefore, $B \cap \{ \tau \le t \}\in \cF_t$, for all $t\in \Reel$, hence  $B \in \cF_{\tau}$. Finally, $Z$ is $\cF_{\tau}$-mesurable and the inequality $\esssup_{\cF_{\tau}}(X 1_{\{\tau = t_i\}})\le \ \esssup_{\cF_{t_i}}(X )1_{\{\tau = t_i\}} $ holds. 
   For the reverse inequality it suffices to show that $Y=\esssup_{\cF_{\tau}}(X 1_{\{\tau = t_i\}} )$ is $\cF_{t_i}$-measurable. Since $\{\tau \ne t_i\}\in \cF_{\tau}$, we get that $Y 1_{\{\tau \ne t_i\}}=0$ and  $$ \{ Y \le k \} = (\{0 \le k \} \cap \{\tau \neq t_i \} )\cup (A \cap \{\tau = t_i \}),$$
   with $A=\{\esssup_{\cF_{\tau}}(X 1_{\{\tau = t_i\}}) \le k \} $. As $A \in \cF_{\tau}$,  $ A \cap \{\tau \le  t_i \} \in \cF_{t_i}$ and, finally, $A \cap \{\tau = t_i \}= A \cap \{\tau = t_i \} \cap \{\tau \le t_i \}\in \cF_{t_i}$.  Therefore, for all $k\in \Reel$, $ \{ Y \le k \}\in \cF_{t_i}$, i.e.  $Y$ is $\cF_{t_i}$-measurable. At last, notice that   $\esssup_{\cF_{\tau}}(X 1_{\{\tau = t_i\}} )\ge X 1_{\{\tau = t_i\}} $ and, since $Y$ is $\cF_{t_i}$-measurable, we get that  $\esssup_{\cF_{\tau}}(X 1_{\{\tau = t_i\}} )\ge \esssup_{\cF_{t_i}}(X 1_{\{\tau = t_i\}})$. The conclusion follows. \end{proof}
\begin{lemm}\label{maxingale3}
Let $(M_t)_{t\in [0,T]}$ be a sub-maxingale. Let $ \tau$, $S$ be two  stopping times. Suppose  that $S(\Omega)= \{ t_1, t_2, \cdots,t_n \}$ where $(t_i)_{i=1}^n$ is an increasing sequence of discrete dates and suppose that $\tau(\Omega)$  is also a finite set. Then  $ \esssup_{\cF_{S}}(M_{\tau })\ge  M_{\tau \wedge s}$.

\end{lemm}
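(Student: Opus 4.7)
The plan is to reduce the statement to Lemma \ref{maxingale1} by conditioning on the (finitely many) values of $S$, using Lemma \ref{maxingale2} to swap $\cF_S$ against $\cF_{t_i}$ on each atom $\{S=t_i\}$.

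First, I would consolidate the value sets of $\tau$ and $S$ into a single increasing finite grid $u_1<\cdots<u_N$ containing both $\tau(\Omega)$ and $S(\Omega)=\{t_1,\ldots,t_n\}$. Since $\tau$ takes values in this enlarged grid, Lemma \ref{maxingale1} applies at every point of the grid, and in particular at each value $t_i$ of $S$: we obtain
\[
\esssup_{\cF_{t_i}}(M_\tau)\ \ge\ M_{\tau\wedge t_i},\qquad i=1,\ldots,n.
\]

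Next, I would decompose on the partition $\{S=t_i\}_{i=1}^n$ of $\Omega$. Since $1_{\{S=t_i\}}$ is $\cF_S$-measurable, positively homogeneous manipulation of the conditional essential supremum yields
\[
\esssup_{\cF_S}(M_\tau)\, 1_{\{S=t_i\}}\ =\ \esssup_{\cF_S}\!\bigl(M_\tau\,1_{\{S=t_i\}}\bigr).
\]
Now I would invoke Lemma \ref{maxingale2}, applied to the stopping time $S$ (which has the discrete value set $\{t_1,\ldots,t_n\}$) and the random variable $X=M_\tau$, to convert the $\cF_S$-conditional supremum into an $\cF_{t_i}$-conditional one:
\[
\esssup_{\cF_S}\!\bigl(M_\tau\,1_{\{S=t_i\}}\bigr)\ =\ \esssup_{\cF_{t_i}}(M_\tau)\,1_{\{S=t_i\}}.
\]

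Combining the two previous displays with the inequality from Lemma \ref{maxingale1} gives, on each set $\{S=t_i\}$,
\[
\esssup_{\cF_S}(M_\tau)\,1_{\{S=t_i\}}\ \ge\ M_{\tau\wedge t_i}\,1_{\{S=t_i\}}\ =\ M_{\tau\wedge S}\,1_{\{S=t_i\}}.
\]
Summing over $i=1,\ldots,n$ and using that $\sum_i 1_{\{S=t_i\}}=1$ a.s.\ yields the desired inequality $\esssup_{\cF_S}(M_\tau)\ge M_{\tau\wedge S}$.

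The only subtle point is the justification of the identity $\esssup_{\cF_S}(M_\tau)\,1_{\{S=t_i\}} = \esssup_{\cF_S}(M_\tau 1_{\{S=t_i\}})$ and the application of Lemma \ref{maxingale2}; both ultimately rest on the $\cF_S$-measurability of the indicators $1_{\{S=t_i\}}$ and on the measurability arguments already established in Lemma \ref{maxingale2}, so no further work beyond careful bookkeeping is needed. I expect no serious obstacle—this is essentially a ``reduction to Lemma \ref{maxingale1} atom by atom'' argument on the finite $\sigma$-algebra generated by $S$.
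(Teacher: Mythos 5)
Your proof is correct and follows essentially the same route as the paper's: decompose on the atoms $\{S=t_i\}$, use Lemma \ref{maxingale2} to replace $\esssup_{\cF_S}$ by $\esssup_{\cF_{t_i}}$ on each atom, apply Lemma \ref{maxingale1}, and sum. Your explicit enlargement of $\tau(\Omega)$ to a common grid containing $S(\Omega)$ is a small point of extra care that the paper leaves implicit, since Lemma \ref{maxingale1} as stated only gives the inequality at points of $\tau$'s own value set.
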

\begin{proof}
By lemma \ref{maxingale2}, we obtain
 $ \esssup_{\cF_{S}}(M_{\tau }) 
=\sum\limits_{i=1}^n  \esssup_{\cF_{t_i}}(M_{\tau  }) 1_{\{S=t_i\}}$.
 By lemma \ref{maxingale1}, we deduce that 
 \bean \esssup_{\cF_{S}}(M_{\tau })
\ge \sum_{i=1}^n M_{\tau \wedge t_i} 1_{\{S=t_i\}}= \sum_{i=1}^n M_{ \tau \wedge S} 1_{\{S=t_i\}}
= M_{ \tau \wedge S}.\eean\end{proof}

\begin{lemm}\label{maxingale4}
Let $\tau \in [0,T]$ be a stopping time. Suppose that the filtration $(\cF_t)_{t\in [0,T]}$ is right-continuous. There exists a non increasing sequence 
 $(\tau_n)_n$ of stopping times converging to $\tau$ such that, for any $X\in L^0(\R,\cF_T)$,
 $$ \esssup_{\cF_{\tau}}(X)= \lim_n \uparrow  \esssup_{\cF_{\tau_n}}(X).$$
 Moreover, $\tau^n(\Omega)$ is finite for all $n\ge 1$.
\end{lemm}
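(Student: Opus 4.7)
The plan is to construct $\tau_n$ as the standard dyadic approximation of $\tau$ from above. For each $n\ge 1$, set
$$\tau_n=\sum_{k=1}^{2^n}\frac{kT}{2^n}\,1_{\{(k-1)T/2^n\le \tau<kT/2^n\}}+T\,1_{\{\tau=T\}}.$$
Then $\tau_n(\Omega)\subseteq\{T/2^n,2T/2^n,\ldots,T\}$ is finite, $\tau\le \tau_n\le \tau+T/2^n$, and the dyadic grid at level $n+1$ refines the one at level $n$, so the sequence $(\tau_n)_n$ is non-increasing and converges to $\tau$ pointwise. To see that $\tau_n$ is a stopping time, for any dyadic time $t=jT/2^n$ we have $\{\tau_n\le t\}=\{\tau<jT/2^n\}\cup\{\tau=T\}\cdot 1_{t=T}$, which lies in $\cF_t$; for a general $t$ one truncates down to the largest dyadic point of level $n$ below $t$.

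Since $\tau\le \tau_n$ we have $\cF_\tau\subseteq \cF_{\tau_n}$, hence $\esssup_{\cF_{\tau_n}}(X)\le \esssup_{\cF_\tau}(X)$. Moreover, because $\tau_n$ is non-increasing, the $\sigma$-algebras $\cF_{\tau_n}$ are non-increasing, which means the sequence $\bigl(\esssup_{\cF_{\tau_n}}(X)\bigr)_n$ is non-decreasing. Denote its pointwise limit by $Z$; then $Z\le \esssup_{\cF_\tau}(X)$ and $Z\ge X$ a.s. It remains to prove that $Z$ is $\cF_\tau$-measurable, for this will force $Z\ge \esssup_{\cF_\tau}(X)$ by definition of the $\cF_\tau$-measurable essential supremum, yielding the claimed equality.

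For the measurability, fix any $n_0$: since $(\esssup_{\cF_{\tau_m}}(X))_m$ is non-decreasing, we have $Z=\sup_{m\ge n_0}\esssup_{\cF_{\tau_m}}(X)$, and for $m\ge n_0$ each $\esssup_{\cF_{\tau_m}}(X)$ is $\cF_{\tau_m}$-measurable, hence $\cF_{\tau_{n_0}}$-measurable as $\cF_{\tau_m}\subseteq \cF_{\tau_{n_0}}$. Thus $Z$ is $\cF_{\tau_{n_0}}$-measurable for every $n_0$, so $Z$ is measurable with respect to $\bigcap_{n_0}\cF_{\tau_{n_0}}$. The main (and only non-routine) step is then the identity
$$\bigcap_{n\ge 1}\cF_{\tau_n}=\cF_\tau,$$
which is the classical consequence of right-continuity of the filtration together with $\tau_n\downarrow\tau$: the left-hand side equals $\cF_{\tau^+}$, and right-continuity of $(\cF_t)$ gives $\cF_{\tau^+}=\cF_\tau$. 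One inclusion $\cF_\tau\subseteq \bigcap_n\cF_{\tau_n}$ is immediate from $\tau\le \tau_n$; for the converse, take $B\in \bigcap_n\cF_{\tau_n}$ and check that $B\cap\{\tau<t\}=\bigcup_n B\cap\{\tau_n<t\}\in \cF_t$ for every $t\in[0,T]$, whence $B\in\cF_{\tau^+}=\cF_\tau$. This is the only place where right-continuity enters, and it is the main technical obstacle of the proof.
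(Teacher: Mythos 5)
Your construction and the monotonicity arguments are exactly those of the paper: the dyadic approximation $\tau_n\downarrow\tau$ from above with $\tau_n(\Omega)$ finite, the observation that $(\cF_{\tau_n})_n$ is non-increasing so that $(\esssup_{\cF_{\tau_n}}(X))_n$ is non-decreasing and dominated by $\esssup_{\cF_\tau}(X)$, and the reduction of the problem to showing that the limit $Z$ is $\cF_\tau$-measurable, since $Z\ge X$ then forces $Z\ge\esssup_{\cF_\tau}(X)$. Where you genuinely diverge is in that last step. The paper does not prove $\bigcap_n\cF_{\tau_n}=\cF_\tau$; instead it introduces the auxiliary sequence of stopping times $\tau+T/n\ge\tau_n$, notes $\lim_n\uparrow\esssup_{\cF_{\tau+T/n}}(X)\le\lim_n\uparrow\esssup_{\cF_{\tau_n}}(X)\le\esssup_{\cF_\tau}(X)$, and then checks by a direct set computation that $\{Z'\le k\}\cap\{\tau\le t\}\in\cF_{t+\epsilon}$ for every $\epsilon>0$, where $Z'$ is the limit along $\tau+T/n$; right-continuity gives $\cF_{t+}=\cF_t$ and the sandwich closes the argument. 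Your route --- $Z$ is $\cF_{\tau_{n_0}}$-measurable for every $n_0$, hence measurable for $\bigcap_n\cF_{\tau_n}$, which equals $\cF_{\tau+}=\cF_\tau$ by right-continuity --- is the cleaner and more standard one: it isolates the classical lemma on decreasing sequences of stopping times and avoids the detour through $\tau+T/n$, while using right-continuity at exactly the same place. The one step you should spell out is the very end of your set identity: from $B\cap\{\tau<t\}\in\cF_t$ for all $t$ you still need $B\cap\{\tau\le t\}=\bigcap_m\bigl(B\cap\{\tau<t+1/m\}\bigr)\in\cF_{t+}=\cF_t$, which is precisely where the hypothesis $\cF_{t+}=\cF_t$ is consumed; as written, the phrase ``whence $B\in\cF_{\tau^+}=\cF_\tau$'' elides this.
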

\begin{proof}

Let $\tau$ be a stopping time taking values in $[0,T]$. For any $n\ge 1$, we define $\tau^n(\omega) = T(i+1)/2^n$ where $i=i(\omega)$ is uniquely defined such that $Ti/2^n< \tau(\omega) \le T(i+1)/2^n$ for $i\ge 1$ or $0\le  \tau(\omega) \le T/2^n$ when $i=0$. Note that $\tau^n(\Omega)$ is finite  and $\tau^n \ge \tau$.  It is easily seen that $(\tau_n)_n$ is non increasing, positive and $\lim_n \tau_n=\tau$. Moreover, $\tau^n$ is a stopping time. Indeed, for any fixed $t \in [0,T)$, there exists $i \in \N$ such that $Ti/2^n\le t < T(i+1)/2^n$. Then $\{ \tau^n \le t \}= \{ \tau \le Ti/2^n \} \in \cF_{Ti/2^n} \subset \cF_t$ and the conclusion follows. 

As $(\tau^n)_n$ is non increasing,  then $(\cF_{\tau^n})_n$  non increasing. As we know that $\esssup_{\cF_{\tau^{n+1}}}(X) \ge X$ and $\esssup_{\cF_{\tau^{n+1}}}(X) $ is $\cF_{\tau^n}$-measurable ($\tau_{n+1}\le \tau_n)$, we deduce  that $\esssup_{\cF_{\tau^n}}(X) \le \esssup_{\cF_{\tau^{(n+1)}}}(X) $, i.e.  $(\esssup_{\cF_{\tau^n}}(X))_n$ is non decreasing. 

Similarly,   $\tau^n \ge \tau$ implies that $\esssup_{\cF_{\tau^n}}(X) \le \esssup_{\cF_{\tau}}(X) $. Therefore,
$\lim_n \uparrow  \esssup_{\cF_{\tau^n}}(X) \le \esssup_{\cF_{\tau}}(X) $. To obtain the reverse inequality, we consider the sequence $(\esssup_{\cF_{\tau+ T/n}}(X))_n.$   Since  $\tau+ T/n \ge \tau^n$, then
$$\lim_n \uparrow  \esssup_{\cF_{\tau + T/n}}(X) \le \lim \uparrow  \esssup_{\cF_{\tau^n}}(X) \le \esssup_{\cF_{\tau}}(X) .$$
It suffices to see that $Z=\lim_n \uparrow  \esssup_{\cF_{\tau + T/n}}(X)$ is $\cF_{\tau}$-measurable to conclude. Indeed, $Z\ge X$ hence $Z\ge  \esssup_{\cF_{\tau}}(X) $ and  inequalities above are equalities. For all $k\in \Reel$, $t\ge 0$, and any $n_0\ge 1$
\bean \{Z\le k\}\cap \{\tau\le t\}&=&\bigcap_{n\ge 1}\{ \esssup_{\cF_{\tau + T/n}}(X)\le k\}\cap\{\tau\le t\},\\
&=&\bigcap_{n\ge n_0}\{ \esssup_{\cF_{\tau + T/n}}(X)\le k\}\cap \{\tau+ T/n\le t+ T/n\}.
\eean

Notice that $\esssup_{\cF_{\tau + T/n}}(X)$ is $\cF_{\tau+T/n}$-measurable. We deduce that:

\bean &&\{ \esssup_{\cF_{\tau + T/n}}(X)\le k\}\in \cF_{\tau+T/n},\\
&&\{ \esssup_{\cF_{\tau + T/n}}(X)\le k\}\cap \{\tau+ T/n\le t+ T/n\}\in \cF_{t+ T/n}.\eean
Therefore, for any $\epsilon>0$ and $n_0\ge 1$ such that $t+ T/n\le t+\epsilon$, we have $\cF_{t+ T/n}\subseteq \cF_{t+\epsilon}$ and, finally, $\{Z\le k\}\cap \{\tau\le t\}\in \cap_{\epsilon>0}\cF_{t+\epsilon}=\cF_{t+}=\cF_t$. We deduce that $\{Z\le k\}\in \cF_{\tau}$, for all $k\in \Reel$, i.e. $Z$ is $\cF_{\tau}$-measurable. \end{proof}
 \begin{lemm}\label{maxingale5} Suppose that the filtration $(\cF_t)_{t\in [0,T]}$ is right-continuous.
Let $(M_t)_{t\in [0,T]}$ be a  right-continuous sub-maxingale. Let $\tau$, $S$ be two  stopping times such that   $\tau(\Omega)$  is a finite set. Then,  we have $ \esssup_{\cF_{S}}(M_{\tau })\ge M_{\tau \wedge S}$.

\end{lemm}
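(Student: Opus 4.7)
The plan is to reduce to the case where $S$ has finite range, which has already been handled in Lemma \ref{maxingale3}, by approximating $S$ from above by a decreasing sequence of stopping times with finite range. Precisely, I would invoke the construction of Lemma \ref{maxingale4}: there exists a non-increasing sequence $(S^n)_{n\ge 1}$ of stopping times such that $S^n \downarrow S$, each $S^n(\Omega)$ is finite, and
$$\esssup_{\cF_S}(M_\tau) = \lim_n \uparrow \esssup_{\cF_{S^n}}(M_\tau).$$

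First I would apply Lemma \ref{maxingale3} to the pair $(\tau, S^n)$, which is legitimate because $\tau(\Omega)$ is finite by assumption and $S^n(\Omega)$ is finite by construction. This yields, for every $n\ge 1$,
$$\esssup_{\cF_{S^n}}(M_\tau) \ge M_{\tau \wedge S^n}.$$
Since the left-hand side increases to $\esssup_{\cF_S}(M_\tau)$, I obtain in particular that
$$\esssup_{\cF_S}(M_\tau) \ge M_{\tau \wedge S^n}, \qquad n\ge 1.$$

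Next I would pass to the limit on the right-hand side. Since $S^n \downarrow S$ pointwise, we have $\tau \wedge S^n \downarrow \tau \wedge S$ pointwise, and here the right-continuity of $M$ (together with the right-continuity of the filtration) gives $M_{\tau \wedge S^n} \to M_{\tau \wedge S}$ a.s.\ as $n\to\infty$. Passing to the limit in the displayed inequality then yields
$$\esssup_{\cF_S}(M_\tau) \ge M_{\tau \wedge S},$$
which is the desired conclusion.

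The main delicate point is justifying the limit $M_{\tau \wedge S^n} \to M_{\tau \wedge S}$: one needs the sequence $\tau\wedge S^n$ to be decreasing (immediate from $S^n \downarrow S$) and the path-regularity of $M$ to handle the case where $\tau \wedge S^n > \tau \wedge S$ for all $n$ (in which case only right-continuity saves us; left-continuity would fail here). Everything else is a routine application of Lemma \ref{maxingale3} and Lemma \ref{maxingale4}; in particular, one does not need monotonicity of the right-hand side because we already bounded $M_{\tau\wedge S^n}$ by the $n$-independent quantity $\esssup_{\cF_S}(M_\tau)$ before taking $n\to\infty$.
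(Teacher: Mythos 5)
Your proof is correct and follows essentially the same route as the paper's: approximate $S$ from above by the finitely-valued stopping times of Lemma \ref{maxingale4}, apply Lemma \ref{maxingale3} to each pair $(\tau,S^n)$, and pass to the limit using the right-continuity of $M$ along $\tau\wedge S^n\downarrow\tau\wedge S$. Your remark that one should bound $M_{\tau\wedge S^n}$ by the $n$-independent quantity $\esssup_{\cF_S}(M_\tau)$ before letting $n\to\infty$ is in fact a slightly cleaner way of organizing the same limit argument than the paper's phrasing.
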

\begin{proof}
Let $(S_n)_n$ be a  sequence of stopping times decreasing to $S$ as given in Lemma \ref{maxingale4}. Recall that $S_n(\Omega)$ is finite for all $n$. Moreover, we have  
 $ \esssup_{\cF_{s}}(M_{\tau })= \lim_n \uparrow  \esssup_{\cF_{s_n}}(M_{\tau })$. By Lemma \ref{maxingale3}, we deduce that  $ \esssup_{\cF_{S}}(M_{\tau }) \ge  \lim \uparrow  M_{\tau \wedge S_n} .$
  As $(\tau \wedge S_n)_n$ decreases to $\tau \wedge S$ and $M$ is  right-continuous,  we conclude that $\esssup_{\cF_{s}}(M_{\tau }) \ge M_{\tau \wedge s}$. \end{proof}

\section{Auxiliary results}

\begin{lemm} Suppose that, at time $t\le T$, $\cO_t$ is the pseudo-distance topology defined by (\ref{PDIST}) and $\cV_{t,T}^c=\cV_{t,T}^c(\cO_t)$. If $\cV_{t,T}$ is $\cF_t$-decomposable, then $\cV_{t,T}^c$ is $\cF_t$-decomposable.
\end{lemm}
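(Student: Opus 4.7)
The plan is to reduce the claim to the characterizations of convergence and of the limit set in Proposition \ref{CriteriaConv} and Proposition \ref{set-of-limits}. Fix a finite $\cF_t$-measurable partition $(F_t^n)_{n=1}^N$ of $\Omega$ and elements $V^1,\ldots,V^N \in \cV_{t,T}^c$. By definition of $\cV_{t,T}^c$, for each $n$ there exists a sequence $(V^{n,k})_{k\ge 1} \subseteq \cV_{t,T}$ convergent in $\cO_t$ with $V^n$ as a limit. Define
$$W^k := \sum_{n=1}^N V^{n,k} 1_{F_t^n}, \qquad W := \sum_{n=1}^N V^n 1_{F_t^n}.$$
Since $\cV_{t,T}$ is $\cF_t$-decomposable, $W^k \in \cV_{t,T}$ for every $k \ge 1$. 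It remains to show that $(W^k)_k$ is $\cO_t$-convergent and that $W$ belongs to its set of limits.

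For convergence of $(W^k)_k$: by Proposition \ref{CriteriaConv}, it suffices to check that $\inf_k W^k > -\infty$ a.s. On each $F_t^n$, we have $W^k = V^{n,k}$, so $1_{F_t^n}\inf_k W^k = 1_{F_t^n}\inf_k V^{n,k}$, and $\inf_k V^{n,k} > -\infty$ a.s. by Proposition \ref{CriteriaConv} applied to the convergent sequence $(V^{n,k})_k$. Since the partition is finite, it follows that $\inf_k W^k > -\infty$ a.s., hence $(W^k)_k$ is convergent in $\cO_t$.

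For the fact that $W$ is a limit of $(W^k)_k$: by Proposition \ref{set-of-limits}, for each $n$ there is a sequence $(\alpha_t^{n,k})_{k\ge 1} \subseteq L^0(\Reel_+,\cF_t)$ with $\alpha_t^{n,k} \to 0$ in probability such that $V^n \le V^{n,k} + \alpha_t^{n,k}$ for all $k$. Setting
$$\beta_t^k := \sum_{n=1}^N \alpha_t^{n,k} 1_{F_t^n} \in L^0(\Reel_+,\cF_t),$$
we obtain $W \le W^k + \beta_t^k$ for every $k$. Moreover, for any $\varepsilon>0$,
$$P(\beta_t^k > \varepsilon) \le \sum_{n=1}^N P(\alpha_t^{n,k} > \varepsilon) \xrightarrow[k\to\infty]{} 0,$$
so $\beta_t^k \to 0$ in probability. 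Applying Proposition \ref{set-of-limits} in the reverse direction to the convergent sequence $(W^k)_k$, we conclude that $W$ lies in its set of $\cO_t$-limits, hence $W \in \cV_{t,T}^c$.

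The only subtlety is that $\cO_t$ is non-Hausdorff, so limits are not unique; this is precisely why one must invoke the explicit characterization of the limit set in Proposition \ref{set-of-limits} rather than passing to pointwise limits. Once this characterization is in hand, the argument is a direct assembly of a decomposed approximating sequence and the union bound above for the residual $\cF_t$-measurable errors.
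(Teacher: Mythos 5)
Your proof is correct and follows essentially the same route as the paper: characterize membership in $\cV_{t,T}^c$ via Proposition \ref{set-of-limits}, glue the approximating sequences and the $\cF_t$-measurable error terms along the partition, and invoke $\cF_t$-decomposability of $\cV_{t,T}$ to conclude. The only (harmless) differences are that you treat a general finite partition at once where the paper argues with two pieces, and that you explicitly verify convergence of the glued sequence via Proposition \ref{CriteriaConv}, a step the paper leaves implicit.
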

\begin{proof} Consider $V_{t,T}^{c,i}\in \cV_{t,T}^c$, $i=1,2$, and $F_t\in \cF_t$. By Proposition \ref{set-of-limits}, $V_{t,T}^{c,i}\le V_{t,T}^{n,i}+\alpha_t^{n,i}$ where $V_{t,T}^{n,i}\in \cV_{t,T}$ and  $\alpha_t^{n,i}$ converges to $0$ in probability as $n\to \infty$, for $i=1,2$. We set 
\bean V_{t,T}^{n}&=&V_{t,T}^{n,1}1_{F_t}+V_{t,T}^{n,2}1_{\Omega\setminus F_t},\quad \alpha_t^{n}=\alpha_t^{n,1}1_{F_t}+\alpha_t^{n,2}1_{\Omega\setminus F_t}.
\eean
Note that $V_{t,T}^{n}\in   \cV_{t,T}$ by assumption and $\alpha_t^{n}$ converges to $0$ in probability. Moreover, $V_{t,T}^{c,1}1_{F_t}+V_{t,T}^{c,2}1_{\Omega\setminus F_t}\le V_{t,T}^{n}+\alpha_t^{n}$. Therefore, Proposition \ref{set-of-limits} implies that 
$V_{t,T}^{c,1}1_{F_t}+V_{t,T}^{c,2}1_{\Omega\setminus F_t}\in  \cV_{t,T}^c$ and the conclusion follows.
\end{proof}

\begin{lemm}\label{decomp} Let $h_T\in L^0(\R,\cF_T)$ be a payoff. If $\cV_{t,T}$  (resp. $\cV_{t,T}^c$) is $\cF_t$-decomposable (resp. infinitely  $\cF_t$-decomposable), then $\cP_{t,T}(h_T)$ (resp. $\cP_{t,T}^c(h_T)$) is $\cF_t$-decomposable (resp. infinitely  $\cF_t$-decomposable).
\end{lemm}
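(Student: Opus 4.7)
The plan is to argue directly from the definition of a super-hedging price. A candidate $p_t \in \cP_{t,T}(h_T)$ is precisely one for which there exists $V_{t,T} \in \cV_{t,T}$ with $p_t + V_{t,T} \ge h_T$, and analogously for the continuous-time version with $\cV_{t,T}^c$. So to show decomposability of the set of prices, I would take two (or a countable family of) prices together with their associated portfolio processes and recombine them along a partition of $\Omega$ by $\cF_t$-measurable sets, using the assumed decomposability to stay inside $\cV_{t,T}$ (resp. $\cV_{t,T}^c$).

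Concretely, for the finite case let $p_t^1, p_t^2 \in \cP_{t,T}(h_T)$ and $F_t \in \cF_t$, and pick $V_{t,T}^i \in \cV_{t,T}$ with $p_t^i + V_{t,T}^i \ge h_T$ for $i=1,2$. Set $V_{t,T} := V_{t,T}^1 1_{F_t} + V_{t,T}^2 1_{\Omega\setminus F_t}$, which lies in $\cV_{t,T}$ by $\cF_t$-decomposability. Since $1_{F_t}$ and $1_{\Omega\setminus F_t}$ are $\cF_t$-measurable and non-negative, multiplying the inequalities and summing gives
\begin{equation*}
\left(p_t^1 1_{F_t} + p_t^2 1_{\Omega\setminus F_t}\right) + V_{t,T}
= (p_t^1 + V_{t,T}^1) 1_{F_t} + (p_t^2 + V_{t,T}^2) 1_{\Omega\setminus F_t}
\ge h_T,
\end{equation*}
so $p_t^1 1_{F_t} + p_t^2 1_{\Omega\setminus F_t} \in \cP_{t,T}(h_T)$.

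For the infinite case, replace the two-element partition by a countable $\cF_t$-partition $(F_t^n)_{n\ge 1}$, pick $p_t^n \in \cP_{t,T}^c(h_T)$ with associated $V_{t,T}^{c,n} \in \cV_{t,T}^c$, and consider $V_{t,T}^c := \sum_{n\ge 1} V_{t,T}^{c,n} 1_{F_t^n}$, which lies in $\cV_{t,T}^c$ by the infinite $\cF_t$-decomposability hypothesis. The same pointwise argument as above (now applied on each $F_t^n$ separately, using that the $F_t^n$ are disjoint and exhaust $\Omega$) yields
\begin{equation*}
\sum_{n\ge 1} p_t^n 1_{F_t^n} + V_{t,T}^c = \sum_{n\ge 1} (p_t^n + V_{t,T}^{c,n}) 1_{F_t^n} \ge \sum_{n\ge 1} h_T 1_{F_t^n} = h_T,
\end{equation*}
so $\sum_n p_t^n 1_{F_t^n} \in \cP_{t,T}^c(h_T)$.

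There is no real obstacle here; the proof is essentially a bookkeeping verification, and both statements reduce to the observation that $\cF_t$-measurable indicators commute with the inequality defining a super-hedging price and that the resulting recombined portfolio is again admissible by hypothesis. The only mild point of care is to ensure that the countable sum in the infinite case is well defined as an element of $L^0(\R,\cF_T)$, which is guaranteed because the $F_t^n$ are pairwise disjoint so at every $\omega$ exactly one term contributes.
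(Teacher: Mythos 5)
Your proof is correct and follows essentially the same route as the paper: recombine the prices and their super-hedging portfolios along the $\cF_t$-partition, use the assumed decomposability of $\cV_{t,T}$ (resp. $\cV_{t,T}^c$) to keep the recombined portfolio in the strategy set, and check the super-hedging inequality pointwise on each piece. The paper's proof is a terser version of exactly this argument, so no further comment is needed.
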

\begin{proof} Suppose that $\cV_{t,T}$ is $\cF_t$-decomposable and consider $p_t^1,p_t^2\in \cP_{t,T}(h_T)$ and $F_t\in \cF_t$. Then, $p_t^i+V_{t,T}^{i}\ge h_T$ for some $V_{t,T}^{i}\in \cV_{t,T}$, $i=1,2$. By assumption, we have $V_{t,T}=V_{t,T}^{1}1_{F_t}+V_{t,T}^{2}1_{\Omega\setminus F_t}\in \cV_{t,T}$ by assumption and $p_t^11_{F_t}+p_t^21_{\Omega\setminus F_t}+V_{t,T}\ge h_T$. We deduce that $p_t^11_{F_t}+p_t^2 1_{\Omega\setminus F_t}\in  \cP_{t,T}(h_T)$. By the same reasoning, the property holds for  $\cV_{t,T}^c$ and the infinite $\cF_t$-decomposability is obtained similarly.  The conclusion follows. 
\end{proof}

\begin{lemm} \label{PriceCriteria} Let $h_T\in L^0(\R,\cF_T)$ be a payoff. If $\cV_{t,T}$  is infinitely  $\cF_t$-decomposable, then for any $\gamma_t\in L^0(\R,\cF_t)$ such that $\gamma_t>\pi_{t,T}(h_T)$, there exists a price $p_t\in \cP_{t,T}(h_T)$ such that $p_t<\gamma_t$. In particular, $\gamma_t\in \cP_{t,T}(h_T)$. 
\end{lemm}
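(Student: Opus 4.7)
The plan is to exploit the infinite $\cF_t$-decomposability of $\cP_{t,T}(h_T)$ (inherited from $\cV_{t,T}$ via Lemma \ref{decomp}) to produce, out of a sequence whose essential infimum is $\pi_{t,T}(h_T)$, a single price that sits strictly below $\gamma_t$. The first observation is that $\cP_{t,T}(h_T)$ is directed downward: for $p_t^1,p_t^2\in \cP_{t,T}(h_T)$, set $F=\{p_t^1\le p_t^2\}\in\cF_t$, and then $p_t^1\wedge p_t^2=p_t^11_F+p_t^2 1_{\Omega\setminus F}\in \cP_{t,T}(h_T)$ by (finite) $\cF_t$-decomposability. Since the essential infimum of a downward directed family is attained as the almost-sure decreasing limit of a sequence from the family, I extract $(p_t^n)_{n\ge 1}\subseteq \cP_{t,T}(h_T)$ with $p_t^n\downarrow \pi_{t,T}(h_T)$ a.s.

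Next, because $\gamma_t>\pi_{t,T}(h_T)$ a.s., the $\cF_t$-measurable sets $A_n=\{p_t^n<\gamma_t\}$ form an increasing sequence with $\bigcup_n A_n=\Omega$ a.s. Define the $\cF_t$-measurable partition $F_t^n=A_n\setminus A_{n-1}$ (with $A_0=\emptyset$) and let
\[
p_t=\sum_{n\ge 1}p_t^n 1_{F_t^n}.
\]
Infinite $\cF_t$-decomposability of $\cP_{t,T}(h_T)$ gives $p_t\in\cP_{t,T}(h_T)$, and on each $F_t^n$ we have $p_t=p_t^n<\gamma_t$, hence $p_t<\gamma_t$ a.s. For the final claim, super-hedging prices are upper-closed in the sense that $\cP_{t,T}(h_T)+L^0(\Reel_+,\cF_t)\subseteq \cP_{t,T}(h_T)$ (if $p_t+V_{t,T}\ge h_T$, then $(p_t+a_t)+V_{t,T}\ge h_T$ for any $a_t\ge 0$); writing $\gamma_t=p_t+(\gamma_t-p_t)$ with $\gamma_t-p_t\in L^0(\Reel_+,\cF_t)$ yields $\gamma_t\in \cP_{t,T}(h_T)$.

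The only mildly delicate point is the extraction of the decreasing sequence attaining $\pi_{t,T}(h_T)$; everything else is bookkeeping on $\cF_t$-measurable sets. Directedness downward is routine once one has the decomposition $F=\{p_t^1\le p_t^2\}$, so the standard essential-infimum lemma applies without additional hypotheses. Note that the assumption $\gamma_t>\pi_{t,T}(h_T)$ with $\gamma_t\in L^0(\Reel,\cF_t)$ finite-valued implicitly forces $\pi_{t,T}(h_T)<\infty$ a.s., so the argument is not vacuous.
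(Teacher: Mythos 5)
Your proof is correct and follows essentially the same route as the paper's: both establish downward directedness of $\cP_{t,T}(h_T)$ from its (inherited) $\cF_t$-decomposability, extract a decreasing sequence $p_t^n\downarrow\pi_{t,T}(h_T)$, and glue along the $\cF_t$-partition $\{N_t=n\}=A_n\setminus A_{n-1}$ using infinite $\cF_t$-decomposability; your partition $F_t^n=A_n\setminus A_{n-1}$ is exactly the paper's $\{N_t=n\}$ with $N_t=\inf\{n: p_t^n<\gamma_t\}$. You merely make explicit two points the paper leaves implicit (the directedness argument via $F=\{p_t^1\le p_t^2\}$ and the upper-closedness step for the final claim), which is fine.
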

\begin{proof} Since $\cV_{t,T}$  is infinitely  $\cF_t$-decomposable, $\cP_{t,T}(h_T)$ is infinitely  $\cF_t$-decomposable by Lemma \ref{decomp}. Therefore, $\cP_{t,T}(h_T)$ is directed downward and we deduce that $\pi_{t,T}(h_T)\lim_n\downarrow p_t^n$ where $p_t^n\in \cP_{t,T}(h_T)$, see \cite[Section 5.3.1]{KS}. Then, a.s.($\omega$), there exits $n(\omega)$ such that $p_t^n(\omega)<\gamma_t(\omega)$. We then define 
\bean N_t&=&\inf\{n\ge 1:~p_t^n<\gamma_t\}\in L^0(\N,\cF_t),\\
p_t&=&\sum_{j=1}^\infty p_t^j1_{\{N_t=j\}}.\eean
By assumption $p_t\in \cP_{t,T}(h_T)$ and $p_t<\gamma_t$. The conclusion follows. \end{proof}

\end{appendices}

%%===========================================================================================%%
%% If you are submitting to one of the Nature Portfolio journals, using the eJP submission   %%
%% system, please include the references within the manuscript file itself. You may do this  %%
%% by copying the reference list from your .bbl file, paste it into the main manuscript .tex %%
%% file, and delete the associated \verb+\bibliography+ commands.                            %%
%%===========================================================================================%%

\bibliography{sn-bibliography}% common bib file
%% if required, the content of .bbl file can be included here once bbl is generated
%%\input sn-article.bbl

%% Default %%
%%\input sn-sample-bib.tex%
%\begin{thebibliography}{100}

\bibitem{BCL}
Baptiste J., Carassus L. and L\'epinette E. Pricing without martingale measure. Preprint, \url{https://hal.archives-ouvertes.fr/hal-01774150}, 2020.

\bibitem{BCJ} Barron E.N., Cardaliaguet P. and Jensen R. Conditional essential suprema with applications. Applied Mathematics and Optimization, 48, 229-253, 2003.

\bibitem{BS} Bayraktar E. and Sayit H. No arbitrage conditions for simple trading strategies. Annals of Finance, 6, 1, 147-156, 2010.

\bibitem{BSV} Bender C, Sottinen  T and Valkeila E. Pricing by hedging and no-arbitrage beyond semi-martingale. Finance and Stochastics, 12, 4, 441-468, 2008.

\bibitem{CL} Carassus L. and L\'epinette E.  Pricing without no-arbitrage condition in discrete-time.  Journal of Mathematical Analysis and Applications, 505, 1, 125441, 2021. 

\bibitem{Ch} Cheredito P. Arbitrage in fractional Brownian motion models. Finance and Stochastics, 7, 4, 533-553, 2003.
 
\bibitem{DMW}
Dalang, E.C., Morton, A. and Willinger, W., Equivalent martingale measures and no-arbitrage in stochastic securities market models. Stochastics and Stochastic Reports,  29, 185-201, 1990.

\bibitem{DScha1}
Delbaen, F. and Schachermayer, W., A general version of the fundamental theorem
of asset pricing. Mathematische Annalen, 300, 463-520, 1994.

\bibitem{EL} El Mansour M. and L\'epinette E. Conditional interior and conditional closure of a random sets. Journal of Optimization Theory and Applications, 187, 356-369, 2020.

\bibitem{FKV} Filipovic D, Kupper M. and Vogelpoth N. Separation and duality in locally $L^0$ convex modules. Journal of Functional Analysis,   256,  12,  3996-4029, 2009.

\bibitem{Gua1} Guasoni P. No arbitrage with transaction costs, with fractional Brownian motion and beyond. Mathematical Finance, 16, 2, 469-588, 2006.

\bibitem{JS} Jacod J. and Shiryaev A.N. Limit theorems for stochastic processes. Grundlehren der mathematischen Wissenschaften 288, a series of comprehensive studies in mathematics.  Springer-Verlag  Berlin, Heidelberg, 2003.

\bibitem{JPS} Jarrow R.,  Protter P. and  Sayit H. No arbitrage without semimartingales. Annals of Applied Probability, 19, 2, 596-616, 2009.

\bibitem{KSt}
Kabanov, Y. and Stricker, C. A Teachers' note on no-arbitrage criteria. In
S\'eminaire de Probabilit\'es, XXXV, volume 1755 of Lecture Notes in Math., Springer Berlin, 149-152, 2001.

\bibitem{KS}
Kabanov, Y. and Safarian, M., Markets with transaction costs. Mathematical Theory. Springer-Verlag, 2009.

\bibitem{KK} Karatzas I. and  Kardaras C. The num\'eraire portfolio in semimartingale financial models. Finance and Stochastics, 11, 447-493, 2007.

\bibitem{Kreps} Kreps D.M. Arbitrage and equilibrium in economies with infinitely many commodities. Journal of Mathematical Economics, 8, 15-35, 1981.

\bibitem{LM} L\'epinette E. and Molchanov I., Conditional cores and conditional convex hulls of random sets. Journal of Mathematical Analysis and Applications, 478, 2, 368-392, 2019.

\bibitem{Lo} Lo A.W. Long-term memory in stock market prices. Econometrica, 59, 5, 1279-1313, 1991.

\bibitem{Pakk} Pakkanen M. Stochastic integrals and conditional full support. Journal of Applied Probability, 47, 3, 650-667, 2010.

\bibitem{RW2009}	Rockafellar, R. T. and Wets R. J. B. Variational analysis, Grundlehren der mathematischen Wissenschaften, 317. Springer-Verlag Berlin Heidelberg, 1998.

\bibitem{Rogers} Rogers L.C.G. Arbitrage with fractional Brownian motion. Mathematical Finance, 7, 95-105, 1997.

\bibitem{Sayit} Sayit H. Absence of arbitrage in a general framework. Annals of Finance, 9, 611-624, 2013.

\bibitem{Sot} Sottinen T. Fractional Brownian motion, random walks and binary market models. Finance and Stochastics, 5, 3, 343-355, 2001.

%\end{thebibliography}

\end{document}